\documentclass[twocolumn,aps,nofootinbib,superscriptaddress]{revtex4}



\usepackage{dsfont}
\usepackage{amsmath}
\usepackage{amssymb}
\usepackage{amsthm}
\usepackage{graphicx}
\usepackage{mathrsfs}
\usepackage{units}

\newcommand{\comment}[1]{}

\newcommand{\abs}[1]{\ensuremath{|#1|}}

\newcommand{\norm}[2]{\ensuremath{|\!|#1|\!|_{#2}}}
\newcommand{\normbig}[2]{\ensuremath{\big|\!\big|#1\big|\!\big|_{#2}}}
\newcommand{\normBig}[2]{\ensuremath{\Big|\!\Big|#1\Big|\!\Big|_{#2}}}

\newcommand{\tr}{\textnormal{tr}}
\newcommand{\trace}[1]{\ensuremath{\tr (#1)}}
\newcommand{\Trace}[1]{\ensuremath{\tr \left( #1 \right)}}

\newcommand{\ptr}[1]{\textnormal{tr}_{\textnormal{\tiny #1}}}


\newcommand{\supp}[1]{\textnormal{supp}\, \{ #1 \}}
\newcommand{\rank}{\textnormal{rank}}
\newcommand{\GF}{\mathrm{GF}}
\newcommand{\polylog}{\textnormal{polylog}}

\newcommand{\idx}[2]{{#1}_{\textnormal{\tiny #2}}}

\newcommand{\ket}[1]{| #1 \rangle}
\newcommand{\keti}[2]{| #1 \rangle_{\textnormal{\tiny #2}}}

\newcommand{\bracketi}[4]{\langle #1 | #2 | #3 \rangle_{\textnormal{\tiny #4}}}
\newcommand{\proj}[2]{| #1 \rangle\!\langle #2 |}
\newcommand{\proji}[3]{| #1 \rangle\!\langle #2 |_{\textnormal{\tiny #3}}}


\newcommand{\dpure}[2]{P(#1, #2)}
\newcommand{\Dpure}[2]{P\left( #1, #2 \right)}

\newcommand{\genfid}[2]{\bar{F}(#1, #2)}
\newcommand{\duni}[3]{\Delta(#1 | #2)_{#3}}


\newcommand{\iso}{\cong}

\newcommand{\kron}{\otimes}
\newcommand{\eps}{\varepsilon}

\newcommand{\h}{\ensuremath{\mathcal{H}}}
\newcommand{\hi}[1]{\ensuremath{\mathcal{H}_{\textnormal{\tiny #1}}}}
\newcommand{\hA}{\hi{A}}
\newcommand{\hB}{\hi{B}}
\newcommand{\hC}{\hi{C}}
\newcommand{\hD}{\hi{D}}
\newcommand{\hE}{\hi{E}}

\newcommand{\hZ}{\hi{Z}}

\newcommand{\hAB}{\hi{AB}}
\newcommand{\hAC}{\hi{AC}}

\newcommand{\hAE}{\hi{AE}}
\newcommand{\hABC}{\hi{ABC}}

\newcommand{\id}{\ensuremath{\mathds{1}}}
\newcommand{\idi}[1]{\ensuremath{\mathds{1}_{\textnormal{\tiny #1}}}}
\newcommand{\idA}{\idi{A}}
\newcommand{\idB}{\idi{B}}

\newcommand{\idE}{\idi{E}}

\newcommand{\idZ}{\idi{Z}}

\newcommand{\idAB}{\idi{AB}}




\newcommand{\opidi}[1]{\ensuremath{\mathcal{I}_{\textnormal{\tiny #1}}}}

\newcommand{\opidE}{\opidi{E}}

\newcommand{\linops}[1]{\ensuremath{\mathcal{L}(#1)}}
\newcommand{\hermops}[1]{\ensuremath{\mathcal{L}^\dagger(#1)}}
\newcommand{\posops}[1]{\ensuremath{\mathcal{P}(#1)}}

\newcommand{\normstates}[1]{\ensuremath{\mathcal{S}_{=}(#1)}}
\newcommand{\subnormstates}[1]{\ensuremath{\mathcal{S}_{\leq}(#1)}}



\newcommand{\cF}{\mathcal{F}}

\newcommand{\cH}{\mathcal{H}}

\newcommand{\cX}{\mathcal{X}}

\newcommand{\cZ}{\mathcal{Z}}

\newcommand{\rhot}{\ensuremath{\tilde{\rho}}}
\newcommand{\rhob}{\ensuremath{\bar{\rho}}}

\newcommand{\rhoA}{\ensuremath{\idx{\rho}{A}}}
\newcommand{\rhoB}{\ensuremath{\idx{\rho}{B}}}
\newcommand{\rhoC}{\ensuremath{\idx{\rho}{C}}}
\newcommand{\rhoE}{\ensuremath{\idx{\rho}{E}}}
\newcommand{\rhoAB}{\ensuremath{\idx{\rho}{AB}}}
\newcommand{\rhoAC}{\ensuremath{\idx{\rho}{AC}}}

\newcommand{\rhoXE}{\ensuremath{\idx{\rho}{XE}}}

\newcommand{\rhoXB}{\ensuremath{\idx{\rho}{XB}}}

\newcommand{\rhotAB}{\ensuremath{\idx{\rhot}{AB}}}
\newcommand{\rhobAB}{\ensuremath{\idx{\rhob}{AB}}}

\newcommand{\rhotAC}{\ensuremath{\idx{\rhot}{AC}}}

\newcommand{\rhotB}{\ensuremath{\idx{\rhot}{B}}}
\newcommand{\rhobB}{\ensuremath{\idx{\rhob}{B}}}
\newcommand{\rhobE}{\ensuremath{\idx{\rhob}{E}}}

\newcommand{\rhobXB}{\ensuremath{\idx{\rhob}{XB}}}

\newcommand{\rhoABC}{\ensuremath{\idx{\rho}{ABC}}}

\newcommand{\rhoABE}{\ensuremath{\idx{\rho}{ABE}}}

\newcommand{\rhotABC}{\ensuremath{\idx{\rhot}{ABC}}}

\newcommand{\sigmat}{\ensuremath{\tilde{\sigma}}}

\newcommand{\sigmaB}{\ensuremath{\idx{\sigma}{B}}}

\newcommand{\sigmaE}{\ensuremath{\idx{\sigma}{E}}}

\newcommand{\taut}{\ensuremath{\tilde{\tau}}}
\newcommand{\taub}{\ensuremath{\bar{\tau}}}

\newcommand{\tauB}{\ensuremath{\idx{\tau}{B}}}

\newcommand{\tauE}{\ensuremath{\idx{\tau}{E}}}
\newcommand{\tauAB}{\ensuremath{\idx{\tau}{AB}}}
\newcommand{\tauAC}{\ensuremath{\idx{\tau}{AC}}}

\newcommand{\tauABC}{\ensuremath{\idx{\tau}{ABC}}}

\newcommand{\omegaA}{\ensuremath{\idx{\omega}{A}}}

\newcommand{\omegaZ}{\ensuremath{\idx{\omega}{Z}}}

\newcommand{\phit}{\ensuremath{\tilde{\phi}}}

\newcommand{\phiAB}{\ensuremath{\idx{\phi}{AB}}}

\newcommand{\muB}{\ensuremath{\idx{\mu}{B}}}

\newcommand{\DeltaB}{\ensuremath{\idx{\Delta}{B}}}

\newcommand{\PiA}{\ensuremath{\idx{\Pi}{A}}}
\newcommand{\PiB}{\ensuremath{\idx{\Pi}{B}}}

\newcommand{\PiAC}{\ensuremath{\idx{\Pi}{AC}}}

\newcommand{\GammaB}{\ensuremath{\idx{\Gamma}{B}}}

\newcommand{\hh}[4]{\ensuremath{H_{#1}^{#2}(\textnormal{#3})_{#4}}}
\newcommand{\chh}[5]{\ensuremath{H_{#1}^{#2}(\textnormal{#3}|\textnormal{#4})_{#5}}}

\newcommand{\chhalt}[5]{\ensuremath{\widehat{H}_{#1}^{#2}(\textnormal{#3}|\textnormal{#4})_{#5}}}

\newcommand{\chhrel}[5]{\ensuremath{h_{#1}^{#2}(\textnormal{#3}|\textnormal{#4})_{#5}}}

\newcommand{\hmin}[2]{\hh{\textnormal{min}}{}{#1}{#2}} 

\newcommand{\chmin}[3]{\chh{\textnormal{min}}{}{#1}{#2}{#3}}
\newcommand{\chmineps}[3]{\chh{\textnormal{min}}{\eps}{#1}{#2}{#3}}
\newcommand{\chmineeps}[4]{\chh{\textnormal{min}}{#1}{#2}{#3}{#4}}
\newcommand{\chminalt}[3]{\chhalt{\textnormal{min}}{}{#1}{#2}{#3}} 
\newcommand{\chminalteps}[3]{\chhalt{\textnormal{min}}{\eps}{#1}{#2}{#3}}
\newcommand{\chminalteeps}[4]{\chhalt{\textnormal{min}}{#1}{#2}{#3}{#4}}
\newcommand{\chminrel}[3]{\chhrel{\textnormal{min}}{}{#1}{#2}{#3}}

\newcommand{\chmax}[3]{\chh{\textnormal{max}}{}{#1}{#2}{#3}}
\newcommand{\chmaxeps}[3]{\chh{\textnormal{max}}{\eps}{#1}{#2}{#3}}
\newcommand{\chmaxeeps}[4]{\chh{\textnormal{max}}{#1}{#2}{#3}{#4}}
\newcommand{\chmaxalt}[3]{\chhalt{\textnormal{max}}{}{#1}{#2}{#3}}  
\newcommand{\chmaxalteps}[3]{\chhalt{\textnormal{max}}{\eps}{#1}{#2}{#3}}
\newcommand{\chmaxalteeps}[4]{\chhalt{\textnormal{max}}{#1}{#2}{#3}{#4}}

\newcommand{\chvn}[3]{\chh{}{}{#1}{#2}{#3}}



\newcommand{\pcguess}[2]{p_{\textrm{guess}}(\textnormal{#1}|\textnormal{#2})}


\newcommand{\epsball}[1]{\ensuremath{\mathcal{B}^\eps(#1)}}
\newcommand{\epsballpure}[1]{\ensuremath{\mathcal{B}_\textnormal{p}^\eps(#1)}}

\newcommand{\eepsball}[2]{\ensuremath{\mathcal{B}^{#1}(#2)}}


\theoremstyle{plain}
\newtheorem{lemma}{Lemma}
\newtheorem{theorem}[lemma]{Theorem}
\newtheorem{corollary}[lemma]{Corollary}

\theoremstyle{definition}
\newtheorem{definition}{Definition}

\begin{document}

\title{Leftover Hashing Against Quantum Side Information}

\date{\today}

\author{Marco \surname{Tomamichel}}
\email[]{marcoto@phys.ethz.ch}
\affiliation{Institute for Theoretical Physics, ETH Zurich, 8093
 Zurich, Switzerland.}
\author{Christian \surname{Schaffner}}
\email[]{c.schaffner@cwi.nl}
\affiliation{Centrum Wiskunde \& Informatica (CWI), Amsterdam, 
The Netherlands.}
\author{Adam \surname{Smith}}
\email[]{asmith@cse.psu.edu}
\affiliation{Pennsylvania State University,
University Park, PA 16802, USA.}
\author{Renato \surname{Renner}}
\email[]{renner@phys.ethz.ch}
\affiliation{Institute for Theoretical Physics, ETH Zurich, 8093
 Zurich, Switzerland.}

\begin{abstract}
  The Leftover Hash Lemma states that the output of a two-universal
  hash function applied to an input with sufficiently high entropy is
  almost uniformly random. In its standard formulation, the lemma
  refers to a notion of randomness that is (usually implicitly)
  defined with respect to classical side information. Here, we prove a
  (strictly) more general version of the Leftover Hash Lemma that is
  valid even if side information is represented by the state of a
  quantum system. Furthermore, our result applies to arbitrary
  $\delta$-almost two-universal families of hash functions. The
  generalized Leftover Hash Lemma has applications in cryptography,
  e.g., for key agreement in the presence of an adversary who is not
  restricted to classical information processing.
\end{abstract}

\maketitle

\section{Introduction}
\label{sec:intro}

We will first consider the task of extracting uniform randomness from
a random variable and introduce the Leftover Hash Lemma.  Following
its discussion, we extend the scenario to include side information
that is potentially stored in a quantum state.

\subsection{Randomness Extraction}

Consider a random variable $X$ that is partially known to an agent,
i.e., the agent possesses side information $E$ correlated to $X$. One
may ask whether it is possible to extract from $X$ a part $Z$ that is
completely unknown to the agent, i.e., uniform conditioned on $E$. If
yes, what is the maximum size of $Z$? And how is $Z$ computed?

The \emph{Leftover Hash Lemma} answers these questions. It states that
extraction of uniform randomness $Z$ is possible whenever the agent's
uncertainty about $X$ is sufficiently large. More precisely, the
number $\ell$ of extractable bits is approximately equal to the
\emph{min-entropy of $X$ conditioned on $E$}, denoted $H_{\min}(X|E)$
(see Section~\ref{sec:intro/side} for a definition and properties).
Furthermore, $Z$ can be computed as the output of a function $f$
selected at random from a suitably chosen family of functions $\cF$,
called \emph{two-universal family of hash functions} (see
Section~\ref{sec:intro/two} for a definition).  Remarkably, the family
$\cF$ can be chosen without knowing the actual probability
distribution of $X$ and only depends on the alphabet $\cX$ of $X$ and
the number of bits $\ell$ to be extracted.

\begin{lemma}[Classical Leftover Hash Lemma] \label{lemma:class} Let $X$
  and $E$ be random variables and let $\cF$ be a two-universal family
  of hash functions with domain $\cX$ and range $\{0,1\}^\ell$. Then, on
  average over the choices of $f$ from $\cF$, the distribution of the
  output $Z := f(X)$ is $\Delta$-close from uniform conditioned on
  $E$\footnote{The distance from uniform $\Delta$ measures the statistical distance
    of the probability distribution of $X$ given E to a uniform distribution. See 
    Section~\ref{sec:proof} for a formal definition.},
  where
  \begin{align*}
    \Delta = \frac{1}{2} \sqrt{2^{\ell - H_{\min}(X|E)}} \ .
  \end{align*}
\end{lemma}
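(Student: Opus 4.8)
The plan is to run the textbook two-step argument behind the Leftover Hash Lemma: first bound the statistical ($L_1$) distance from uniform by a collision ($L_2$) quantity via Cauchy--Schwarz, and then control that collision quantity using two-universality of $\cF$. Throughout, let $F$ denote the random variable describing the uniform choice of $f \in \cF$; it is independent of $(X,E)$, and we set $Z := f(X)$. Since conditioning on $F$ leaves $E$ unchanged, the averaged distance in the statement can be written as
\begin{align*}
  \Delta = \mathbb{E}_{e,f}\!\left[\, \frac{1}{2} \sum_{z \in \{0,1\}^\ell} \big| P_{Z | E = e, F = f}(z) - 2^{-\ell} \big| \,\right],
\end{align*}
where the expectation runs over $e \sim P_E$ and $f \sim P_F$.

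First I would bound the inner statistical distance for fixed $(e,f)$. As the range has $2^\ell$ elements, Cauchy--Schwarz together with the identity $\sum_z (P(z) - 2^{-\ell})^2 = \sum_z P(z)^2 - 2^{-\ell}$ yields
\begin{align*}
  \frac{1}{2} \sum_z \big| P_{Z|E=e,F=f}(z) - 2^{-\ell}\big|
  \le \frac{1}{2} \sqrt{2^\ell}\, \sqrt{\, p_{\mathrm{coll}}(Z|E=e,F=f) - 2^{-\ell}\, }\, ,
\end{align*}
where $p_{\mathrm{coll}}(Z|E=e,F=f) := \sum_z P_{Z|E=e,F=f}(z)^2$. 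Taking the expectation over $(e,f)$ and using concavity of the square root gives
\begin{align*}
  \Delta \le \frac{1}{2} \sqrt{2^\ell}\, \sqrt{\, \mathbb{E}_{e,f}\big[\, p_{\mathrm{coll}}(Z|E=e,F=f)\, \big] - 2^{-\ell}\, }\, .
\end{align*}

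Next I would evaluate the averaged collision probability. Expanding it as a collision event between two i.i.d.\ copies $X,X'$ of $X$ drawn from $P_{X|E=e}$,
\begin{align*}
  p_{\mathrm{coll}}(Z|E=e,F=f) = \sum_{x,x'} P_{X|E=e}(x)\, P_{X|E=e}(x')\, \mathds{1}\{f(x)=f(x')\}\, ,
\end{align*}
I would split into the diagonal ($x=x'$) and off-diagonal ($x\neq x'$) parts and average over $f \sim P_F$. The diagonal contributes $\sum_x P_{X|E=e}(x)^2 = p_{\mathrm{coll}}(X|E=e)$ for every $f$, whereas two-universality gives $\Pr_f[f(x)=f(x')] \le 2^{-\ell}$ whenever $x\neq x'$, so the off-diagonal part contributes at most $2^{-\ell}\sum_{x\neq x'} P_{X|E=e}(x)P_{X|E=e}(x') \le 2^{-\ell}$. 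Hence $\mathbb{E}_f[p_{\mathrm{coll}}(Z|E=e,F=f)] \le p_{\mathrm{coll}}(X|E=e) + 2^{-\ell}$. Finally, $p_{\mathrm{coll}}(X|E=e) = \sum_x P_{X|E=e}(x)^2 \le \max_x P_{X|E=e}(x)$, so averaging over $e \sim P_E$ yields $\mathbb{E}_e[p_{\mathrm{coll}}(X|E=e)] \le \sum_e P_E(e)\max_x P_{X|E=e}(x) = 2^{-H_{\min}(X|E)}$, the last sum being the probability of correctly guessing $X$ from $E$. Substituting into the bound of the previous paragraph gives $\Delta \le \frac{1}{2}\sqrt{2^\ell \cdot 2^{-H_{\min}(X|E)}} = \frac{1}{2}\sqrt{2^{\ell - H_{\min}(X|E)}}$, which is the claim.

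The argument is short, and its only genuinely essential ingredient is the two-universality bound entering the collision estimate; the main ``obstacle'' is simply to keep track of the losses, since Cauchy--Schwarz and Jensen are the only inequalities invoked and both are tight for flat distributions (which is why the resulting bound is essentially optimal). The quantum generalization pursued in the rest of the paper retains this skeleton but replaces $p_{\mathrm{coll}}$ by an operator-valued quantity (a conditional $2$-norm measured against a reference state $\sigma_E$), and the step that then requires real work --- and has no classical counterpart here --- is finding a noncommutative substitute for the elementary $L_1 \le \sqrt{2^\ell}\,L_2$ inequality.
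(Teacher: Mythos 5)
Your proof is correct, and it is exactly the classical specialization of the argument the paper uses for its quantum generalization: your Cauchy--Schwarz step is Lemma~\ref{lemma:uniform-to-collision} (there proved via H\"older), your collision-probability estimate under two-universality is Lemma~\ref{lemma:privacy-amp}, and your bound $\sum_e P_E(e)\max_x P_{X|E=e}(x)=2^{-H_{\min}(X|E)}$ is the guessing-probability identity~\eqref{eqn:guessing-prob} playing the role of Lemma~\ref{lemma:coll-bound}. No gaps; nothing further to add.
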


The lemma immediately implies that for a \emph{fixed} joint
distribution of $X$ and $E$, there is a \emph{fixed} function $f$ that
extracts almost uniform randomness. More precisely, given any $\Delta
> 0$, there exists a function $f$ that produces
\begin{align} \label{eq:lowerbound}
  \ell = \Big\lfloor H_{\min}(X|E) - 2\, \log \frac{1}{2\Delta} \Big\rfloor
\end{align}
bits that are $\Delta$-close to uniform and independent of
$E$.\footnote{We use $\log$ to denote the binary logarithm.}

The Leftover Hash Lemma plays an important role in a variety of
applications in computer science and cryptography (see, e.g.,
\cite{stinson02} for an overview). A prominent example is
\emph{privacy amplification}, i.e., the task of transforming a weakly
secret key (over which an adversary may have partial knowledge $E$),
into a highly secret key (that is uniform and independent of the
adversary's information $E$). It was in this context that the use of
two-universal hashing for randomness distillation has first been
proposed~\cite{bennett88}. Originally, the analysis was however
restricted to situations where $X$ is uniform and $E$ is bounded in
size.  Later, versions of the Leftover Hash Lemma similar to
Lemma~\ref{lemma:class} above have been proved independently
in~\cite{impagliazzo89} and~\cite{bennett95}.  The term \emph{leftover
  hashing} was coined in~\cite{zuckerman89}, where its use for recycling
the randomness in randomized algorithms and for the construction of
pseudo-random number generators is discussed (see
also~\cite{impagliazzo89,hastad99}).

\subsection{Quantum Side Information}
\label{sec:intro/side}

A majority of the original work on universal hashing is based entirely
on probability theory and side information is therefore (often
implicitly) assumed to be represented by a \emph{classical} system $E$
(modeled as a random variable).\footnote{If the side information $E$
  is classical, the Leftover Hash Lemma can be formulated without the
  need to introduce $E$ explicitly (see, e.g.,
  \cite{impagliazzo89}). Instead, one may simply interpret all probability
  distributions as being conditioned on a fixed value of the side
  information.}  In fact, since hashing is an entirely ``classical''
process (a simple mapping from a random variable $X$ to another random
variable $Z$), one may expect that the physical nature of the side
information is irrelevant and that a purely classical treatment is
sufficient. This is, however, not necessarily the case. It has been
shown, for instance, that the output of certain extractor functions
may be partially known if side information about their input is stored
in a \emph{quantum} device of a certain size, while the same output is
almost uniform conditioned on any side information stored in a
\emph{classical} system of the same size (see~\cite{gavinsky07} for a
concrete example and~\cite{koenig07} for a more general
discussion).\footnote{Note that there is no sensible notion of a
  conditional probability distribution where the conditioning is on
  the state of a \emph{quantum} (as opposed to a \emph{classical})
  system.  An implicit treatment of side information $E$, where one
  considers all probability distributions to be conditioned on a
  specific value of $E$, as explained in the previous footnote, is
  therefore not possible in the general case. }

Here, we follow a line of research started
in~\cite{maurer05,renner05,rennerkoenig05} and study randomness
extraction in the presence of quantum side information $E$ (which, of
course, includes situations where $E$ is partially or fully
classical.) More specifically, our goal is to establish a generalized
version of Lemma~\ref{lemma:class} which holds if the system $E$ is
quantum-mechanical. For this, we first need to quickly review the
notion of \emph{min-entropy} as well as of the notion of
\emph{uniformity}, which need to be extended accordingly.

The definition of \emph{uniformity} in the context of quantum side
information $E$ is rather straightforward. Let $Z$ be a classical
random variable which takes any value $z \in \cZ$ with probability
$p_z$ and let $E$ be a quantum system whose state conditioned on $Z=z$
is given by a density operator $\rhoE^{[z]}$ on $\cH_E$. This
situation is compactly described by the \emph{classical-quantum} (CQ) state
\begin{align} \label{eq:cqstate}
  \idx{\rho}{ZE} := \sum_{z \in \cZ} p_z\, \proji{z}{z}{Z} \kron \rhoE^{[z]} \ ,
\end{align}
defined on the product space $\hZ \kron \hE$, where $\hZ$ is a Hilbert
space with orthonormal basis $\{\keti{z}{Z}\}_{z \in \cZ}$. We say
that \emph{$Z$ is uniform conditioned on $E$} if $\idx{\rho}{ZE}$ has
product form $\omegaZ \kron \rhoE$, where $\omegaZ := \idZ/
|\cZ|$ is the maximally mixed state on $\cH_Z$. More generally, we say
that $Z$ is \emph{$\Delta$-close to uniform conditioned on $E$} if
there exists a state $\sigmaE$ on $E$ for which the
trace distance between $\idx{\rho}{ZE}$ and $\omegaZ \kron
\sigmaE$ is at most $\Delta$ (see Section~\ref{sec:proof} for a
formal definition). The trace distance is a natural choice of
metric because it corresponds to the \emph{distinguishing
  advantage}.\footnote{Let $p_{\mathrm{succ}}$ be
  the maximum probability that a distinguisher, presented with a
  random choice of either the state $\rho$ or the state $\sigma$, can
  correctly guess which of the two he has seen. The
  \emph{distinguishing advantage} is then defined as the advantage
  compared to a random guess, which is given by $p_{\mathrm{succ}} -
  \frac{1}{2} = \frac{1}{4} \| \rho - \sigma\|_1$ (see e.g.~\cite{nielsen00})} Furthermore, in
the purely classical case, the trace distance reduces to the
statistical distance. 

Next, we generalize the notion of min-entropy to situations involving
quantum side information. Before we do this, note that the classical min-entropy
$\chmin{X}{E}{}$ has an operational interpretation as the guessing
probability of X given E, namely
\begin{align}
  \label{eqn:guessing-prob}
  \chmin{X}{E}{} &= -\log \pcguess{X}{E} \, .
\end{align}
Here, $p_{\textrm{guess}}(X|E)$ denotes the probability of correctly
guessing the value of $X$ using the optimal strategy with access to
$E$. The optimal strategy in the classical case is to guess, for each
value of $e$ of $E$, the $X$ with the highest conditional probability
$P_{X|E=e}$. The guessing probability is thus
\begin{align*}
  \pcguess{X}{E} = \sum_e P_E(e)\, \max_x P_{X|E=e}(x) \, . 
\end{align*}
A generalization of the min-entropy to situations where $E$ may be a
quantum system has first been proposed in~\cite{renner05} (see
Section~\ref{sec:def} for a formal definition). As shown
in~\cite{koenig08}, the operational
interpretation~\eqref{eqn:guessing-prob} naturally extends to this
more general case. In other words, the min-entropy, $\chmin{X}{E}{}$,
is a measure for the probability of guessing $X$ using an optimal
strategy with access to the quantum system $E$.

However, the actual requirement on the entropy measure used in
Lemma~\ref{lemma:class} is that it accurately characterizes the total
amount of randomness contained in $X$, i.e.\ the number of uniformly
random bits that can be extracted using an optimal extraction
strategy. As we will show below, $\chmin{X}{E}{}$ (or, more precisely,
a smooth version of it) meets this requirement.



For this purpose, let $\idx{\rho}{XE}$ be fixed and assume that $f$ is a function
that maps $X$ to a string $Z = f(X) \in \{0,1\}^\ell$ of length $\ell$ that
is uniform conditioned on the side information $E$. Then, obviously,
the probability of guessing $Z$ correctly given $E$ is equal to
$2^{-\ell}$ and, by virtue of~\eqref{eqn:guessing-prob}, we find that
\begin{align}
  \label{eq:Hminm}
  \chmin{Z}{E}{} = \ell \, .
\end{align}
Furthermore, the probability of guessing $Z = f(X)$ correctly cannot
be smaller than the probability of guessing $X$, correctly. This fact
can again be expressed in terms of min-entropies, 
\begin{align}
  \label{eq:Hminf}
  \chmin{Z}{E}{} \leq \chmin{X}{E}{}  \, ,
\end{align}
i.e., the min-entropy can only decrease under the action of a
function. Combining~\eqref{eq:Hminm} and~\eqref{eq:Hminf} immediately
yields
\begin{align} \label{eq:Hmintight}
  \ell \leq \chmin{X}{E}{} \, .
\end{align}
We conclude that the number $\ell$ of uniform bits (relative to
$E$) that can be extracted from data $X$ is upper
bounded by the min-entropy of $X$ conditioned on $E$. This result may
be seen as a converse of~\eqref{eq:lowerbound}.

So far, the claim~\eqref{eq:Hmintight} is restricted to the extraction
of \emph{perfectly uniform} randomness. In order to extend this
concept to the more general case of approximately uniform randomness,
we need to introduce the notion of \emph{smooth} min-entropy. Roughly
speaking, for any $\eps \geq 0$, the \emph{$\eps$-smooth min-entropy
  of $X$ given $E$}, denoted $\chmineps{X}{E}{}$, is defined as the
maximum value of $\chmin{X}{E}{}$ evaluated for all density operators
$\rhot$ that are $\eps$-close to $\rho$ (see
Section~\ref{sec:smoothentropies} for a formal definition).

The above argument leading to~\eqref{eq:Hmintight} can be generalized
in a straightforward manner to smooth min-entropy, and results in the
bound
\begin{align*}
  \ell \leq \chmineeps{2\sqrt{\Delta}}{X}{E}{} 
\end{align*}
for the maximum number $\ell$ of extractable bits that are $\Delta$-close
to uniform conditioned on $E$.  Crucially, our
extended version of the Leftover Hash Lemma implies that this bound
can be reached, up to additive terms of order $\log(1/\Delta)$ (see
Theorem~\ref{thm:two-universal-hashing} and
Theorem~\ref{thm:almost-two-universal-hashing}). We thus conclude that
the min-entropy of $X$ conditioned on $E$, in particular its
``smoothed'' version, is an accurate measure for the amount of uniform
randomness (conditioned on $E$) that can be extracted from $X$.

\subsection{Almost Two-Universal Hashing}
\label{sec:intro/two}

The notion of two-universal hashing has been introduced by Carter and
Wegman~\cite{carter79}.  A family $\cF$ of functions from $\cX$ to
$\cZ$ is said to be \emph{two-universal} if, for any pair of distinct
inputs $x$ and $x'$, and for $f$ chosen at random from $\cF$, the
probability of a \emph{collision} $f(x) = f(x')$ is not larger than
$\delta := 1/|\cZ|$. Note that this value for the collision
probability corresponds to the one obtained by choosing $\cF$ as the
family of \emph{all} functions with domain $\cX$ and range $\cZ$.

Later, the concept of two-universal hashing has been generalized to
arbitrary collision probabilities $\delta$~\cite{stinson94}.
Namely, a family of functions $\cF$ from
$\cX$ to $\cZ$ is called \emph{$\delta$-almost two-universal} if 
\begin{align}
  \label{eqn:almost-two-universal}
  \Pr_{f \in \cF} \left[ f(x) = f(x') \right] \leq \delta
\end{align}
for any $x \neq x'$. A two-universal family as above simply
corresponds to the special case $\delta = 1/|\cZ|$.

The classical Leftover Hash Lemma (Lemma~\ref{lemma:class}) can be
generalized to $\delta$-almost two-universal hash
functions~\cite{stinson02}. More precisely, when extracting an
$\ell$-bit string from data $X$, its distance from uniform conditioned
on $E$ is bounded by $\Delta = \frac{1}{2} \sqrt{(2^{\ell}\delta - 1)
  + 2^{\ell -H_{\min}(X|E)}}$.

\subsection{Main result}

Our main result is a generalization of the Leftover Hash Lemma for
$\delta$-almost two-universal families of hash functions which is
valid in the presence of quantum side information. While the statement
is new for general $\delta$-almost two-universal hash functions, the
special case where $\delta = 2^{-\ell}$ has been proved previously by
one of us~\cite{renner05}.

\begin{lemma}[General Leftover Hash Lemma]
  \label{lemma:quant} Let $X$
  be a random variable, let $E$ be a quantum system, and let $\cF$ be
  a $\delta$-almost two-universal family of hash functions from
  $\cX$ to $\{0,1\}^\ell$. Then, on average over the choices of $f$ from
  $\cF$, the output $Z := f(X)$ is $\Delta$-close to uniform
  conditioned on $E$, where
  \begin{align*} \Delta = \inf_{\eps > 0} \
    \frac{1}{2} \sqrt{(2^\ell \delta - 1) + 2^{\ell - \chmin{X}{E}{} +
        \log(2/\eps^2 + 1)}} + \eps \ .
  \end{align*}
  Furthermore, if $\delta \leq 2^{-\ell}$, i.e., if $\cF$ is
  two-universal, then
  \begin{align} \label{eq:epsperfect}
    \Delta = \frac{1}{2} \sqrt{2^{\ell - \chmin{X}{E}{}}} \, .
  \end{align}
\end{lemma}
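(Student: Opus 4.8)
The plan is to mirror the classical proof of the Leftover Hash Lemma: bound the distance from uniform by an $L_2$-type quantity and then use the (almost) two-universality of $\cF$ to control its average over $f$. For a fixed $f$ write $\rhoZE^f := \sum_{z} \proji{z}{z}{Z} \kron \big(\sum_{x:\,f(x)=z} p_x\,\rhoE^{[x]}\big)$ for the CQ state describing $Z=f(X)$ and $E$. It suffices to exhibit a \emph{single} density operator $\sigmaE$, depending on $\rhoXE$ but not on $f$, with $\mathbb{E}_{f}\big[\tfrac12\norm{\rhoZE^f - \omegaZ\kron\sigmaE}{1}\big] \le \Delta$; by convexity of the trace distance this is stronger than the corresponding bound for the $f$-averaged state. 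I would take $\sigmaE$ to be the optimiser in the min-entropy, i.e.\ the state with $\rhoXE \le 2^{-\chmin{X}{E}{}}\,\idX\kron\sigmaE$, and estimate the distance of $\rhoZE^f$ to the product state $\omegaZ\kron\rhoE$.

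The two tools are (i) the operator inequality $\norm{L}{1}\le\sqrt{\tr[\sigma]}\,\norm{\sigma^{-1/4}L\,\sigma^{-1/4}}{2}$, valid whenever $L$ is supported on the support of $\sigma$ (Hölder's inequality for Schatten norms together with a polar decomposition), applied with $\sigma=\omegaZ\kron\sigmaE$ so that $\sqrt{\tr[\sigma]}=1$; and (ii) Jensen's inequality $\mathbb{E}_{f}\norm{\cdot}{2}\le\sqrt{\mathbb{E}_{f}\norm{\cdot}{2}^2}$, moving the average over $f$ inside the square root. Expanding $\mathbb{E}_{f}\norm{\sigma^{-1/4}(\rhoZE^f-\omegaZ\kron\rhoE)\sigma^{-1/4}}{2}^2$ and using that $\rhoZE^f$ is block-diagonal on $\hZ$, the average over $f$ produces the collision probabilities $\Pr_{f}[f(x)=f(x')]$; splitting off the diagonal $x=x'$ (probability $1$) from the off-diagonal $x\neq x'$ (probability $\le\delta$ by \eqref{eqn:almost-two-universal}) and recombining the off-diagonal contribution with the cross terms of the expansion, one reaches a bound of the form
\[
  \mathbb{E}_{f}\norm{\sigma^{-1/4}(\rhoZE^f-\omegaZ\kron\rhoE)\sigma^{-1/4}}{2}^2 \;\le\; (2^{\ell}\delta-1)\,c \;+\; (1-\delta)\,2^{\ell-\chmin{X}{E}{}} ,
\]
where $c:=\tr[(\sigmaE^{-1/4}\rhoE\,\sigmaE^{-1/4})^2]$ is a ``relative purity'' of $\rhoE$ against $\sigmaE$, and the diagonal contribution is controlled by the semidefinite optimality of $\sigmaE$: from $p_x\rhoE^{[x]}\le 2^{-\chmin{X}{E}{}}\sigmaE$ one gets $\sigmaE^{-1/4}(p_x\rhoE^{[x]})\sigmaE^{-1/4}\le 2^{-\chmin{X}{E}{}}\sigmaE^{1/2}$, hence $\tr[(\sigmaE^{-1/4}(p_x\rhoE^{[x]})\sigmaE^{-1/4})^2]\le 2^{-\chmin{X}{E}{}}p_x$, which sums over $x$ to $2^{-\chmin{X}{E}{}}$.

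If $\delta\le 2^{-\ell}$ the coefficient $2^{\ell}\delta-1$ is non-positive, so the non-negative term $(2^{\ell}\delta-1)c$ can simply be dropped; together with $1-\delta\le 1$ this gives $\mathbb{E}_{f}\big[\tfrac12\norm{\rhoZE^f-\omegaZ\kron\rhoE}{1}\big]\le\tfrac12\sqrt{2^{\ell-\chmin{X}{E}{}}}$, which is exactly \eqref{eq:epsperfect}. For general $\delta$ the term $(2^{\ell}\delta-1)c$ survives, and $c$ is \emph{not} controlled by $\chmin{X}{E}{}$ alone (it may grow with $|\cX|$). I would remove this by a smoothing argument: replace $\rhoXE$ by a nearby, possibly sub-normalised state $\rhot$ with $\tfrac12\norm{\rhoXE-\rhot}{1}\le\eps$ — obtained by a suitable spectral truncation of $\rhoXE$ relative to $\sigmaE$ — whose relative purity is tamed, at the price of an additive $\eps$ for the substitution (triangle inequality for the trace distance, using that $X\mapsto f(X)$ does not increase it) and of an extra factor $2/\eps^2+1$ in the leading exponential. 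Re-running the previous steps for $\rhot$, and arranging the truncation so that $\chmin{X}{E}{\rhot}\ge\chmin{X}{E}{}$, one obtains $\tfrac12\sqrt{(2^{\ell}\delta-1)+2^{\ell-\chmin{X}{E}{}+\log(2/\eps^2+1)}}+\eps$ for every $\eps>0$; the infimum over $\eps$ is the asserted $\Delta$.

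The delicate point — the main obstacle — is this last step: constructing the right smoothed state $\rhot$ (and the auxiliary operator against which to measure its relative purity) and verifying the precise trade-off, namely that taming $c$ costs exactly $\log(2/\eps^2+1)$ in the exponent while the replacement error stays at $\eps$ and the min-entropy does not drop. The remaining ingredients — the Hölder-type $L_2$ inequality (with the customary care about pseudo-inverses and the support of $\sigmaE$), the expansion of the Hilbert--Schmidt norm, and the recombination of the collision terms — are routine but demand some bookkeeping with normalisations.
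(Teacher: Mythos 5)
Your overall architecture coincides with the paper's: the $L_1\!\to\!L_2$ step via H\"older (Lemma~\ref{lemma:uniform-to-collision}), Jensen over $f$, the splitting of the resulting collision sum into diagonal and off-diagonal parts using $\delta$-almost two-universality (Lemma~\ref{lemma:privacy-amp}), and the bound on the diagonal contribution by $2^{-\chmin{X}{E}{\rho}}$ from the operator inequality defining the min-entropy (the first half of Lemma~\ref{lemma:coll-bound}). For $\delta\leq 2^{-\ell}$ your argument is essentially the paper's proof of Theorem~\ref{thm:two-universal-hashing} at $\eps=0$: dropping the non-positive term $(2^{\ell}\delta-1)\,c$ is exactly how~\eqref{eq:epsperfect} is recovered, so that part is fine.

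The gap is in the general-$\delta$ case, and it is precisely the step you yourself flag as ``the main obstacle'': you assert, but do not construct, a smoothed state for which the problematic quantity is tamed at a cost of exactly $\log(2/\eps^2+1)$ in the exponent while the replacement error stays at $\eps$. This is the content of the paper's Lemma~\ref{lemma:coll-bound}, which the authors identify as one of the main technical contributions, and its proof occupies Lemma~\ref{lemma:min-bound-1} in Appendix~\ref{app:alt}. The paper also resolves the difficulty slightly differently from what you sketch: rather than keeping the min-entropy optimizer $\sigmaE$ and taming your relative purity $c=\tr\big((\sigmaE^{-1/4}\rhoE\,\sigmaE^{-1/4})^2\big)$, it conditions on the marginal $\rhoE$ itself (so the purity term is identically $\tr\,\rhoE\leq 1$ and the off-diagonal contribution is exactly $2^{\ell}\delta-1$), and the entire burden moves to relating the collision entropy conditioned on the state's own marginal, i.e.\ the alternative min-entropy $-D_{\textrm{max}}(\rhoAB\,\|\,\idA\kron\rhoB)$, to the optimized $\chmin{X}{E}{\rho}$. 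The construction there is not a simple spectral truncation of $\rhoXE$: one passes to a purification, cuts off the largest eigenvalues of $\GammaB=\rhoB^{-\nicefrac{1}{2}}\sigmaB\rhoB^{-\nicefrac{1}{2}}$ by a projector applied on the purifying side (the dual-projector trick of Corollary~\ref{cor:mirror-project}), bounds the surviving operator norm by $2/\eps^2$ via a variational argument, and then adds back $\idA/\idx{d}{A}\kron(\rhoB-\rhotB)$ to restore the marginal, which is where the extra $+1$ in $2/\eps^2+1$ originates. Moreover, the min-entropy of the smoothed state is not preserved, as you hope to arrange; it is only lower-bounded by $\chmin{X}{E}{\rho}-\log(2/\eps^2+1)$, and that loss is exactly the penalty appearing in the final bound. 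Without this construction, or an equivalent one, the general-$\delta$ claim is not established.
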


Note that inserting $\delta = 2^{-\ell}$ into the first expression for
$\Delta$ yields a formula which is less tight
than~\eqref{eq:epsperfect}. The latter, therefore, requires a separate
proof. In the technical part below, the two claims are formulated more
generally for the \emph{smooth} min-entropy
(Theorem~\ref{thm:two-universal-hashing} and
Theorem~\ref{thm:almost-two-universal-hashing}).

\subsection{Applications and Related Work}

Quantum versions of the Leftover Hash Lemma~\cite{renner05} for
two-universal families of hash functions have been used in the context
of privacy amplification against a quantum
adversary~\cite{rennerkoenig05,koenig07}. This application has gained
prominence with the rise of quantum cryptography and quantum key
distribution in particular. There, the side information $E$ is
gathered during a key agreement process between two parties by an
eavesdropper who is not necessarily limited to classical information
processing. The quantum generalization of the Leftover Hash Lemma is
then used to bound the amount of secret key that can be distilled by
the two parties.

The restriction to two-universal families of hash functions leads to
the need for a random seed of length $\Theta(n)$, where $n$ is the
length in bits of the original partially secret string. This seed is
used to choose $f$ from a two-universal family $\mathcal{F}$. The main
result of this paper, Lemma~\ref{lemma:quant}, and a suitable
construction of a $\delta$-almost two-universal family of hash
functions (see Section~\ref{sec:constructions}) allow for a shorter
seed of length proportional to $\ell$, $\log \frac{n}{\ell}$ and $\log
\frac{1}{\Delta}$. The length of secret key that can be extracted with
this method is only reduced by a term proportional to $\log
\frac{1}{\Delta}$ compared to the extractor using two-universal
hashing. Furthermore, the generalized Leftover Hashing Lemma allows
for an extension of existing cryptographic security proofs to
$\delta$-almost two-universal families of hash functions and may lead
to a speed-up in practical implementations.\footnote{See,
  e.g.~\cite{assche06} and~\cite{lodewyck07}, where a practical
  implementation of privacy amplification is discussed in Section V.}

Recently, the problem of randomness extraction with quantum side
information has generated renewed interest.  It has been shown that
the classical technique~\cite{dodis05} of XORing a classical source
about which an adversary holds quantum information with a
$\delta$-biased mask results in a uniformly distributed
string~\cite{fehr08}\footnote{See also~\cite{dupuis07} for a
  generalization of this work to the fully quantum setting.}.

However, to achieve even shorter seed lengths, more advanced
techniques such as Trevisan's~\cite{trevisan01} extractor have been
studied in~\cite{ta-shma08,de09,portmann09}. In~\cite{de09}, it is
shown that a seed of length $O(\polylog\,n)$ is sufficient to generate
a key of length $\ell \approx \hmin{X}{} - \log \dim \hE$, where $\dim
\hE$ is a measure of the size of the adversary's quantum
memory. In~\cite{portmann09}, the result was extended to the formalism
of conditional min-entropies. They attain a key length of $\ell
\approx \chmineps{X}{E}{}$, which can be arbitrarily larger than
$\hmin{X}{} - \log \dim \hE$. Furthermore, as we show
in~\eqref{eq:Hmintight}, this key length is almost optimal.  Our
result may be useful to further improve the performance of these
extractors (see discussion in~\cite{portmann09}).

Furthermore, our result should be used instead of the classical
Leftover Hashing Lemma whenever randomness is extracted in a context
governed by the laws of quantum physics. For example, consider a
device that needs a seed that is random conditioned on its internal
state. In this case the use of the \emph{classical} Leftover Hashing
Lemma instead of its quantum version, Lemma~\ref{lemma:quant},
corresponds to the implicit and potentially unjustified assumption
that the device does not make use of quantum mechanics.

\subsection{Organization of the paper}

In Section~\ref{sec:smoothentropies}, we discuss various aspects of
the smooth entropy framework, which will be needed for our proof. We
then give the proof of our generalized Leftover Hash Lemma
(Lemma~\ref{lemma:quant}) in Section~\ref{sec:proof}.  More precisely,
we provide statements of the Leftover Hashing Lemma for two-universal
and $\delta$-almost two-universal hashing in terms of the smooth
min-entropy (Theorems~\ref{thm:two-universal}
and~\ref{thm:almost-two-universal}).  Finally, in
Section~\ref{sec:constructions}, we combine known constructions of
$\delta$-almost two-universal hash functions and discuss their use for
randomness extraction with shorter random seeds.
Appendix~\ref{app:alt} may be of independent interest because it
establishes a relation between the smooth min- and max-entropies (as
defined above and used in \cite{koenig08,tomamichel08,tomamichel09})
and certain related entropic quantities used in earlier work (e.g.,
in~\cite{renner05})

\section{Smooth Entropies}
\label{sec:smoothentropies}
\label{sec:def}

Let $\h$ be a finite-dimensional Hilbert space. We use $\linops{\h}$, $\hermops{\h}$
and $\posops{\h}$ to denote the set of linear, Hermitian and positive
semi-definite operators on $\h$, respectively. We
define the set of normalized quantum states by $\normstates{\h} := \{
\rho \in \posops{\h} : \tr\,\rho = 1 \}$ and the set of sub-normalized
states by $\subnormstates{\h} := \{ \rho \in \posops{\h} : 0 <
\tr\,\rho \leq 1 \}$. Given a pure state $\ket{\phi} \in \h$, we use
$\phi = \proj{\phi}{\phi}$ to denote the corresponding projector in
$\posops{\h}$. The inverse of a Hermitian operator is meant to be
taken on its support only (generalized inverse).
Given a bipartite Hilbert space $\hAB := \hA \kron \hB$ and a state
$\rhoAB \in \subnormstates{\hAB}$, we denote by $\rhoA$ and $\rhoB$
its marginals $\rhoA = \ptr{B}\,\rhoAB$ and $\rhoB = \ptr{A}\,\rhoAB$.

The \emph{trace distance} between states $\rho$ and $\tau$ is given by
$\frac{1}{2} \norm{\rho - \tau}{1} = \frac{1}{2} \tr\,\abs{\rho -
 \tau}$. We also employ the \emph{purified distance}
$P$ as a metric on $\subnormstates{\h}$~\cite{tomamichel09}. It is an
upper bound on the trace distance and defined in terms of the
\emph{generalized fidelity} $\bar{F}$ as
\begin{align*}
\dpure{\rho}{\tau} &:= \sqrt{1 - \genfid{\rho}{\tau}^2}\, , \quad
\textrm{where} \\
\genfid{\rho}{\tau} &:= \tr \abs{\sqrt{\rho}\sqrt{\tau}} + \sqrt{(1 -
  \tr\,\rho)(1 - \tr\,\tau)}\, .
\end{align*}
We will need that the purified distance is a monotone
under trace non-increasing completely positive maps (CPMs). Let
$\mathcal{E}$ be a trace non-increasing CPM, then~\cite{tomamichel09}
\begin{align}
 \label{eqn:purified-monotone}
 \dpure{\rho}{\tau} \geq \Dpure{\mathcal{E}(\rho)}{\mathcal{E}(\tau)}
 \, .
\end{align}
Note that the projections $\rho \mapsto \Pi \rho \Pi$ for any projector
$\Pi$ is a trace non-increasing CPM. We define the \emph{$\eps$-ball} of
states close to $\rho \in \subnormstates{\h}$ as
\begin{align*}
 \epsball{\rho} := \{ \rhot \in \subnormstates{\h} :
 \dpure{\rho}{\rhot} \leq \eps \} \, .
\end{align*}

We will now define the smooth min-entropy~\cite{renner05}.
\begin{definition}
  \label{def:min}
  Let $\eps \geq 0$ and $\rhoAB \in \subnormstates{\hAB}$. The
  \emph{min-entropy of A conditioned on B} is given by
  \begin{align*}
    \chmin{A}{B}{\rho} &:= \!\!\!\max_{\sigmaB \in \normstates{\hB}} \!\!\!
    \sup\, \{ \lambda \in \mathbb{R} : \rhoAB \leq 2^{-\lambda} \idA
    \kron \sigmaB \} \, .
  \end{align*}
  Furthermore, the \emph{smooth min-entropy of A conditioned on B} is defined as
  \begin{align*}
    \chmineps{A}{B}{\rho} &:= \max_{\rhotAB \in
      \epsball{\rhoAB}} \chmin{A}{B}{\rhot} \, .
  \end{align*}
\end{definition}
The conditional min-entropy is a measure of the uncertainty about the
state of a system A given quantum side information B. In particular,
if the system A describes a classical random variable (i.e.\ if the
state is CQ), the min-entropy can be interpreted as a guessing
probability.\footnote{See discussion in Section~\ref{sec:intro}
  and~\cite{koenig08} for details.}
For general quantum states, the smooth min-entropy satisfies
data-processing inequalities. For example, if a CPM is applied to the
B system or if a measurement is conducted on the A system, the smooth
min-entropy of A given B is guaranteed not to
decrease.\footnote{See~\cite{tomamichel09} for precise statements and
  proofs.}

Finally, we will need a fully quantum generalization of the collision
entropy (R\'enyi-entropy of order $2$).
\begin{definition}
  \label{def:coll}
  Let $\rhoAB \in \subnormstates{\hAB}$ and $\sigmaB \in
  \posops{\hB}$, then the \emph{collision entropy of A conditioned on B}
  of a state $\rhoAB$ given $\sigmaB$ is $-\log \idx{\Gamma}{C}(\rhoAB | \sigmaB)$, where
  \begin{align*}
    \idx{\Gamma}{C}(\rhoAB | \sigmaB) &:= \tr\, \big( \rhoAB
    (\idA \kron \sigmaB^{-\nicefrac{1}{2}}) \big)^2 \, .
  \end{align*}
\end{definition}
We will use the fact that the collision entropy provides an upper
bound on the min-entropy. The proof of the following statement can be
found in Appendix~\ref{app:coll} and constitutes one of the main
technical contributions of this work.

\begin{lemma}
\label{lemma:coll-bound}
Let $\rhoXB \in \subnormstates{\hi{XB}}$ be a CQ-state and $\bar{\eps} >
0$. Then, there exists a state $\sigmaB \in \normstates{\hB}$ such that
\begin{align}
  \label{eqn:min-coll-bound}
  \idx{\Gamma}{C}(\rhoXB | \sigmaB) \leq 2^{-
  \chmin{X}{B}{\rho} } \, .
\end{align}
Moreover, there exists a normalized CQ-state $\rhobXB \in \eepsball{
  \bar{\eps}}{\rhoXB}$ such that
\begin{align}
  \label{eqn:minalt-coll-bound}
  \idx{\Gamma}{C}(\rhobXB | \rhobB) \leq 
  2^{-\chmin{X}{B}{\rho} + \log (\frac{2}{\bar{\eps}^2} + 1)} \, .
\end{align}
\end{lemma}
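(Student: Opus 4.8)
The plan is to exploit the semidefinite-program structure underlying the min-entropy. Recall that $\chmin{X}{B}{\rho}$ is, by Definition~\ref{def:min}, the logarithm of the largest $\mu$ such that $\rhoXB \leq \mu^{-1}\,\idX \kron \sigmaB$ for some normalized $\sigmaB$; equivalently, there exists $\sigmaB \in \normstates{\hB}$ with $\rhoXB \leq 2^{-\chmin{X}{B}{\rho}}\,\idX \kron \sigmaB$. Fix this optimal $\sigmaB$ and abbreviate $\mu := 2^{-\chmin{X}{B}{\rho}}$ and $K := \idX \kron \sigmaB^{-1/2}$, so that $\idx{\Gamma}{C}(\rhoXB|\sigmaB) = \tr\big((K\rhoXB K)^2\big) = \norm{K\rhoXB K}{2}^2$. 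The first claim~\eqref{eqn:min-coll-bound} should then follow from the operator inequality: since $\rhoXB \leq \mu\,\idX\kron\sigmaB$ and conjugation by the positive operator $K$ preserves the ordering, we get $K\rhoXB K \leq \mu\,\idX\kron\idB$ on the relevant support, i.e.\ $K\rhoXB K$ has all eigenvalues at most $\mu$. Combined with $\tr(K\rhoXB K) = \tr(\rhoXB K^2) \leq \mu\,\tr(\idX\kron\idB \cdot \text{(projector onto support)})$... — more carefully, I would write $\tr\big((K\rhoXB K)^2\big) \leq \mu\,\tr(K\rhoXB K)$ using the eigenvalue bound, and then bound $\tr(K\rhoXB K) = \tr(\rhoXB\,\idX\kron\sigmaB^{-1})$, which for a CQ-state $\rhoXB = \sum_x \proj{x}{x}\kron\rhoB^{[x]}$ equals $\sum_x \tr(\rhoB^{[x]}\sigmaB^{-1})$; using again $\rhoB^{[x]} \leq \rhoXB$'s block bound $\leq \mu\,\sigmaB$ gives $\tr(\rhoB^{[x]}\sigmaB^{-1}) \leq \mu\,\tr(P_x)$ where $P_x$ is the support projector of $\rhoB^{[x]}$; summing and using that the ranks are bounded by $\dim\hB$ won't directly give $\leq 1$, so the cleaner route is to note $\sum_x\rhoB^{[x]} = \rhoB \leq \mu\,\sigmaB$ hence $\tr(K\rhoXB K) = \tr(\rhoB\sigmaB^{-1}) \leq \mu\,\tr(\sigmaB\sigmaB^{-1}) = \mu\,\rank(\sigmaB) \leq \mu\dim\hB$. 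This still has a dimension factor, so the genuinely correct argument must instead use that $\tr(K\rhoXB K) \le 1$ directly because... — here is the subtle point: it is \emph{not} true in general that $\tr(\rhoXB\,\idX\kron\sigmaB^{-1}) \leq 1$, so~\eqref{eqn:min-coll-bound} as stated requires the CQ structure more essentially, and I would prove it by choosing $\sigmaB$ not as the min-entropy optimizer but via the explicit pinching/Gram-matrix construction from~\cite{renner05}: set $\sigmaB \propto \sum_x (\rhoB^{[x]})^{?}$ tuned so that the two traces balance.

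Concretely, for~\eqref{eqn:min-coll-bound} I would follow the structure of the original privacy-amplification proof: use a dual/variational characterization, or directly verify that the operator $\sigmaB := \idx{\Gamma}{C}$-optimal choice satisfies the bound by a Cauchy–Schwarz / operator-monotonicity step, reducing everything to showing $\norm{K\rhoXB K}{2}^2 \leq \mu\norm{K\rhoXB K}{1}$ and $\norm{K\rhoXB K}{1} = \tr(K\rhoXB K) \le \mu^{-1}\cdot\mu = 1$ once $\sigmaB$ is chosen so that $\tr(\rhoB\sigmaB^{-1})$ is exactly controlled — this last identity $\tr(\rhoB\sigmaB^{-1}) \le \mu^{-1}$ being the defining property of an optimal dual solution to the min-entropy SDP. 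So step one: invoke SDP duality for the min-entropy to obtain $\sigmaB$ with both $\rhoXB \leq \mu\,\idX\kron\sigmaB$ and complementary slackness giving $\tr(\rhoXB(\idX\kron\sigmaB^{-1})) \leq \dim$-free bound; step two: chain the two norm inequalities to land at $\mu = 2^{-\chmin{X}{B}{\rho}}$.

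For the second, smoothed claim~\eqref{eqn:minalt-coll-bound}, the idea is that the optimal $\sigmaB$ from the first part may be "far" from the actual marginal $\rhobB$, and we pay for replacing $\sigmaB$ by $\rhobB$ in the collision entropy. I would truncate $\rhoXB$: let $\Pi$ be the projector onto the subspace where the "relative density" $\idX\kron\sigmaB^{-1/2}\,\rhoXB\,\idX\kron\sigmaB^{-1/2}$ (or the operator $\rhoXB$ relative to $\sigmaB$) is not too large — specifically, project away the part with eigenvalue exceeding some threshold $t\mu$ — and set $\rhobXB := \Pi\rhoXB\Pi / \tr(\Pi\rhoXB\Pi)$, which remains CQ since $\Pi$ can be taken block-diagonal in $X$. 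A Markov-type / Chebyshev argument bounds $1 - \tr(\Pi\rhoXB\Pi)$, hence via $\dpure{\cdot}{\cdot}$ and~\eqref{eqn:purified-monotone} controls $\dpure{\rhoXB}{\rhobXB}$; choosing the threshold $t = 2/\bar{\eps}^2$ (or thereabouts) makes this $\leq\bar{\eps}$. On the truncated state, the relative-density operator is bounded by $t\mu$, and a short computation shows $\rhobB \geq \text{const}\cdot\sigmaB$ on the relevant support — actually I expect the bound to come out as $\idx{\Gamma}{C}(\rhobXB|\rhobB) \leq (1+1/t)\cdot t \cdot$ something, collapsing to $2^{-\chmin{X}{B}{\rho}}(2/\bar\eps^2 + 1)$ after inserting the threshold; the "$+1$" inside the logarithm is exactly the artifact of renormalizing the truncated state and of the two contributions (bulk plus the boundary eigenvalue). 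The main obstacle I anticipate is precisely this second part: controlling $\idx{\Gamma}{C}(\rhobXB|\rhobB)$ rather than $\idx{\Gamma}{C}(\rhobXB|\sigmaB)$ requires comparing $\rhobB^{-1}$ with $\sigmaB^{-1}$, and operator inverses reverse inequalities only on common support, so I would need to be careful that the truncation keeps $\rhobB$ from having a smaller support than $\sigmaB$ (or handle that case separately), and to track the normalization factor $\tr(\Pi\rhoXB\Pi)^{-1} \le (1-\bar\eps^2)^{-1}$ through the quadratic expression without letting it blow up the stated constant. The rest — the Markov bound, the purified-distance estimate, the eigenvalue threshold bookkeeping — is routine.
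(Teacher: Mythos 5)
Both halves of your proposal contain genuine gaps.

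For the first claim, the obstacle you run into is an artifact of misparsing the definition of $\idx{\Gamma}{C}$. With $M := \idX\kron\sigmaB^{-\nicefrac{1}{2}}$ the paper's quantity is $\idx{\Gamma}{C}(\rhoXB|\sigmaB)=\trace{(\rhoXB M)^2}=\trace{\rhoXB\, M\rhoXB M}$, \emph{not} $\trace{(M\rhoXB M)^2}=\trace{\rhoXB M^2\rhoXB M^2}$, which is what you actually bound. For the correct quantity the direct argument you abandon closes immediately: taking $\sigmaB$ to be the min-entropy optimizer, $\rhoXB\le 2^{-\chmin{X}{B}{\rho}}\,\idX\kron\sigmaB$ gives $M\rhoXB M\le 2^{-\chmin{X}{B}{\rho}}\,\id$, hence $\idx{\Gamma}{C}(\rhoXB|\sigmaB)=\trace{\rhoXB\,(M\rhoXB M)}\le 2^{-\chmin{X}{B}{\rho}}\,\tr\,\rhoXB\le 2^{-\chmin{X}{B}{\rho}}$. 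No term $\trace{\rhoB\sigmaB^{-1}}$ appears, no CQ structure is needed (the paper's Lemma~\ref{lemma:min-coll-bound} is stated for arbitrary states), and the SDP-duality/complementary-slackness detour you sketch is both unnecessary and unsubstantiated.

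For the second claim, your truncation targets the wrong operator. You propose to project away the eigenvalues of $(\idX\kron\sigmaB^{-\nicefrac{1}{2}})\rhoXB(\idX\kron\sigmaB^{-\nicefrac{1}{2}})$ that exceed $t\cdot 2^{-\chmin{X}{B}{\rho}}$; but by the choice of $\sigmaB$ this operator is already bounded by $2^{-\chmin{X}{B}{\rho}}\,\id$, so for $t\ge 1$ the projection removes nothing and $\rhobXB$ is just $\rhoXB$ renormalized --- for which $\idx{\Gamma}{C}(\rhobXB|\rhobB)$ can violate the claimed bound (this failure is precisely why the $\log(2/\bar{\eps}^2+1)$ correction exists). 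The difficulty is not that $\rhoXB$ is large relative to $\sigmaB$, but that $\sigmaB$ can be large relative to $\rhoB$. Accordingly, the paper (Lemma~\ref{lemma:min-bound-1}) truncates the large eigenvalues of $\GammaB=\rhoB^{-\nicefrac{1}{2}}\sigmaB\rhoB^{-\nicefrac{1}{2}}$, implemented as a dual projector acting on a purification, and then --- this is exactly the step you flag as ``the main obstacle'' but do not resolve --- repairs the $B$-marginal by adding $\idA/\dim\hA\kron(\rhoB-\rhotB)$ so that $\rhobB=\rhoB$ holds exactly; only then does conditioning on the state's own marginal become controllable, via the bound $\norm{\PiB\GammaB\PiB}{\infty}\le 2/\bar{\eps}^2$, which itself requires a non-routine variational argument. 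The constant $2/\bar{\eps}^2+1$ is produced by these two steps, not by the renormalization bookkeeping you describe as routine.
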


\section{Proof of the Leftover Hash Lemma}
\label{sec:proof}

In this section we give bounds on the distance from uniform of the
quantum state after privacy amplification with two-universal and
$\delta$-almost two-universal hashing
(Theorems~\ref{thm:two-universal-hashing}
and~\ref{thm:almost-two-universal-hashing}). The proof
of Lemma~\ref{lemma:quant} then follows.

First, we extend the definition of the distance from uniform to
sub-normalized states for technical reasons.\footnote{Note that
  sub-normalized states have to be considered due to our
  definition of the smoothing of the min-entropy.}
\begin{definition}
  \label{def:dist-from-uniform}
  Let $\rhoAB \in \subnormstates{\hAE}$, then we define the \emph{distance
    from uniform of A conditioned on B} as
  \begin{align}
    \label{eqn:dist-from-uniform}
    \duni{A}{B}{\rho} := \min_{\sigmaB}\
    \frac{1}{2} \norm{\rhoAB - \omegaA \kron \sigmaB}{1} \, ,
  \end{align}
  where $\omegaA := \idA/\dim \hA$ and the minimum is taken over all $\sigmaB \in
  \posops{\hB}$ satisfying $\tr\,\sigmaB = \tr\,\rhoB$.
\end{definition}

As a first step, we bound the distance from uniform in terms of the
collision entropy. 
\begin{lemma}
  \label{lemma:uniform-to-collision}
  Let $\rhoAB \in \subnormstates{\hAB}$ and $\tauB \in
  \subnormstates{\hB}$ with $\supp{\tauB} \supseteq \supp{\rhoB}$, then
  \begin{align*}
   \duni{A}{B}{\rho} \leq \frac{1}{2}
    \sqrt{\idx{d}{A} \idx{\Gamma}{C}(\rhoAB | \tauB)  -
     \tr\big( \rhoB \tauB^{\!-\nicefrac{1}{2}}
      \!\rhoB \tauB^{\!-\nicefrac{1}{2}} \big) } \, .
  \end{align*}
\end{lemma}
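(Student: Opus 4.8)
The plan is to reduce the trace-distance bound to a Hilbert-Schmidt (2-norm) bound via Cauchy--Schwarz, then expand the 2-norm square and recognize the two terms as $\idx{d}{A}\idx{\Gamma}{C}(\rhoAB|\tauB)$ and $\tr(\rhoB\tauB^{-1/2}\rhoB\tauB^{-1/2})$. Concretely, fix $\sigmaB := \rhoB$ in Definition~\ref{def:dist-from-uniform} (this is a legal choice since $\tr\,\rhoB = \tr\,\rhoB$ and $\rhoB\in\posops{\hB}$), so that $\duni{A}{B}{\rho} \leq \frac{1}{2}\norm{\rhoAB - \omegaA\kron\rhoB}{1}$. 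The first step is to pull out the reference operator $\tauB$: write $\rhoAB - \omegaA\kron\rhoB = (\idA\kron\tauB^{1/4})\,M\,(\idA\kron\tauB^{1/4})$ where $M := (\idA\kron\tauB^{-1/4})(\rhoAB - \omegaA\kron\rhoB)(\idA\kron\tauB^{-1/4})$; the support condition $\supp{\tauB}\supseteq\supp{\rhoB}$ guarantees the generalized inverses act trivially on the relevant subspace so nothing is lost. Then the matrix Cauchy--Schwarz inequality $\norm{\tr_B(A_1 X A_2)}{1}^2$-type estimate — more precisely, $\norm{N}{1} = \norm{(\idA\kron\tauB^{1/4})M(\idA\kron\tauB^{1/4})}{1} \leq \norm{\idA\kron\tauB^{1/2}}{?}$... — I would instead use the standard bound $\norm{\Theta}{1}^2 \leq \tr(\Theta\,(\idA\kron\tauB^{-1/2})\,\Theta\,(\idA\kron\tauB^{1/2}))\cdot\tr(\cdots)$; cleaner: apply $\norm{\Theta}{1} \leq \sqrt{\rank}\,\norm{\Theta}{2}$ in a weighted inner product, i.e. $\norm{\rhoAB-\omegaA\kron\rhoB}{1} \leq \sqrt{\idx{d}{A}}\,\sqrt{\tr\big((\rhoAB-\omegaA\kron\rhoB)(\idA\kron\tauB^{-1/2})(\rhoAB-\omegaA\kron\rhoB)(\idA\kron\tauB^{-1/2})\big)}$. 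This weighted Cauchy--Schwarz (valid because $\idA\kron\tauB^{-1/2}$ is positive on the support and the operator $\omegaA\kron\rhoB$ has $A$-rank $\idx{d}{A}$ times the relevant $B$-rank, but one can just bound by $\idx{d}{A}\cdot(\text{rank of }\tauB)$ absorbed correctly — the factor that survives is exactly $\idx{d}{A}$ since $\omegaA = \idA/\idx{d}{A}$) is the one nontrivial estimate.

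The second step is to expand the weighted square:
\begin{align*}
  \tr\big((\rhoAB-\omegaA\kron\rhoB)(\idA\kron\tauB^{-\nicefrac12})(\rhoAB-\omegaA\kron\rhoB)(\idA\kron\tauB^{-\nicefrac12})\big)
\end{align*}
into four terms. The pure $\rhoAB$ term gives $\tr\big(\rhoAB(\idA\kron\tauB^{-\nicefrac12})\big)^2 = \idx{\Gamma}{C}(\rhoAB|\tauB)$ by Definition~\ref{def:coll}. The two cross terms each equal $-\tr\big(\rhoB\tauB^{-\nicefrac12}\rhoB\tauB^{-\nicefrac12}\big)/\idx{d}{A}$ after tracing out $A$ against $\omegaA$ (using $\ptr{A}\,\rhoAB = \rhoB$ and $\tr\,\omegaA = 1$, $\omegaA^2$ contributing a $1/\idx{d}{A}$). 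The pure $\omegaA\kron\rhoB$ term equals $\tr(\omegaA^2)\cdot\tr\big(\rhoB\tauB^{-\nicefrac12}\rhoB\tauB^{-\nicefrac12}\big) = \tr\big(\rhoB\tauB^{-\nicefrac12}\rhoB\tauB^{-\nicefrac12}\big)/\idx{d}{A}$. Adding: $\idx{\Gamma}{C}(\rhoAB|\tauB) - \tr\big(\rhoB\tauB^{-\nicefrac12}\rhoB\tauB^{-\nicefrac12}\big)/\idx{d}{A}$. Multiplying through by $\idx{d}{A}$ inside the square root and halving yields exactly the claimed bound $\frac12\sqrt{\idx{d}{A}\idx{\Gamma}{C}(\rhoAB|\tauB) - \tr\big(\rhoB\tauB^{-\nicefrac12}\rhoB\tauB^{-\nicefrac12}\big)}$.

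The main obstacle I anticipate is getting the weighted Cauchy--Schwarz step exactly right — i.e., justifying $\norm{\Theta}{1}^2 \leq \idx{d}{A}\,\tr\big(\Theta(\idA\kron\tauB^{-\nicefrac12})\Theta(\idA\kron\tauB^{-\nicefrac12})\big)$ for $\Theta = \rhoAB - \omegaA\kron\rhoB$. The clean way is: $\norm{\Theta}{1} = \norm{(\idA\kron\tauB^{\nicefrac14})\,\Theta'\,(\idA\kron\tauB^{\nicefrac14})}{1}$ with $\Theta' := (\idA\kron\tauB^{-\nicefrac14})\Theta(\idA\kron\tauB^{-\nicefrac14})$, then by Cauchy--Schwarz for the Hilbert--Schmidt inner product $\norm{ABA}{1} \le \norm{A\sqrt{|\Theta'|}}{2}\cdot\norm{\sqrt{|\Theta'|}A}{2}$-type manipulation, or most simply $\norm{\Theta}{1} \le \norm{\sqrt{\idA\kron\tauB}}{2}\cdot\norm{\Theta'}{2}$ where $\norm{\sqrt{\idA\kron\tauB}}{2}^2 = \idx{d}{A}\tr\,\tauB \leq \idx{d}{A}$... — the subtlety is that I actually want $\idx{d}{A}$ and not $\idx{d}{A}\tr\,\tauB$, which forces using the \emph{support projector} of $\Theta$ rather than that of $\idA\kron\tauB$: one writes $\norm{\Theta}{1} = \tr(P|\Theta|)$ where $P$ projects onto $\supp{\Theta} \subseteq \hA\kron\supp{\rhoB}$, and $\tr\,P \le \idx{d}{A}\dim\supp{\rhoB}$; combining with the $\tauB^{-1/2}$ weighting and $\tr(\rhoB\tauB^{-1/2}) \le \dim\supp{\rhoB}$-type bookkeeping, the dimension of $\supp{\rhoB}$ cancels. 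I would lay this out carefully since it is the crux; the rest is bookkeeping. (An alternative, perhaps cleaner, route avoiding the rank argument entirely: work on $\hA\kron\hB$, use that for any $\Theta$ with $\supp{\Theta}\subseteq\hA\kron\supp{\rhoB}$ one has $\norm{\Theta}{1}^2 \le \tr(\Theta\, (\idA\kron\tauB^{-\nicefrac12})\,\Theta\,(\idA\kron\tauB^{\nicefrac12}))\cdot\tr(\idA\kron P_{\rhoB})$ hmm — I will settle on whichever gives the $\idx{d}{A}$ factor transparently, most likely by exploiting that $\Theta = \rhoAB - \omegaA\kron\rhoB$ has zero $\rhoB$-expectation structure so one can insert $\omegaA$ and use $\tr(\omegaA^{-1}X) \le \idx{d}{A}\tr X$ for positive $X$.)
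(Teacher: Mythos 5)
Your plan is essentially the paper's proof: the paper applies the three-operator H\"older inequality (Lemma~\ref{lemma:hoelder}) with $r=t=4$, $s=2$, $A=C=\idA\kron\tauB^{\nicefrac{1}{4}}$, which is exactly your weighted Cauchy--Schwarz step, and then expands the square into the same four terms with the same cancellation. The subtlety you agonize over is a non-issue: H\"older gives the prefactor $\big(\tr(\idA\kron\tauB)\big)^{\nicefrac{1}{2}}=(\idx{d}{A}\,\tr\,\tauB)^{\nicefrac{1}{2}}\le\sqrt{\idx{d}{A}}$ because $\tauB$ is sub-normalized, so no support-projector or rank bookkeeping is needed.
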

\begin{proof}
  We apply the H\"older inequality (Lemma~\ref{lemma:hoelder} in
  Appendix~\ref{app:tech}) with parameters $r = t = 4$, $s = 2$, $A = C = \idA
  \kron \tauB^{\nicefrac{1}{4}}$ and $B = ( \idA \kron
  \tauB^{-\nicefrac{1}{4}} ) ( \rhoAB - \omegaA \kron \rhoB )(
  \idA \kron \tauB^{-\nicefrac{1}{4}} )$. This leads to
  \begin{align*}
    2\, \duni{A}{B}{\rho} &\leq \norm{ \rhoAB - \omegaA \kron \rhoB }{1}
    \\
    &= \norm{ABC}{1} \leq \norm{A^4}{1}^{\, \nicefrac{1}{4}}
    \norm{B^2}{1}^{\nicefrac{1}{2}} \norm{C^4}{1}^{\, \nicefrac{1}{4}}\\
    &\leq \sqrt{d_A\, \tr \big( (\rhoAB - \omegaA \kron
      \rhoB) (\idA \kron \tauB^{-\nicefrac{1}{2}}) \big)^2 } \, .
  \end{align*}
  We simplify the expression on the r.h.s.\ further using
  \begin{align*}
    &\tr \big( (\rhoAB - \omegaA \kron
    \rhoB) (\idA \kron \tauB^{-\nicefrac{1}{2}}) \big)^2 \\
    &\quad =\, \tr \big( \rhoAB (\idA \kron \tauB^{-\nicefrac{1}{2}})
    \big)^2 + \tr \big( (\omegaA \kron \rhoB) (\idA \kron \tauB^{-\nicefrac{1}{2}})
    \big)^2 \\
    &\qquad - 2 \tr \big( \rhoAB (\idA \kron
    \tauE^{-\nicefrac{1}{2}}) (\omegaA \kron \rhoB) (\idA \kron
    \tauE^{-\nicefrac{1}{2}}) \big) \\
    & \quad =\, \idx{\Gamma}{C}(\rhoAB | \tauB) -
    \frac{1}{\idx{d}{A}} \tr\big( \rhoB \tauB^{-\nicefrac{1}{2}}
    \rhoB \tauB^{-\nicefrac{1}{2}} \big) \, ,
  \end{align*}
  which concludes the proof.
\end{proof}
The above bound can be simplified by setting $\tauB = \rhoB$:
\begin{align}
  \label{eqn:uniform-bound-2}
  \duni{A}{B}{\rho} \leq \frac{1}{2} \sqrt{ \idx{d}{A}
    \idx{\Gamma}{C}(\rhoAB | \rhoB) - \tr\,\rhoB } \, .
\end{align}

We now consider a scenario where $X$ is picked from a set $\cX$ and
$E$ is a quantum system whose state may depend on $X$. The situation
is described by a CQ-state of the form
\begin{align}
  \label{eqn:cq-state}
  \rhoXE = \sum_x \proji{x}{x}{X} \kron \rhoE^{[x]} \, ,
\end{align}
where the probability of $x$ occurring is the trace of the
sub-normalized state $\rhoE^{[x]}$ and $\rhoE = \sum_x
\rhoE^{[x]}$. After applying a function $f: \cX \to \{0,1\}^{\times
  \ell}$ chosen at random from a family of hash functions $\cF$, the
resulting CQ-state is given by
\begin{align}
  \label{eqn:after-extraction-state}
  \idx{\rho}{FZE} = \sum_f \sum_z p_f \proji{f}{f}{F} \kron
  \proji{z}{z}{Z} \kron \rhoE^{[f,z]} \, ,
\end{align}
where $z \in \{0, 1\}^{\times \ell}$, $p_f = 1/\abs{\mathcal{F}}$ and
\begin{align}
  \label{eqn:rhoE-conversion}
  \rhoE^{[f,z]} := \sum_{x, f(x) = z} \rhoE^{[x]} \, .
\end{align}
Formally, randomness extraction can be
modelled as a trace-preserving CPM,
$\mathcal{A}$, from $\hi{FX} \to \hi{FZ}$ that maps $\idx{\rho}{F}
\kron \idx{\rho}{XE} \mapsto (\mathcal{A} \kron \opidE)
(\idx{\rho}{F} \kron \idx{\rho}{XE}) = \idx{\rho}{FZE}$.

The followoing lemma yields a bound on the collision entropy of the
output of the hash function in terms of the collision entropy of the
input.

\begin{lemma}
  \label{lemma:privacy-amp}
  Let $\mathcal{F}$ be $\delta$-almost
  two-universal, let $\rhoXE$ and $\idx{\rho}{FZE}$ be defined as
  in~\eqref{eqn:cq-state} and~\eqref{eqn:after-extraction-state},
  respectively, and let $\tauE \in \normstates{\hE}$. Then, 
  \begin{align*}
    \idx{\Gamma}{C}(\idx{\rho}{FZE} | \idx{\rho}{F}
    \kron \tauE) \leq \idx{\Gamma}{C}(\rhoXE|\tauE) + \delta\,
    \tr\,(\rhoE \tauE^{\!-\nicefrac{1}{2}}
    \rhoE \tauE^{\!-\nicefrac{1}{2}}) \, .
  \end{align*}
\end{lemma}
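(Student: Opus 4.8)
The plan is to compute the collision quantity $\idx{\Gamma}{C}(\idx{\rho}{FZE} \mid \idx{\rho}{F} \kron \tauE)$ directly from its definition, expand the square, and split the resulting double sum over hash functions $f$ and outputs $z$ into ``diagonal'' terms (where the pre-images contributing to the trace agree) and ``off-diagonal'' terms, then use the $\delta$-almost two-universality property to bound the off-diagonal contribution. Concretely, writing out $\idx{\rho}{FZE}$ from~\eqref{eqn:after-extraction-state}, the operator $(\idZ \kron \tauE^{-\nicefrac12})$ acts only on the $E$ register, and since $\proji{f}{f}{F}$ and $\proji{z}{z}{Z}$ are orthogonal projectors, the square $\big(\idx{\rho}{FZE}(\idi{F}\kron\idZ\kron\tauE^{-\nicefrac12})\big)^2$ is block-diagonal in $(f,z)$. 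Taking the trace gives
\begin{align*}
  \idx{\Gamma}{C}(\idx{\rho}{FZE} \mid \idx{\rho}{F}\kron\tauE)
  = \sum_f p_f^2 \sum_z \tr\big( \rhoE^{[f,z]} \tauE^{-\nicefrac12} \rhoE^{[f,z]} \tauE^{-\nicefrac12} \big).
\end{align*}
Now I would substitute the definition~\eqref{eqn:rhoE-conversion} of $\rhoE^{[f,z]}$ as a sum over $x$ with $f(x)=z$, so that $\sum_z \tr(\cdots)$ becomes a double sum over pairs $(x,x')$ with $f(x) = f(x')$, contributing $\tr(\rhoE^{[x]}\tauE^{-\nicefrac12}\rhoE^{[x']}\tauE^{-\nicefrac12})$.

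Next I would separate the $x = x'$ terms from the $x \neq x'$ terms. The diagonal part sums to $\sum_f p_f \cdot p_f \sum_x \tr(\rhoE^{[x]}\tauE^{-\nicefrac12}\rhoE^{[x]}\tauE^{-\nicefrac12})$; since $\sum_f p_f = 1$ this equals exactly $\idx{\Gamma}{C}(\rhoXE \mid \tauE)$, recognizing that $\idx{\Gamma}{C}(\rhoXE \mid \tauE) = \tr\big(\rhoXE(\idX\kron\tauE^{-\nicefrac12})\big)^2 = \sum_x \tr(\rhoE^{[x]}\tauE^{-\nicefrac12}\rhoE^{[x]}\tauE^{-\nicefrac12})$ because $\rhoXE$ is CQ and hence block-diagonal in $x$. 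For the off-diagonal part, I would push the sum over $f$ inside: $\sum_f p_f^2 \sum_{x \neq x', f(x)=f(x')} (\cdots) = \sum_{x \neq x'} \Big(\sum_{f: f(x)=f(x')} p_f^2\Big) \tr(\rhoE^{[x]}\tauE^{-\nicefrac12}\rhoE^{[x']}\tauE^{-\nicefrac12})$. Using $p_f = 1/|\cF|$ and the two-universality bound~\eqref{eqn:almost-two-universal}, $\sum_{f: f(x)=f(x')} p_f^2 = \tfrac{1}{|\cF|} \Pr_{f}[f(x)=f(x')] \leq \delta/|\cF| \cdot |\cF| \cdot p_f^2$... more cleanly, $\sum_{f:f(x)=f(x')} p_f^2 = p_f \Pr_f[f(x)=f(x')] \le \delta\, p_f$; summing $p_f$ over the relevant $f$ is subsumed — actually $\sum_{f:f(x)=f(x')} p_f^2 = \frac{1}{|\cF|^2}\,|\{f: f(x)=f(x')\}| = \frac{1}{|\cF|}\Pr_f[f(x)=f(x')] \le \frac{\delta}{|\cF|}$. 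Hmm, that leaves an extra $1/|\cF|$; the correct bookkeeping is that the diagonal already carries one factor $\sum_f p_f = 1$, so for the off-diagonal we get $\sum_{x\neq x'} \Pr_f[f(x)=f(x')]\, p_f^{\text{(one factor absorbed)}}\cdots$ — I would be careful here, but the upshot is the off-diagonal is bounded by $\delta \sum_{x \neq x'} \tr(\rhoE^{[x]}\tauE^{-\nicefrac12}\rhoE^{[x']}\tauE^{-\nicefrac12}) \le \delta \sum_{x, x'} \tr(\rhoE^{[x]}\tauE^{-\nicefrac12}\rhoE^{[x']}\tauE^{-\nicefrac12}) = \delta\, \tr(\rhoE \tauE^{-\nicefrac12}\rhoE\tauE^{-\nicefrac12})$, where I used $\rhoE = \sum_x \rhoE^{[x]}$ and dropped the non-negative $x = x'$ terms to extend the sum.

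The one subtle point that needs care is the positivity used to extend $\sum_{x\neq x'}$ to $\sum_{x,x'}$: each term $\tr(\rhoE^{[x]}\tauE^{-\nicefrac12}\rhoE^{[x']}\tauE^{-\nicefrac12}) = \tr\big((\tauE^{-\nicefrac14}\rhoE^{[x]}\tauE^{-\nicefrac14})(\tauE^{-\nicefrac14}\rhoE^{[x']}\tauE^{-\nicefrac14})\big)$ is the Hilbert–Schmidt inner product of two positive operators and hence non-negative, so adding back the diagonal terms only increases the bound — this is exactly what makes the estimate go through. I would also note $\supp{\tauE}\supseteq\supp{\rhoE}$ so that $\tauE^{-\nicefrac12}$ interacts sensibly with each $\rhoE^{[x]}$ (guaranteed since $\tauE$ is a normalized full-rank-on-support state here, or can be assumed without loss of generality). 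The main obstacle is purely the combinatorial bookkeeping of the powers of $p_f$ when interchanging the sums — making sure the two-universality bound is applied with the correct normalization so that the diagonal term comes out with coefficient exactly $1$ and the off-diagonal with coefficient exactly $\delta$; everything else is a routine expansion of the collision entropy for CQ-states.
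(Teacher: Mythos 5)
Your overall strategy is the paper's own (expand the collision quantity, split into diagonal and off-diagonal pairs, apply $\delta$-almost two-universality, complete the sum by positivity), but your first displayed identity is wrong, and the bookkeeping trouble you then run into (the ``extra $1/|\cF|$'' you could not place) is a symptom of that error, not a harmless detail. In $\idx{\Gamma}{C}(\idx{\rho}{FZE}\mid\idx{\rho}{F}\kron\tauE)$ the conditioning system of Definition~\ref{def:coll} is the joint system $FE$, so the operator to sandwich with is $\big(\idx{\rho}{F}\kron\tauE\big)^{-\nicefrac{1}{2}} = \sum_f p_f^{-\nicefrac{1}{2}}\,\proji{f}{f}{F}\kron\tauE^{-\nicefrac{1}{2}}$, not $\idi{F}\kron\tauE^{-\nicefrac{1}{2}}$. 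You omitted the $\idx{\rho}{F}^{-\nicefrac{1}{2}}$ factor; each of the two sandwiching operators contributes $p_f^{-\nicefrac{1}{2}}$ on the block labelled $f$, so the correct block coefficient is $p_f^2\cdot p_f^{-1}=p_f$, i.e.
\begin{align*}
  \idx{\Gamma}{C}(\idx{\rho}{FZE}\mid\idx{\rho}{F}\kron\tauE)
  = \sum_f p_f \sum_z \tr\big(\rhoE^{[f,z]}\tauE^{-\nicefrac{1}{2}}\rhoE^{[f,z]}\tauE^{-\nicefrac{1}{2}}\big)\,,
\end{align*}
with a single power of $p_f$, which is precisely the expectation $\mathbb{E}_{F\in\cF}[\,\cdot\,]$.

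Once this is corrected, your argument closes exactly as you intended and exactly as in the paper: substituting~\eqref{eqn:rhoE-conversion} turns the sum over $z$ into a sum over pairs $(x,x')$ with $F(x)=F(x')$; the coefficient of the $(x,x')$ term becomes $\mathbb{E}_F\big[\sum_z \delta_{F(x)=z}\delta_{F(x')=z}\big]=\Pr_F[F(x)=F(x')]$, which equals $1$ for $x=x'$ and is at most $\delta$ for $x\neq x'$ by~\eqref{eqn:almost-two-universal}. The diagonal then reproduces $\idx{\Gamma}{C}(\rhoXE|\tauE)$ with coefficient exactly $1$, and the off-diagonal is bounded by $\delta\sum_{x\neq x'}\tr(\cdots)\leq\delta\,\tr(\rhoE\tauE^{-\nicefrac{1}{2}}\rhoE\tauE^{-\nicefrac{1}{2}})$ using, as you correctly note, the non-negativity of the Hilbert--Schmidt inner products of the positive operators $\tauE^{-\nicefrac{1}{4}}\rhoE^{[x]}\tauE^{-\nicefrac{1}{4}}$. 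So the only missing idea is the $\idx{\rho}{F}^{-\nicefrac{1}{2}}$ in the conditioning operator; with it, the diagonal carries coefficient $1$ and the off-diagonal coefficient $\delta$ automatically, and everything downstream of your ``Hmm'' becomes unnecessary.
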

\begin{proof}
  The collision entropy on the l.h.s.\ can be rewritten as an expectation value
  over $F$, that is
  \begin{align*}
    &\idx{\Gamma}{C}(\idx{\rho}{FZE} | \idx{\rho}{F}
    \kron \tauE) \\
    &\quad =\, \tr\Big(\idx{\rho}{FZE} (p_f \idi{FZ} \kron
    \tauE)^{-\nicefrac{1}{2}} \idx{\rho}{ZEF} (p_f \idi{FZ}
    \kron \tauE)^{-\nicefrac{1}{2}} \Big) \\
    &\quad =\, {\sum}_f\, p_f\! \sum_z \tr\big(
    \proji{f}{f}{F} \!\kron\! \proji{z}{z}{Z}\! \kron
    \rhoE^{[f,z]} \tauE^{\!\!-\nicefrac{1}{2}}\! \rhoE^{[f,z]}
    \tauE^{\!\!-\nicefrac{1}{2}} \big) \\
    &\quad =\, \mathop{\mathbb{E}}_{F \in \mathcal{F}} \Big[ \sum_z
    \tr\,( \rhoE^{[F,z]} \tauE^{\!-\nicefrac{1}{2}} \rhoE^{[F,z]}
    \tauE^{\!-\nicefrac{1}{2}} ) \Big] \\
    &\quad =\, \sum_{x, x'} \mathop{\mathbb{E}}_{F \in \mathcal{F}} \Big[
    \sum_z \delta_{F(x) = z} \delta_{F(x') = z} \Big]
    \tr\,( \rhoE^{[x]} \tauE^{\!-\nicefrac{1}{2}} \rhoE^{[x']}
    \tauE^{\!-\nicefrac{1}{2}} ) \, .
  \end{align*}
  We have used~\eqref{eqn:rhoE-conversion} to substitute for
  $\rhoE^{[F,z]}$ in the last step.
  The expectation value can be evaluated using the defining
  property~\eqref{eqn:almost-two-universal} of
  $\delta$-almost two-universal families. We get
  \begin{align*}
    \mathop{\mathbb{E}}_{F \in \mathcal{F}} \Big[
    \sum_z \delta_{F(x) = z} \delta_{F(x') = z} \Big] \leq \delta
  \end{align*}
  if $x \neq x'$ and $1$ otherwise. We use this relation and the fact that the
  trace terms are positive to bound
  \begin{align*}
    &\idx{\Gamma}{C}(\idx{\rho}{FZE} | \idx{\rho}{F}
    \kron \tauE) \\
    &\quad \leq \sum_x \tr( \rhoE^{[x]} \tauE^{\!\!-\nicefrac{1}{2}}
    \rhoE^{[x]} \tauE^{\!\!-\nicefrac{1}{2}} ) + \delta \!\sum_{x \neq
      x'}\! \tr( \rhoE^{[x]} \tauE^{\!\!-\nicefrac{1}{2}}
    \rhoE^{[x']} \tauE^{\!\!-\nicefrac{1}{2}} )  \, . 
  \end{align*}
  We now complete the second sum with the terms where $x = x'$ to get
  the statement of the lemma.
\end{proof}

If we set $\tauE = \rhoE$, the result can be simplified further:
\begin{align}
  \label{eqn:privacy-amp-1}
  \idx{\Gamma}{C}(\idx{\rho}{FZE} | \idx{\rho}{F}
  \kron \rhoE) \leq \idx{\Gamma}{C}(\rhoXE|\rhoE) + \delta\,
  \tr\,\rhoE \, .
\end{align}

We are now ready to give a bound on the distance from uniform
$\duni{Z}{FE}{}$ after privacy amplification with two-universal and
$\delta$-almost two-universal families of hash functions. Note that
we consider the distance from uniform conditioned on F as well as E.
This describes the situation where the chosen hash function (the
value $f$) is published after its use
(strong extractor regime).

The distance from uniform conditioned on E averaged over
the choice of $f$ is given by
\begin{align*}
  \sum_{f} p_f \duni{Z}{E}{\rho^{[f]}} \, , \quad
  \textrm{where}\ \ \idx{\rho}{ZE}^{[f]} := \sum_{z} \proji{z}{z}{Z} \kron
  \rhoE^{[f,z]} \, 
\end{align*}
and it can be bounded in terms of $\duni{Z}{FE}{}$ as
\begin{align}
  \sum_{f} p_f\, \duni{Z}{E}{\rho^{[f]}} &\leq \frac{1}{2} \sum_f p_f\,
  \normbig{ \idx{\rho}{ZE}^{[f]} - \omegaZ \kron \sigmaE }{1}
  \nonumber\\
  \label{eqn:average-bound}
  &= \duni{Z}{EF}{\rho} \, ,
\end{align}
where $\sigmaE$ optimizes~\eqref{eqn:dist-from-uniform} for
$\duni{Z}{EF}{\rho}$. Hence, an upper bound on $\duni{Z}{FE}{}$ implies
an upper bound on the average distance to uniform conditioned on E as well.

For two-universal hashing, we get the following bound (see
also~\cite{renner05}).
\begin{theorem}
  \label{thm:two-universal-hashing}
  Let $\mathcal{F}$ be two-universal and let $\rhoXE$ and
  $\idx{\rho}{ZEF}$ be defined as in~\eqref{eqn:cq-state}
  and~\eqref{eqn:after-extraction-state}, respectively. Then, for any
  $\eps \geq 0$,
  \begin{align*}
    \duni{Z}{FE}{\rho} \leq \eps + \frac{1}{2}
    \sqrt{2^{\ell - \chmineps{X}{E}{\rho}}} \, .
  \end{align*}
\end{theorem}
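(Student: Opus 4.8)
The plan is to reduce the statement to the three technical lemmas already established --- Lemma~\ref{lemma:uniform-to-collision} (distance from uniform bounded by collision entropy), Lemma~\ref{lemma:privacy-amp} (the collision entropy barely grows under the hash), and Lemma~\ref{lemma:coll-bound}, specifically~\eqref{eqn:min-coll-bound} (the collision entropy relative to a suitably chosen reference operator $\tauE$ is controlled by the min-entropy) --- with the $\eps$-smoothing entering only through Definition~\ref{def:min} and the monotonicity~\eqref{eqn:purified-monotone} of the purified distance. Note that the simplified bounds~\eqref{eqn:uniform-bound-2} and~\eqref{eqn:privacy-amp-1}, which take $\tauE=\rhoE$, do \emph{not} suffice here: the collision entropy relative to $\rhoE$ is not controlled by the min-entropy, so one must keep $\tauE$ generic and feed it the particular operator produced by~\eqref{eqn:min-coll-bound}.

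Concretely, I would proceed in five steps. (i)~Pick $\rhot_{XE}\in\epsball{\rhoXE}$ attaining $\chmin{X}{E}{\rhot}=\chmineps{X}{E}{\rho}$; since the completely dephasing channel on $X$ is a trace-preserving CPM that leaves $\rhoXE$ unchanged and, being positive and unital on $X$, does not decrease $\chmin{X}{E}{\cdot}$ (apply it to both sides of $\rhot_{XE}\leq 2^{-\lambda}\idX\kron\sigma$), we may assume $\rhot_{XE}$ is a CQ-state. (ii)~Apply~\eqref{eqn:min-coll-bound} to $\rhot_{XE}$ to obtain $\sigmaE\in\normstates{\hE}$ with $\idx{\Gamma}{C}(\rhot_{XE} | \sigmaE)\leq 2^{-\chmineps{X}{E}{\rho}}$. (iii)~Put $\rhot_{FZE}:=(\mathcal{A}\kron\opidE)(\rho_F\kron\rhot_{XE})$; by~\eqref{eqn:purified-monotone}, $\dpure{\rho_{FZE}}{\rhot_{FZE}}\leq\dpure{\rho_F\kron\rhoXE}{\rho_F\kron\rhot_{XE}}\leq\dpure{\rhoXE}{\rhot_{XE}}\leq\eps$. (iv)~Apply Lemma~\ref{lemma:uniform-to-collision} to $\rhot_{FZE}$ with $\tau_{FE}:=\rho_F\kron\sigmaE$ (retaining the subtracted term), and bound $\idx{\Gamma}{C}(\rhot_{FZE} | \rho_F\kron\sigmaE)$ via Lemma~\ref{lemma:privacy-amp} with $\tauE=\sigmaE$. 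Since $\mathcal{F}$ is two-universal, $\delta\leq 2^{-\ell}$, so the correction term $\delta\,\tr(\rhot_E\sigmaE^{-\nicefrac{1}{2}}\rhot_E\sigmaE^{-\nicefrac{1}{2}})$ of Lemma~\ref{lemma:privacy-amp}, multiplied by $2^\ell$, is \emph{exactly} the term $\tr(\rhot_{FE}(\rho_F\kron\sigmaE)^{-\nicefrac{1}{2}}\rhot_{FE}(\rho_F\kron\sigmaE)^{-\nicefrac{1}{2}})$ subtracted inside the square root of Lemma~\ref{lemma:uniform-to-collision} --- this uses only $\rhot_{FE}=\rho_F\kron\rhot_E$ and the identity $\rho_F\rho_F^{-\nicefrac{1}{2}}\rho_F\rho_F^{-\nicefrac{1}{2}}=\rho_F$. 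The two terms cancel, leaving
\begin{align*}
  \duni{Z}{FE}{\rhot} \leq \tfrac{1}{2}\sqrt{2^\ell\,\idx{\Gamma}{C}(\rhot_{XE} | \sigmaE)} \leq \tfrac{1}{2}\sqrt{2^{\ell-\chmineps{X}{E}{\rho}}}\,.
\end{align*}
(This cancellation is exactly what fails for general $\delta$: the correction then overshoots, forcing one to use instead the weaker~\eqref{eqn:minalt-coll-bound} with a normalized state, which is what produces the extra $\log(2/\eps^2+1)$ term in the $\delta$-almost two-universal case.)

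(v)~Finally, un-smooth. Let $\sigma_{FE}^\star$ be optimal in~\eqref{eqn:dist-from-uniform} for $\duni{Z}{FE}{\rhot}$, so that $\tr\sigma_{FE}^\star=\tr\rhot_{FE}\leq 1$; complete it to an operator of trace one and estimate $\duni{Z}{FE}{\rho}$ by the triangle inequality. The error incurred is $\tfrac{1}{2}\norm{\rho_{FZE}-\rhot_{FZE}}{1}+\tfrac{1}{2}(1-\tr\rhot_{FZE})$, which equals the trace of the positive part of $\rho_{FZE}-\rhot_{FZE}$; since the generalized trace distance is dominated by the purified distance~\cite{tomamichel09}, this is at most $\dpure{\rho_{FZE}}{\rhot_{FZE}}\leq\eps$, and together with step~(iv) we obtain $\duni{Z}{FE}{\rho}\leq\eps+\tfrac{1}{2}\sqrt{2^{\ell-\chmineps{X}{E}{\rho}}}$. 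I expect the two delicate points to be the exact algebraic cancellation in step~(iv) --- short, but it must come out precisely, being the whole reason the two-universal bound is tighter than the $\delta$-almost one --- and, more substantially, the sub-normalization bookkeeping in step~(v): because the smoothed state is generically sub-normalized, the completion error has to be recognized as the generalized trace distance and bounded via $\dpure{\cdot}{\cdot}$, since estimating its two summands separately would only yield $2\eps$. Everything else is routine given the three lemmas.
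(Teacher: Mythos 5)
Your proposal is correct and follows essentially the same route as the paper's proof: Lemma~\ref{lemma:uniform-to-collision} with the generic reference $\tauB=\idx{\rho}{F}\kron\tauE$, Lemma~\ref{lemma:privacy-amp} with the $2^\ell\delta\le 1$ cancellation against the subtracted term, the bound~\eqref{eqn:min-coll-bound}, and then smoothing via a CQ state in $\epsball{\rhoXE}$ and monotonicity of the purified distance under $\mathcal{A}\kron\opidE$. Your step~(v) is in fact slightly more careful than the paper's own un-smoothing step, which silently ignores the trace constraint $\tr\,\sigmaB=\tr\,\rhoB$ in Definition~\ref{def:dist-from-uniform} when plugging the minimizer for $\rhot$ back into the distance for the normalized $\rho$; your identification of the completion error with the generalized trace distance, bounded by $\dpure{\idx{\rho}{FZE}}{\idx{\rhot}{FZE}}\le\eps$, closes that gap cleanly.
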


\begin{proof}
  We use Lemma~\ref{lemma:uniform-to-collision} to bound
  $\duni{Z}{FE}{\rho}$.  In particular, we set $\idx{\tau}{FE} :=
  \idx{\rho}{F} \kron \idx{\tau}{E}$ to get
  \begin{align*}
    2\duni{Z}{FE}{\rho} &\leq \sqrt{2^\ell \idx{\Gamma}{C}
      (\idx{\rho}{ZFE} | \idx{\tau}{FE}) - \tr\,
      (\rhoE \tauE^{\!-\nicefrac{1}{2}} \rhoE \tauE^{\!-\nicefrac{1}{2}})} \\
    &\leq \sqrt{2^\ell \idx{\Gamma}{C}(\idx{\rho}{XE}|\tauE)} \, ,
  \end{align*}
  where we have used Lemma~\ref{lemma:privacy-amp} and that
  $\mathcal{F}$ is two-universal ($\delta \leq 2^{-\ell}$) in the last
  step. The r.h.s.\ can be expressed in terms of a min-entropy
  using~\eqref{eqn:min-coll-bound}. With an appropriate choice of
  $\tauE$, we have
  \begin{align}
    \label{eqn:non-smooth-key-bound}
    2\duni{Z}{FE}{\rho} \leq
    \sqrt{2^{\ell - \chmin{X}{E}{\rho}}} \, .
  \end{align}
  We have now shown the statement of the theorem for the case $\eps =
  0$.

  Finally, the bound can be expressed in terms of a smooth
  min-entropy. Let $\idx{\rhot}{XE} \in \epsball{\rhoXE}$ be the
  CQ-state (cf.~Lemma~\ref{lemma:cq-smoothing}) that optimizes the
  smooth min-entropy $\chmineps{X}{E}{\rho} = \chmin{X}{E}{\rhot}$. We
  define $\idx{\rhot}{FZE} := (\mathcal{A} \kron \opidE)
  (\idx{\rho}{F} \kron \idx{\rhot}{XE})$ and note that privacy
  amplification can only decrease the purified
  distance~\eqref{eqn:purified-monotone}, i.e.
  \begin{align*}
    \frac{1}{2} \norm{\idx{\rho}{FZE} - \idx{\rhot}{FZE}}{1} \leq
    \dpure{\idx{\rho}{FZE}}{\idx{\rhot}{FZE}} \leq
    \dpure{\rhoXE}{\idx{\rhot}{XE}} \leq \eps \, .
  \end{align*}
  Moreover, let $\idx{\sigmat}{FE}$ be the state that minimizes the
  distance from uniform $d_u(Z|FE)_{\rhot}$. Then, 
  \begin{align*}
    2 \duni{Z}{FE}{\rho} &\leq \norm{\idx{\rho}{FZE} - \idx{\omega}{Z}
      \kron \idx{\sigmat}{FE}}{1} \\
    &\leq \norm{\idx{\rho}{FZE} -  \idx{\rhot}{FZE}}{1} +
    \norm{\idx{\rhot}{FZE} - \idx{\omega}{Z} \kron \idx{\sigmat}{FE}}{1} \\
    & \leq 2 \eps + 2 \duni{Z}{FE}{\rhot} \, .
  \end{align*}
  We now apply~\eqref{eqn:non-smooth-key-bound} for
  $\idx{\rhot}{FZE}$ (instead of $\idx{\rho}{FZE}$) to get
  \begin{align*}
    \duni{Z}{FE}{\rho} &\leq \eps + \frac{1}{2} \sqrt{2^{\ell -
        \chmin{X}{E}{\rhot}}} \\
    &= \eps + \frac{1}{2} \sqrt{2^{\ell - \chmineps{X}{E}{\rho}}}\, ,
  \end{align*}
  which concludes the proof.
\end{proof}

Next, we consider the case of $\delta$-almost two-universal hashing.
\begin{theorem}
  \label{thm:almost-two-universal-hashing}
  Let $\mathcal{F}$ be $\delta$-almost two-universal and let $\rhoXE$
  and $\idx{\rho}{ZEF}$ be defined as in~\eqref{eqn:cq-state}
  and~\eqref{eqn:after-extraction-state}, respectively. Then, for any
  $\eps \geq 0$ and $\bar{\eps} > 0$,
  \begin{align*}
    \duni{Z}{FE}{\rho} \leq \eps \!+\! \bar{\eps} \!+\!
    \frac{1}{2}\sqrt{(2^\ell \delta \!-\! 1) \! +\! 2^{\ell - \chmineps{X}{B}{\rho} +
        \log (\frac{2}{\bar{\eps}^2} + 1) } } .
  \end{align*}
\end{theorem}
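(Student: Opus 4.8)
The plan is to mirror the structure of the proof of Theorem~\ref{thm:two-universal-hashing}, but instead of using the crude bound~\eqref{eqn:min-coll-bound} to convert the collision entropy into a min-entropy, I would use the smoothed version~\eqref{eqn:minalt-coll-bound} of Lemma~\ref{lemma:coll-bound}, which accounts for the extra $\bar\eps$-smoothing and the $\log(2/\bar\eps^2+1)$ penalty. First I would handle the case $\eps=0$. Starting from Lemma~\ref{lemma:uniform-to-collision} applied to $\idx{\rho}{ZFE}$ with $\idx{\tau}{FE}:=\idx{\rho}{F}\kron\tauE$, I get
\begin{align*}
  2\duni{Z}{FE}{\rho} \leq \sqrt{2^\ell\, \idx{\Gamma}{C}(\idx{\rho}{ZFE}\,|\,\idx{\tau}{FE}) - \tr\big(\rhoE\tauE^{-\nicefrac{1}{2}}\rhoE\tauE^{-\nicefrac{1}{2}}\big)}\, .
\end{align*}
Now I invoke Lemma~\ref{lemma:privacy-amp} to bound $\idx{\Gamma}{C}(\idx{\rho}{ZFE}\,|\,\idx{\rho}{F}\kron\tauE) \leq \idx{\Gamma}{C}(\rhoXE|\tauE) + \delta\,\tr(\rhoE\tauE^{-\nicefrac12}\rhoE\tauE^{-\nicefrac12})$. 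Substituting, the term $\tr(\rhoE\tauE^{-\nicefrac12}\rhoE\tauE^{-\nicefrac12})$ picks up a coefficient $2^\ell\delta - 1$, which is exactly the $(2^\ell\delta-1)$ appearing in the theorem. The remaining piece is $2^\ell\,\idx{\Gamma}{C}(\rhoXE|\tauE)$.

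Next I would choose $\tauE=\rhobE$, where $\rhobXE\in\eepsball{\bar\eps}{\rhoXE}$ is the normalized CQ-state furnished by the second part of Lemma~\ref{lemma:coll-bound}, applied to a state $\rhoXE$ that has already been $\eps$-optimized (see below). Because $\rhobXE$ is normalized, $\tr(\rhobE\rhobE^{-\nicefrac12}\rhobE\rhobE^{-\nicefrac12}) = \tr\rhobE = 1$, so the $(2^\ell\delta-1)$ term is clean, and $\idx{\Gamma}{C}(\rhobXE|\rhobE)\le 2^{-\chmin{X}{E}{\rho}+\log(2/\bar\eps^2+1)}$ by~\eqref{eqn:minalt-coll-bound}. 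There is a subtlety here: Lemma~\ref{lemma:uniform-to-collision} and Lemma~\ref{lemma:privacy-amp} were stated for $\rhoXE$, not $\rhobXE$, so I would first replace $\rhoXE$ by $\rhobXE$ in the chain above and pay a triangle-inequality cost of $\bar\eps$ for the distance from uniform (exactly as the $\eps$-cost is paid in Theorem~\ref{thm:two-universal-hashing}), using monotonicity of the purified distance~\eqref{eqn:purified-monotone} under the randomness-extraction CPM $\mathcal{A}$. This yields, for $\eps=0$,
\begin{align*}
  \duni{Z}{FE}{\rho} \leq \bar\eps + \frac{1}{2}\sqrt{(2^\ell\delta - 1) + 2^{\ell - \chmin{X}{E}{\rho} + \log(2/\bar\eps^2+1)}}\, .
\end{align*}

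Finally, to introduce the outer smoothing parameter $\eps$, I would repeat verbatim the last part of the proof of Theorem~\ref{thm:two-universal-hashing}: let $\rhotXE\in\epsball{\rhoXE}$ be the CQ-state (via Lemma~\ref{lemma:cq-smoothing}) optimizing $\chmineps{X}{E}{\rho}=\chmin{X}{E}{\rhot}$, push it through $\mathcal{A}\kron\opidE$, use purified-distance monotonicity to get $\frac12\norm{\idx{\rho}{FZE}-\idx{\rhot}{FZE}}{1}\le\eps$, and apply the triangle inequality for the distance from uniform to conclude $\duni{Z}{FE}{\rho}\le\eps+\duni{Z}{FE}{\rhot}$. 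Combining with the $\eps=0$ bound applied to $\rhotXE$ gives the stated inequality. I expect the main obstacle to be bookkeeping the two smoothing steps correctly — in particular making sure that the normalization of $\rhobXE$ is what kills the cross term down to precisely $(2^\ell\delta-1)$ rather than something larger, and that smoothing by $\bar\eps$ (applied to the already $\eps$-smoothed state) still lands inside a valid ball so that Lemma~\ref{lemma:coll-bound} applies with the min-entropy of the original $\rhoXE$ on the right-hand side.
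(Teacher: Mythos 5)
Your proposal is correct and follows essentially the same route as the paper's proof: bound the distance from uniform via Lemma~\ref{lemma:uniform-to-collision} and Lemma~\ref{lemma:privacy-amp}, apply the chain to the doubly-smoothed normalized CQ-state $\rhobXE\in\eepsball{\bar\eps}{\idx{\rhot}{XE}}$ from Lemma~\ref{lemma:coll-bound} (whose normalization yields the clean $2^\ell\delta-1$ term), and pay $\eps+\bar\eps$ via the triangle inequality and monotonicity of the purified distance under $\mathcal{A}\kron\opidE$. The subtleties you flag at the end are resolved exactly as you anticipate, by the triangle inequality placing $\rhobXE$ in $\eepsball{\eps+\bar\eps}{\rhoXE}$ and by evaluating~\eqref{eqn:minalt-coll-bound} at $\idx{\rhot}{XE}$ so that $\chmin{X}{E}{\rhot}=\chmineps{X}{E}{\rho}$.
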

\begin{proof}
  We use Lemma~\ref{lemma:uniform-to-collision} as
  in~\eqref{eqn:uniform-bound-2} to bound $\duni{Z}{FE}{\rho}$.
  For normalized $\idx{\rho}{ZFE}$, we find
  \begin{align*}
    2 \duni{Z}{FE}{\rho} &\leq \sqrt{ 2^\ell \idx{\Gamma}{C}(\idx{\rho}{FZE} |
      \idx{\rho}{F} \kron \rhoE) - 1 } \\
    &\leq \sqrt{2^\ell \idx{\Gamma}{C}(\rhoXE|\rhoE) + (2^\ell
      \delta - 1)} \, ,
  \end{align*}
  where we used Lemma~\ref{lemma:privacy-amp} as
  stated in~\eqref{eqn:privacy-amp-1}.

  The smoothing of the above equation is achieved using the same arguments
  as in the proof of Theorem~\ref{thm:two-universal}.
  However, this time we need to include an additional smoothing
  parameter $\bar{\eps} > 0$ in order to be able to
  apply~\eqref{eqn:minalt-coll-bound}.
 
  Let $\idx{\rhot}{XE} \in \epsball{\rhoXE}$ be the CQ-state
  (cf.~Lemma~\ref{lemma:cq-smoothing}) that optimizes the smooth
  min-entropy $\chmineps{X}{E}{\rho} = \chmin{X}{E}{\rhot}$ and let
  $\idx{\rhob}{XE} \in \eepsball{\bar{\eps}}{\idx{\rhot}{XE}}$ be the
  CQ-state (cf.~Lemma~\ref{lemma:coll-bound}) that satisfies
  \begin{align}
    \idx{\Gamma}{C}(\idx{\rhob}{XE} | \rhobE) &\leq 
    2^{-\chmin{X}{E}{\rhot} + \log (\frac{2}{\bar{\eps}^2} + 1)} \nonumber\\
    \label{eqn:almost-coll-bound}
    &= 2^{-\chmineps{X}{E}{\rho} + \log (\frac{2}{\bar{\eps}^2} + 1)} \, .
  \end{align}
  Then, $\idx{\rhob}{XE} \in \eepsball{\eps + \bar{\eps}}{\rhoXE}$
  holds due to the triangle inequality of the purified distance.
  Moreover, we define the state after randomness extraction,
  $\idx{\rhob}{FZE} := (\mathcal{A} \kron \opidi{E})(\idx{\rho}{F}
  \kron \idx{\rhob}{XE})$. Following the arguments laid out in the
  proof of Theorem~\ref{thm:two-universal-hashing}, we have
  \begin{align*}
    \duni{Z}{FE}{\rho} &\leq \eps + \bar{\eps} +  \duni{Z}{FE}{\rhob} \\
    &\leq \eps + \bar{\eps} + \frac{1}{2}\sqrt{2^{\ell}
      \idx{\Gamma}{C}(\idx{\rhob}{XE} | \rhobE) + (2^\ell \delta - 1)} \, .
  \end{align*}
  This can be bounded using~\eqref{eqn:almost-coll-bound}, which
  concludes the proof.
\end{proof}

The proof of the Leftover Hash Lemma stated in the introduction
(Lemma~\ref{lemma:quant}) follows when we set $\eps = 0$ in
Theorem~\ref{thm:two-universal} and
Theorem~\ref{thm:almost-two-universal}. To see this, note that the
statements of two theorems can be expressed in terms of the
distance from uniform averaged over the choice of $f$
using~\eqref{eqn:average-bound}.

\section{Explicit Constructions with Shorter Seeds}
\label{sec:constructions}

Here, we combine known constructions of two-universal and
$\delta$-almost two-universal hash functions and discuss their use for
randomness extraction with shorter random seeds.  We consider a
scenario where $X$ is an $n$-bit string $x \in \{0, 1\}^{\times n}$
and $E$ is a quantum system. The challenge is typically to optimize
the following parameters:
\begin{enumerate}
\item[a)] the error described by the distance from uniform, $e :=
  \duni{Z}{FE}{}$, which should be small,
\item[b)] the length of the extracted key, $\ell$, which one wants to
  make as large as possible (close to $\chmineps{X}{E}{}$) and
\item[c)] the length of the random seed, $s := \log
  \abs{\mathcal{F}}$, needed to choose $f$, which one wants to keep
  small.
\end{enumerate}
The latter point is important in practical implementations of privacy
amplification, for example in quantum key distribution (QKD), where
the choice of $f$ has to be communicated between two parties.

We will first review the explicit constructions of ($\delta$-almost)
two-universal hash functions used in this section. In~\cite{carter79}, 
Carter and Wegman proposed several constructions of two-universal
function families, trying to minimize the size of $\cF$. An example of
a two-universal set of hash functions with $|\cF| = 2^n$ is the set
$\cF = \{f_\alpha\}_{\alpha \in \{0,1\}^n}$ consisting of elements
\begin{align} \label{eq:multconstruction}
  \begin{array}{ccccc}
    f_{\alpha} & : & \{0,1\}^n & \longrightarrow & \{0,1\}^\ell \\
         &  & x & \longmapsto &  x \cdot \alpha \mod 2^\ell
  \end{array}
\end{align}
where $x \cdot \alpha$ denotes the multiplication in the field
$\GF(2^n)$.
The fact that $\cF$ is two-universal can be
readily verified by considering the difference $f_{\alpha}(x) -
f_{\alpha}(x') = (x - x') \cdot \alpha \mod 2^\ell$ and noting that the mapping
$\alpha \mapsto (x-x') \cdot \alpha$ is a bijection if $x - x' \neq
0$.

With $\delta$-almost two-universal families, a larger value of $\delta$
typically allows for a smaller set $\cF$. This is nicely illustrated
by the following well-known construction based on polynomials. Let
$\mathbb{F}$ be an arbitrary field and let $r$ be a positive
integer. We define the family $\mathcal{F} = \{f_\alpha\}_{\alpha \in
  \mathbb{F}}$ of functions
\begin{align} \label{eq:polyconstruction}
  \begin{array}{ccccc}
    f_{\alpha} & : & \mathbb{F}^r& \longrightarrow & \mathbb{F} \\
         &  &  (x_1, \ldots, x_r)  & \longmapsto & \sum_{i=1}^r x_i \alpha^{r-i}    \ .
  \end{array}
\end{align}
Using the fact that a polynomial of degree $r-1$ can only have $r-1$
zeros, it is easy to verify that $\mathcal{F}$ is $\delta$-almost
two-universal, for $\delta = (r-1)/|\mathbb{F}|$.

Another method to construct $\delta$-almost two-universal families of
hash functions is to concatenate two such families. We will use the
following lemma by Stinson (see Theorem~5.4 in~\cite{stinson94}).
\begin{lemma}
  \label{lemma:concatenation}
  Let $\mathcal{F}_1$ be $\delta_1$-almost two-universal from
  $\{0,1\}^{\times n}$ to $\{0,1\}^{\times k}$ and let
  $\mathcal{F}_2$ be $\delta_2$-almost two-universal from $\{0,1\}^{\times
    k}$ to $\{0,1\}^{\times \ell}$. Then, the family
  $\mathcal{G} := \left\{ f_2 \circ f_1 : f_1 \in
    \mathcal{F}_1, f_2 \in \mathcal{F}_2 \right\}$ consisting of all
  concatenated hash functions is $(\delta_1 + \delta_2)$-almost
  two-universal.
\end{lemma}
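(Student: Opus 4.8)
The plan is to compute the collision probability of $\mathcal{G}$ directly from the definitions, using a union bound to split the relevant event into the two ways a collision in the composed function can arise. Fix two distinct inputs $x, x' \in \{0,1\}^{\times n}$. A collision of the composed function means $f_2(f_1(x)) = f_2(f_1(x'))$, where $f_1$ is drawn uniformly from $\mathcal{F}_1$ and, independently, $f_2$ is drawn uniformly from $\mathcal{F}_2$. I would partition this event according to whether or not the intermediate values already collide: either $f_1(x) = f_1(x')$, or $f_1(x) \neq f_1(x')$ but $f_2$ maps these two distinct $k$-bit strings to the same $\ell$-bit output.

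Concretely, I would write
\begin{align*}
  \Pr_{f_1, f_2}\bigl[ (f_2 \circ f_1)(x) = (f_2 \circ f_1)(x') \bigr]
  &\leq \Pr_{f_1}\bigl[ f_1(x) = f_1(x') \bigr] \\
  &\quad + \Pr_{f_1, f_2}\bigl[ f_1(x) \neq f_1(x') \text{ and } f_2(f_1(x)) = f_2(f_1(x')) \bigr].
\end{align*}
The first term is at most $\delta_1$ by the defining property~\eqref{eqn:almost-two-universal} of $\mathcal{F}_1$. For the second term, I would condition on the value of $f_1$: whenever $f_1(x) = y \neq y' = f_1(x')$, the two $k$-bit strings $y$ and $y'$ are distinct, so the probability over the independent choice of $f_2$ that $f_2(y) = f_2(y')$ is at most $\delta_2$ by the defining property of $\mathcal{F}_2$. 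Averaging over the choices of $f_1$ that produce distinct intermediate values, the second term is bounded by $\delta_2$. Adding the two bounds gives $\delta_1 + \delta_2$, as claimed.

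The argument is essentially a bookkeeping exercise, so there is no serious obstacle; the one point requiring a little care is the independence of the draws of $f_1$ and $f_2$ (so that conditioning on $f_1$ leaves $f_2$ uniform on $\mathcal{F}_2$), together with the observation that the event "$f_1(x) \neq f_1(x')$'' is determined by $f_1$ alone, which is exactly what lets one apply the $\delta_2$-bound pointwise before averaging. Since $\mathcal{G}$ is defined as the family of all pairs $(f_1, f_2)$ with the uniform distribution, this independence holds by construction.
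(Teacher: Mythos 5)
Your argument is correct: the union bound over the two disjoint ways a collision of $f_2 \circ f_1$ can arise (intermediate collision under $f_1$, or distinct intermediate values collided by $f_2$), together with the independence of the two draws, gives exactly $\delta_1 + \delta_2$. Note that the paper does not prove this lemma at all --- it simply cites Theorem~5.4 of Stinson~\cite{stinson94} --- so there is no in-paper proof to compare against; your argument is the standard one and matches Stinson's. The one point you rightly flag, that $\mathcal{G}$ must be understood as the indexed family over all pairs $(f_1, f_2)$ (a multiset) so that a uniform draw from $\mathcal{G}$ is the same as independent uniform draws of $f_1$ and $f_2$, is the only place where care is needed, and you handle it correctly.
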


Combining the general results on $\delta$-almost two-universal
hashing of Section~\ref{sec:proof} with the explicit
constructions described above, we obtain the following statements.

If we do not care about $s$, we may choose a two-universal family of
hash functions and recover a result by Renner~\cite{renner05}:
\begin{theorem}
  \label{thm:two-universal}
  There exists a family of hash functions from $\{0, 1\}^{\times n}$ to $\{0,
  1\}^{\times \ell}$ satisfying
  \begin{align*}
    s = n \;\; \textrm{and} \;\; e \leq \eps + \frac{1}{2}
    \sqrt{2^{\ell - \chmineps{X}{E}{\rho}}} \quad \textrm{for any}\
    \eps \geq 0 .
  \end{align*}
\end{theorem}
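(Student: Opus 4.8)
The plan is to directly instantiate the general $\delta$-almost two-universal machinery of Section~\ref{sec:proof} with the explicit two-universal family from~\eqref{eq:multconstruction}, so that essentially no new work is required beyond bookkeeping of parameters. First I would recall that the family $\cF = \{f_\alpha\}_{\alpha \in \{0,1\}^n}$ defined in~\eqref{eq:multconstruction} is two-universal (i.e.\ $\delta$-almost two-universal with $\delta = 2^{-\ell}$, hence $\delta \le 2^{-\ell}$), as verified immediately after that equation via the bijectivity of $\alpha \mapsto (x - x') \cdot \alpha$ in $\GF(2^n)$. Since this family is indexed by $\alpha \in \{0,1\}^{\times n}$, we have $\abs{\cF} = 2^n$ and therefore the seed length is $s = \log \abs{\cF} = n$, which gives the first claimed equality.

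Next I would invoke Theorem~\ref{thm:two-universal-hashing}: because $\cF$ is two-universal, for the CQ-state $\rhoXE$ describing the source and side information and the post-extraction state $\idx{\rho}{ZEF}$ from~\eqref{eqn:after-extraction-state}, we get for every $\eps \ge 0$ the bound
\begin{align*}
  \duni{Z}{FE}{\rho} \leq \eps + \frac{1}{2} \sqrt{2^{\ell - \chmineps{X}{E}{\rho}}} \, .
\end{align*}
Identifying $e := \duni{Z}{FE}{\rho}$ as in the parameter list~a)--c) of Section~\ref{sec:constructions}, this is exactly the asserted inequality for $e$. Combining the two observations yields the statement: the family~\eqref{eq:multconstruction} simultaneously achieves $s = n$ and the stated error bound, with the domain $\{0,1\}^{\times n}$ and range $\{0,1\}^{\times \ell}$ matching those in the theorem.

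There is no real obstacle here — the theorem is a corollary whose content is entirely in the earlier, harder results (Lemma~\ref{lemma:coll-bound}, Lemma~\ref{lemma:privacy-amp}, and Theorem~\ref{thm:two-universal-hashing}). The only point requiring a word of care is that Theorem~\ref{thm:two-universal-hashing} bounds the distance from uniform conditioned on $F$ \emph{and} $E$ (the strong-extractor regime where the chosen $f$ is published), which is precisely the quantity $e$ in the parameter list, so no further averaging argument such as~\eqref{eqn:average-bound} is needed. Thus the proof is a two-line citation: apply the two-universality of~\eqref{eq:multconstruction} together with Theorem~\ref{thm:two-universal-hashing}.
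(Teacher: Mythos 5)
Your proposal is correct and matches the paper's own proof, which is exactly the two-line instantiation you describe: apply Theorem~\ref{thm:two-universal-hashing} to the two-universal family of~\eqref{eq:multconstruction}, giving $s = \log\abs{\cF} = n$ and the stated bound on $e = \duni{Z}{FE}{}$. Your added remarks on why no averaging step~\eqref{eqn:average-bound} is needed are accurate and consistent with the paper's definition of $e$ in Section~\ref{sec:constructions}.
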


\begin{proof}
  We apply Theorem~\ref{thm:two-universal-hashing} using the
  two-universal family constructed in~\eqref{eq:multconstruction}, which yields $s = \log |\mathcal{F}| = n$.
\end{proof}

We now show that we can choose a family of hash functions such that
$s$ is proportional to the key length $\ell$ instead of the
input string length $n$.
\begin{theorem}
  \label{thm:almost-two-universal}
  There exists a family of hash functions from $\{0, 1\}^{\times n}$
  to $\{0, 1\}^{\times \ell}$ satisfying
  \begin{align*}
    s &= 2 \lfloor \ell + \log(n/\ell)
+ \log(1/\eps^2) - 1 \rfloor  \quad \textrm{and}\\
    e &\leq 3\eps + \frac{1}{2}
    \sqrt{2^{\ell - \chmineps{X}{E}{\rho}  +
        \log (\frac{2}{\eps^2} + 1) }} \quad \textrm{for any}\
    \eps > 0 .
  \end{align*}
\end{theorem}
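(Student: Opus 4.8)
The plan is to obtain the required family by composing the two explicit constructions reviewed above: first the polynomial family \eqref{eq:polyconstruction}, which is cheap in seed length but only $\delta$-almost two-universal with a comparatively large $\delta$, and then the two-universal family \eqref{eq:multconstruction} acting on the short intermediate string. Concretely, I would set
\begin{align*}
  k := \lfloor \ell + \log(n/\ell) + \log(1/\eps^2) - 1 \rfloor
\end{align*}
and work over $\GF(2^k)$. Let $\mathcal{F}_1$ be the polynomial family \eqref{eq:polyconstruction} with $\mathbb{F} = \GF(2^k)$ and $r := \lceil n/k \rceil$, where an $n$-bit input is zero-padded to an element of $\mathbb{F}^r \cong \{0,1\}^{\times kr}$; since restricting a $\delta$-almost two-universal family to a subset of its domain preserves the property, $\mathcal{F}_1$ is $\delta_1$-almost two-universal from $\{0,1\}^{\times n}$ to $\{0,1\}^{\times k}$ with $\delta_1 = (r-1)/2^k < (n/k)\,2^{-k}$ and seed length $\log\abs{\mathcal{F}_1} = k$. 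Let $\mathcal{F}_2$ be the two-universal family \eqref{eq:multconstruction} over $\GF(2^k)$, mapping $\{0,1\}^{\times k}$ to $\{0,1\}^{\times \ell}$, with $\delta_2 = 2^{-\ell}$ and seed length $\log\abs{\mathcal{F}_2} = k$ (this uses $\ell \leq k$, which holds whenever $\log(n/\ell) + \log(1/\eps^2) \geq 1$, in particular for any $\eps \leq 1/\sqrt{2}$ since $\ell \leq n$). By Lemma~\ref{lemma:concatenation}, the family $\mathcal{G}$ of all compositions $f_2 \circ f_1$ is $\delta$-almost two-universal from $\{0,1\}^{\times n}$ to $\{0,1\}^{\times \ell}$ with $\delta = \delta_1 + \delta_2$ and seed length $s = \log\abs{\mathcal{F}_1} + \log\abs{\mathcal{F}_2} = 2k$, the claimed value.

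Next I would control the collision excess $2^\ell\delta - 1$ that enters Theorem~\ref{thm:almost-two-universal-hashing}. Because $\mathcal{F}_2$ is two-universal we have $2^\ell \delta_2 = 1$, so
\begin{align*}
  2^\ell \delta - 1 = 2^\ell \delta_1 < 2^{\ell - k}\,(n/k) \, .
\end{align*}
From the definition of $k$ one has $k > \ell + \log(n/\ell) + \log(1/\eps^2) - 2$, hence $2^{\ell - k} < 4\eps^2\,\ell/n$, and therefore $2^{\ell - k}(n/k) < 4\eps^2\,\ell/k \leq 4\eps^2$ using $k \geq \ell$. Thus $2^\ell\delta - 1 \leq 4\eps^2$.

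Finally I would invoke Theorem~\ref{thm:almost-two-universal-hashing} for the family $\mathcal{G}$ with smoothing parameter $\bar\eps = \eps$, giving
\begin{align*}
  e = \duni{Z}{FE}{\rho} \leq 2\eps + \tfrac{1}{2}\sqrt{(2^\ell\delta - 1) + 2^{\ell - \chmineps{X}{E}{\rho} + \log(\frac{2}{\eps^2}+1)}} \, ,
\end{align*}
and then split the square root via $\sqrt{a+b} \leq \sqrt{a} + \sqrt{b}$ and bound $\tfrac12\sqrt{2^\ell\delta - 1} \leq \tfrac12\sqrt{4\eps^2} = \eps$ using the estimate above, which yields $e \leq 3\eps + \tfrac{1}{2}\sqrt{2^{\ell - \chmineps{X}{E}{\rho} + \log(\frac{2}{\eps^2}+1)}}$, as claimed.

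The facts about the two building blocks and the closing square-root manipulation are routine (they are already available in the excerpt). The step that demands care is the arithmetic around the floor: I need $\lfloor \ell + \log(n/\ell) + \log(1/\eps^2) - 1\rfloor$ to be simultaneously large enough to force $2^{\ell-k}(n/k) \leq 4\eps^2$ and at least $\ell$ (so that $\mathcal{F}_2$ exists and $n/k \leq n/\ell$), and it is precisely this balancing that pins down the constants ``$-1$'' in the expression for $s$ and ``$3\eps$'' in the bound on $e$; the mild side condition $\log(n/\ell) + \log(1/\eps^2) \geq 1$ is what makes the second requirement automatic in the regime of interest.
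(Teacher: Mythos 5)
Your proof is correct and follows essentially the same route as the paper's: polynomial hashing over $\GF(2^k)$ composed with the multiplicative two-universal family, concatenated via Lemma~\ref{lemma:concatenation}, with the resulting $\delta \leq (1+4\eps^2)2^{-\ell}$ inserted into Theorem~\ref{thm:almost-two-universal-hashing} at $\bar{\eps}=\eps$. If anything you are more careful than the paper, whose proof defines $k$ without the ``$-1$'' inside the floor (inconsistently with the theorem statement) while relying on the bound $k \geq \ell + \log(n/\ell) + \log(1/\eps^2) - 2$ that matches your choice, and which omits the side condition $k \geq \ell$ that you correctly identify and dispose of.
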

\begin{proof}
  We use the standard classical way of concatenating two hash
  functions to obtain the required
  parameters~\cite{srinivasan99}. For the first function, we set
  $k = \lfloor \ell + \log(n/\ell) + \log(1/\eps^2) \rfloor$ and use the field
  $\mathbb{F} = \GF(2^k)$ in the polynomial-based hash construction
  from~\eqref{eq:polyconstruction}. Interpreting the $n$-bit strings
  as $r=\lceil n/k \rceil$ blocks of $k$ bits, the first hash function
  maps from $\{0,1\}^{\times n}$ to $\{0,1\}^{\times k}$ and requires
  a $k$-bit seed. Then, regular two-universal hashing
  from~\eqref{eq:multconstruction} with a seed length of again $k$
  bits is used to map from $\{0,1\}^{\times k}$ to $\{0,1\}^{\times
    \ell}$.  The two seed lengths add up to
  $s = 2k = 2 \lfloor \ell + \log(n/\ell) + \log(1/\eps^2) \rfloor$.

  Polynomial-based hashing achieves a $\delta_1$ of at most
  \begin{align*}
    \frac{r-1}{2^k} \leq \frac{n}{k\, 2^k} \leq \frac{4\,\ell\,
      \eps^2}{k\, 2^\ell} \leq \frac{4 \eps^2}{2^{\ell}}
  \end{align*}
  by the choice of $r$ and the fact that $k \geq \ell + \log(n/\ell) +
  \log(1/\eps^2) - 2$.  Together with the $\delta_2 \leq 2^{-\ell}$
  from the two-universal hashing, we get from
  Lemma~\ref{lemma:concatenation} that this construction yields a
  $\delta_1+\delta_2 \leq \frac{1 + 4\eps^2}{2^\ell}$-almost
  two-universal family of hash functions.  Inserting this expression
  for $\delta$ into Theorem~\ref{thm:almost-two-universal-hashing} and
  setting $\bar{\eps} = \eps$ yields
  \begin{align*}
    e \leq 2\eps + \frac{1}{2} \sqrt{2^{\ell - \chmineps{X}{E}{\rho} +
        \log(\frac{2}{\eps^2} + 1)} + 4\eps^2} \, .
  \end{align*}
  The theorem then follows as an upper bound to this expression.
\end{proof}

\section*{Acknowledgment}

We thank Roger Colbeck and Johan {\AA}berg for useful discussions and
comments. MT and RR acknowledge support from the Swiss National
Science Foundation (grant No. 200021-119868). CS is supported by a NWO
VICI project.

\appendix

\section{Technical Results}
\label{app:tech}

The first lemma is an application of Uhlmann's
theorem~\cite{uhlmann85} to the purified distance\footnote{The main
advantage of the purified distance over the trace distance is that
we can always find extensions and purifications without increasing
the distance.}  (see~\cite{tomamichel09} for a proof).
\begin{lemma}
\label{lemma:uhlmann-pure}
Let $\rho, \tau \in \subnormstates{\h}$, $\h' \iso \h$ and $\varphi
\in \h \kron \h'$ be a purification of $\rho$. Then, there exists a
purification $\vartheta \in \h \kron \h'$ of $\tau$ with $P(\rho,
\tau) = P(\varphi, \vartheta)$.
\end{lemma}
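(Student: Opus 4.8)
The plan is to reduce the statement to the textbook version of Uhlmann's theorem for normalized states, via two elementary observations: a rescaling that turns the subnormalized states into normalized ones, and the fact that the generalized fidelity $\bar{F}$ of subnormalized states differs from the ordinary Uhlmann fidelity only by the ``deficiency'' term $\sqrt{(1 - \tr\,\rho)(1-\tr\,\tau)}$, which depends on $\rho$ and $\tau$ only through their traces.

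First I would record the elementary identity that for (not necessarily normalized) vectors $\ket{\varphi}, \ket{\vartheta}$, writing $\varphi = \proj{\varphi}{\varphi}$ and $\vartheta = \proj{\vartheta}{\vartheta}$ for the associated rank-one operators, one has $\tr\,\big|\sqrt{\varphi}\sqrt{\vartheta}\big| = |\braket{\varphi}{\vartheta}|$, and hence $\bar{F}(\varphi,\vartheta) = |\braket{\varphi}{\vartheta}| + \sqrt{(1 - \braket{\varphi}{\varphi})(1 - \braket{\vartheta}{\vartheta})}$. If $\varphi$ is a purification of $\rho$ and $\vartheta$ a purification of $\tau$, then automatically $\braket{\varphi}{\varphi} = \tr\,\varphi = \tr\,\rho$ and $\braket{\vartheta}{\vartheta} = \tr\,\tau$. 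Since $P(\rho,\tau) = P(\varphi,\vartheta)$ is, by the definition of the purified distance, equivalent to $\bar{F}(\rho,\tau) = \bar{F}(\varphi,\vartheta)$, the claim thus reduces to finding a purification $\vartheta$ of $\tau$ with $|\braket{\varphi}{\vartheta}| = \tr\,|\sqrt{\rho}\sqrt{\tau}|$.

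To obtain such a $\vartheta$, set $p := \tr\,\rho$, $q := \tr\,\tau$ (both strictly positive since $\rho,\tau \in \subnormstates{\h}$) and pass to the normalized states $\rho_1 := \rho/p$ and $\tau_1 := \tau/q$. Then $\ket{\varphi_1} := \ket{\varphi}/\sqrt{p}$ is a unit-vector purification of $\rho_1$ in $\h \kron \h'$; since $\h' \iso \h$ carries enough dimensions, the standard Uhlmann theorem~\cite{uhlmann85} supplies a unit-vector purification $\ket{\vartheta_1} \in \h \kron \h'$ of $\tau_1$ with $\braket{\varphi_1}{\vartheta_1} = \tr\,|\sqrt{\rho_1}\sqrt{\tau_1}| \geq 0$. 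Rescaling back, $\ket{\vartheta} := \sqrt{q}\,\ket{\vartheta_1}$ is a subnormalized purification of $\tau$, and $|\braket{\varphi}{\vartheta}| = \sqrt{pq}\,\tr\,|\sqrt{\rho_1}\sqrt{\tau_1}| = \tr\,|\sqrt{\rho}\sqrt{\tau}|$. Combining with the previous step gives $\bar{F}(\varphi,\vartheta) = \tr\,|\sqrt{\rho}\sqrt{\tau}| + \sqrt{(1-p)(1-q)} = \bar{F}(\rho,\tau)$, hence $P(\varphi,\vartheta) = P(\rho,\tau)$.

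There is no genuine obstacle here; the only points requiring mild care are checking that $\h'$ is large enough to support a purification of $\tau_1$ (immediate from $\h' \iso \h$) and keeping the trace normalizations straight so that $\vartheta$ lands in $\subnormstates{\h \kron \h'}$ rather than being under- or over-normalized. An alternative route avoids the rescaling by embedding each subnormalized state $\rho$ as the normalized state $\rho \oplus (1 - \tr\,\rho)$ on $\h \oplus \mathbb{C}$, for which $\bar{F}$ becomes literally the ordinary fidelity; applying normalized Uhlmann on the enlarged space and then checking that the purification can be taken in the appropriate block form works too, but involves a bit more bookkeeping about the structure of the purifying vector, so I would favour the rescaling argument.
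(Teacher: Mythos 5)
Your argument is correct and complete. Note that the paper does not actually prove this lemma in its own text: it is stated with a pointer to \cite{tomamichel09}, where the proof goes through the direct-sum embedding $\rho \mapsto \rho \oplus (1-\tr\,\rho)$ on $\h \oplus \mathbb{C}$ --- precisely the alternative route you sketch at the end --- under which the generalized fidelity becomes the ordinary fidelity of normalized states and the standard Uhlmann theorem applies directly. Your rescaling route is a legitimate and arguably lighter substitute: the rank-one identity $\tr\,|\sqrt{\varphi}\sqrt{\vartheta}| = |\braket{\varphi}{\vartheta}|$ correctly reduces the statement to producing a purification of $\tau$ with overlap $\tr\,|\sqrt{\rho}\sqrt{\tau}|$ against $\ket{\varphi}$; the strict positivity of $p=\tr\,\rho$ and $q=\tr\,\tau$ guaranteed by the definition of $\subnormstates{\h}$ makes the normalization legitimate; $\h' \iso \h$ provides enough room for the standard Uhlmann theorem applied to $\rho/p$ and $\tau/q$; and the homogeneity $\tr\,|\sqrt{\rho}\sqrt{\tau}| = \sqrt{pq}\,\tr\,|\sqrt{\rho_1}\sqrt{\tau_1}|$ closes the computation. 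The only cost relative to the embedding proof is that your version leans on the explicit evaluation of $\bar{F}$ on rank-one operators, whereas the embedding proof obtains the matching of fidelities structurally; both are sound, and your bookkeeping of the trace normalizations is correct throughout.
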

\begin{corollary}
\label{cor:uhlmann-ext}
Let $\rho, \tau \in \subnormstates{\h}$ and $\rhob \in
\subnormstates{\h \kron \h'}$ be an extension of $\rho$. Then, there
exists an extension $\taub \in \subnormstates{\h \kron \h'}$ of $\tau$
with $P(\rho, \tau) = P(\rhob, \taub)$.
\end{corollary}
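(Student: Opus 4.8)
The plan is to deduce the statement for extensions from the corresponding statement for purifications, Lemma~\ref{lemma:uhlmann-pure}, together with the monotonicity of the purified distance under partial traces. Observe first that the partial trace is a trace-preserving (in particular trace non-increasing) CPM, so that~\eqref{eqn:purified-monotone} applies to it; this is the only property of $P$ that the argument uses beyond Uhlmann's theorem.

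First I would purify the given extension. Let $\hC \iso \h \kron \h'$ be an auxiliary Hilbert space (so that $\dim \hC \geq \rank \rhob$ and $\dim \hC \geq \rank \tau$), and let $\varphi \in \h \kron \h' \kron \hC$ be a purification of $\rhob$, so that $\rhob = \ptrace{\hC}{\proj{\varphi}{\varphi}}$. Since $\ptrace{\h'}{\rhob} = \rho$, the vector $\varphi$ is also a purification of $\rho$, now regarded as a state on $\h$ with purifying system $\h' \kron \hC$. Applying Uhlmann's theorem in the form of Lemma~\ref{lemma:uhlmann-pure} (used, as is standard, with a purifying register of dimension at least the rank of the state rather than the minimal one stated there) yields a purification $\vartheta \in \h \kron \h' \kron \hC$ of $\tau$ with $\dpure{\varphi}{\vartheta} = \dpure{\rho}{\tau}$.

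Now define $\taub := \ptrace{\hC}{\proj{\vartheta}{\vartheta}} \in \subnormstates{\h \kron \h'}$. Tracing out $\h'$ in addition gives $\ptrace{\h'}{\taub} = \tau$, so $\taub$ is an extension of $\tau$. It remains to verify $\dpure{\rhob}{\taub} = \dpure{\rho}{\tau}$. For the upper bound, apply~\eqref{eqn:purified-monotone} with the partial trace over $\hC$ to the pair $\proj{\varphi}{\varphi}$, $\proj{\vartheta}{\vartheta}$, obtaining $\dpure{\rhob}{\taub} \leq \dpure{\varphi}{\vartheta} = \dpure{\rho}{\tau}$. For the matching lower bound, apply~\eqref{eqn:purified-monotone} with the partial trace over $\h'$ to the pair $\rhob$, $\taub$, whose $\h$-marginals are $\rho$ and $\tau$ respectively, obtaining $\dpure{\rho}{\tau} \leq \dpure{\rhob}{\taub}$. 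Combining the two inequalities gives the corollary.

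I do not expect a substantial obstacle here: this is a short consequence of Uhlmann's theorem, and the two monotonicity inequalities pinch the distance exactly. The only point requiring a little care is the bookkeeping of Hilbert spaces — choosing $\hC$ large enough that $\varphi$ is a genuine purification of $\rhob$ and that a purification of $\tau$ exists on the same space, and invoking Uhlmann's theorem with the composite purifying register $\h' \kron \hC$ instead of the minimal register isomorphic to $\h$ that appears in Lemma~\ref{lemma:uhlmann-pure}, which is the usual and harmless strengthening of that statement.
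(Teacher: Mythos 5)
Your proof is correct and follows the route the paper intends: the corollary is stated as an immediate consequence of Lemma~\ref{lemma:uhlmann-pure}, obtained exactly as you do by purifying the given extension, invoking Uhlmann's theorem on the enlarged purifying register, and pinching the distance with two applications of the monotonicity~\eqref{eqn:purified-monotone} under partial trace. The bookkeeping points you flag (size of the purifying space, extending Lemma~\ref{lemma:uhlmann-pure} to a non-minimal purifying register) are handled correctly.
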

In the following, we apply this result to an $\eps$-ball of pure states,
$\epsballpure{\rho} := \{ \rhot \in \epsball{\rho} :
\rank\,\rhot = 1 \}$.

\begin{corollary}
\label{cor:pure-ball}
Let $\rho \in \subnormstates{\h}$ and $\varphi \in \h
  \kron \h'$ be a purification of $\rho$. Then,
\begin{equation*}
  \epsball{\rho} \supseteq \{ \rhot \in \subnormstates{\h} : \exists\, \phit \in \epsballpure{\varphi}
  \ \textrm{s.t.}\ \rhot = \ptr{\h'}{\phit} \}
\end{equation*}
and equality holds if the Hilbert space dimensions satisfy $\dim \h'
\geq \dim \h$.
\end{corollary}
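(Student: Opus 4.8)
The plan is to prove the two directions separately. The inclusion $\supseteq$ holds with no dimension hypothesis and is a one-line consequence of the monotonicity of the purified distance: given $\rhot = \ptr{\h'}{\phit}$ with $\phit \in \epsballpure{\varphi}$, the partial trace over $\h'$ is a trace-preserving (hence trace non-increasing) CPM, so $\rhot \in \subnormstates{\h}$, and since $\varphi$ purifies $\rho$ we have $\ptr{\h'}{\varphi} = \rho$; then~\eqref{eqn:purified-monotone} applied to this CPM gives $\dpure{\rho}{\rhot} = \Dpure{\ptr{\h'}{\varphi}}{\ptr{\h'}{\phit}} \leq \dpure{\varphi}{\phit} \leq \eps$, i.e.\ $\rhot \in \epsball{\rho}$.

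For the equality it remains to show $\subseteq$ under the hypothesis $\dim\h' \geq \dim\h$. Given $\rhot \in \epsball{\rho}$, I must produce a rank-one $\phit \in \subnormstates{\h\kron\h'}$ with $\ptr{\h'}{\phit} = \rhot$ and $\dpure{\varphi}{\phit} \leq \eps$. The idea is to invoke Uhlmann's theorem in the form of Lemma~\ref{lemma:uhlmann-pure} after first cutting $\h'$ down to a copy of $\h$. Concretely, the marginal of $\varphi$ on $\h'$ has rank $\rank\,\rho \leq \dim\h \leq \dim\h'$, so its support sits inside a subspace $\h'' \subseteq \h'$ with $\dim\h'' = \dim\h$, i.e.\ $\h'' \iso \h$, and $\varphi \in \h \kron \h''$ is a purification of $\rho$. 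Lemma~\ref{lemma:uhlmann-pure} then yields a purification $\phit \in \h\kron\h''$ of $\rhot$ with $\dpure{\varphi}{\phit} = \dpure{\rho}{\rhot} \leq \eps$. Regarding $\phit$ as an element of $\h\kron\h'$ (it is still a sub-normalized, rank-one state, and tracing out the complement of $\h''$ inside $\h'$ changes nothing, so $\ptr{\h'}{\phit} = \rhot$), we get $\phit \in \epsballpure{\varphi}$, which is exactly what was needed.

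The only step that is not completely routine is this dimension bookkeeping in the second paragraph: one has to check that $\h'$ genuinely contains a subspace isomorphic to $\h$ on which $\varphi$ is supported — so that Lemma~\ref{lemma:uhlmann-pure}, as stated for $\h' \iso \h$, applies — and that a purification of $\rhot$ still fits inside $\h\kron\h'$. Both points follow from the elementary bounds $\rank\,\rho \leq \dim\h$ and $\rank\,\rhot \leq \dim\h \leq \dim\h'$. Everything else is a direct application of the monotonicity inequality~\eqref{eqn:purified-monotone} and of Lemma~\ref{lemma:uhlmann-pure}.
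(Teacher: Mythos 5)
Your proposal is correct and follows exactly the route the paper intends: the paper states this as an unproved corollary of Lemma~\ref{lemma:uhlmann-pure}, with the $\supseteq$ inclusion coming from monotonicity of the purified distance under the partial trace~\eqref{eqn:purified-monotone} and the reverse inclusion from Uhlmann's theorem once $\dim\h'\geq\dim\h$ guarantees room for a purification of $\rhot$. Your dimension bookkeeping (restricting to a subspace $\h''\iso\h$ supporting the $\h'$-marginal of $\varphi$, then re-embedding) is the right way to make the application of Lemma~\ref{lemma:uhlmann-pure} literal, and the embedding leaves both the generalized fidelity and the partial trace unchanged, so there is no gap.
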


The following lemma establishes a fundamental property of pure
bipartite states, namely that every linear operator applied to
one subsystem has a dual on the other subsystem, such that the
resulting pure state is the same. 
\begin{lemma}
\label{lemma:mirror}
Let $\phiAB \in \posops{\hAB}$ be pure, $\rhoA = \ptr{B}\,
\phiAB$, $\rhoB = \ptr{A}\,\phiAB$ and let $X \in
\linops{\hA}$ be an operator with support and image in $\supp{\rhoA}$. Then,
\begin{align*}
  \big(X \kron \idB \big) \keti{\phi}{AB} = \big( \idA \kron (
  \rhoB^{\nicefrac{1}{2}} X^T \rhoB^{-\nicefrac{1}{2}} ) \big)
  \keti{\phi}{AB} \, ,
\end{align*}
where the transpose is taken with regard to the Schmidt basis of
$\phiAB$.
\end{lemma}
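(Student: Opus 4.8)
The plan is to verify the identity by expanding both sides in a Schmidt basis of $\phiAB$; this is essentially the standard ``transpose trick'', adapted to a general (non-flat) Schmidt spectrum. Fix a Schmidt decomposition $\keti{\phi}{AB} = \sum_i \sqrt{\lambda_i}\, \keti{i}{A} \kron \keti{i}{B}$, where the sum ranges over the indices with $\lambda_i > 0$ and $\{\keti{i}{A}\}$, $\{\keti{i}{B}\}$ are the corresponding orthonormal Schmidt bases, spanning $\supp{\rhoA}$ and $\supp{\rhoB}$ respectively. In these bases $\rhoB = \sum_i \lambda_i \proji{i}{i}{B}$, so $\rhoB^{\nicefrac{1}{2}}$ and the generalized inverse $\rhoB^{-\nicefrac{1}{2}}$ are diagonal with entries $\lambda_i^{\pm\nicefrac{1}{2}}$, and by assumption the transpose in the statement is the one with respect to the Schmidt basis.

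Since both sides of the claimed identity are linear in $X$, and $X$ has support and image inside $\supp{\rhoA}$, it suffices to treat $X = \proji{k}{l}{A}$ for indices $k, l$ with $\lambda_k, \lambda_l > 0$. For such $X$ the left-hand side collapses to $(X \kron \idB)\keti{\phi}{AB} = \sqrt{\lambda_l}\, \keti{k}{A} \kron \keti{l}{B}$. On the right-hand side, $X^T = \proji{l}{k}{B}$, hence $\rhoB^{\nicefrac{1}{2}} X^T \rhoB^{-\nicefrac{1}{2}} = \sqrt{\lambda_l/\lambda_k}\, \proji{l}{k}{B}$; applying $\idA \kron \big( \sqrt{\lambda_l/\lambda_k}\, \proji{l}{k}{B} \big)$ to $\keti{\phi}{AB}$ keeps only the $i = k$ term of the Schmidt sum and yields $\sqrt{\lambda_k} \cdot \sqrt{\lambda_l/\lambda_k}\, \keti{k}{A} \kron \keti{l}{B} = \sqrt{\lambda_l}\, \keti{k}{A} \kron \keti{l}{B}$, which matches. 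Extending linearly over all $X$ supported on $\supp{\rhoA}$ finishes the proof.

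I do not expect a genuine obstacle; the computation is short. The only points deserving care are the transpose convention — it must be consistently the Schmidt basis of $\phiAB$, which is exactly what makes $X^T = \proji{l}{k}{B}$ the correct object on $\hB$ — and the observation that the support/image hypothesis on $X$ confines the whole calculation to the index set $\{ i : \lambda_i > 0 \}$, so that the generalized inverse $\rhoB^{-\nicefrac{1}{2}}$ is harmless. Alternatively one can avoid the reduction to rank-one operators and argue in matrix elements $X_{ij} = \bracketi{i}{X}{j}{A}$ directly: the left-hand side equals $\sum_{i,j} \sqrt{\lambda_i}\, X_{ji}\, \keti{j}{A} \kron \keti{i}{B}$, and a short computation using $\rhoB^{\pm\nicefrac{1}{2}}\keti{i}{B} = \lambda_i^{\pm\nicefrac{1}{2}}\keti{i}{B}$ and $(X^T)_{ij} = X_{ji}$ shows the right-hand side equals $\sum_{i,j} \sqrt{\lambda_j}\, X_{ij}\, \keti{i}{A} \kron \keti{j}{B}$, which becomes the same expression after relabelling the two summation indices.
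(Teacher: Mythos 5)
Your proof is correct and follows essentially the same route as the paper's: a direct verification in the Schmidt basis using the transpose trick, with the support/image hypothesis confining everything to the nonzero Schmidt indices. The paper merely packages the computation slightly differently, by first passing to the unnormalized maximally entangled state $\big(\idA \kron \rhoB^{-\nicefrac{1}{2}}\big)\keti{\phi}{AB} = \sum_i \keti{i}{A}\kron\keti{i}{B}$ and applying $(X \kron \idB)\keti{\gamma}{AB} = (\idA \kron X^T)\keti{\gamma}{AB}$ there, whereas you verify the identity on rank-one operators and extend by linearity; the two calculations are interchangeable.
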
  
\begin{proof}
We introduce the Schmidt decomposition $\keti{\phi}{AB} = \sum_i
\sqrt{\lambda_i}\, \keti{i}{A} \kron \keti{i}{B}$. Clearly,
$\big( \idA \kron
\rhoB^{-\nicefrac{1}{2}} \big) \keti{\phi}{AB} = \sum_i\,
\keti{i}{A} \kron \keti{i}{B} =: \keti{\gamma}{AB}$ is the
(unnormalized) fully entangled state on the support of $\rhoA$ and
$\rhoB$. It is easy to verify that $(X \kron \idB) \keti{\gamma}{AB}
= (\idA \kron X^T) \keti{\gamma}{AB}$, where the transposed matrix
is given by $X^T = \sum_{i,j} \, \bracketi{i}{X}{j}{A} \,
\proji{j}{i}{B}$.
\end{proof}
\begin{corollary}
\label{cor:mirror-function}
Let $\phiAB \in \posops{\hAB}$ be pure, $\rhoA = \ptr{B}\,\phiAB$,
$\rhoB = \ptr{A}\,\phiAB$ and $f: \mathbb{R}^+ \to \mathbb{R}$ a
real-valued function, then
\begin{align*}
  \big( f(\rhoA) \kron \idB \big) \keti{\phi}{AB} = \big( \idA \kron
  f(\rhoB) \big) \keti{\phi}{AB} \, .
\end{align*}
\end{corollary}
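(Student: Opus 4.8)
The plan is to deduce this directly from Lemma~\ref{lemma:mirror} by choosing $X = f(\rhoA)$. Since $\rhoA$ is positive semi-definite and, by the convention of Section~\ref{sec:smoothentropies}, functions of a Hermitian operator are understood to act as zero on its kernel, the operator $f(\rhoA)$ has both support and image contained in $\supp{\rhoA}$. Hence the hypothesis of Lemma~\ref{lemma:mirror} is satisfied, and the lemma gives
\begin{align*}
  \big( f(\rhoA) \kron \idB \big) \keti{\phi}{AB}
  = \big( \idA \kron ( \rhoB^{\nicefrac{1}{2}}\, f(\rhoA)^{T}\, \rhoB^{-\nicefrac{1}{2}} ) \big) \keti{\phi}{AB} \,,
\end{align*}
where the transpose is taken with respect to the Schmidt basis of $\phiAB$.

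It then remains to identify $\rhoB^{\nicefrac{1}{2}}\, f(\rhoA)^{T}\, \rhoB^{-\nicefrac{1}{2}}$ with $f(\rhoB)$ on $\supp{\rhoB}$. For this I would reintroduce the Schmidt decomposition $\keti{\phi}{AB} = \sum_i \sqrt{\lambda_i}\, \keti{i}{A} \kron \keti{i}{B}$, exactly as in the proof of Lemma~\ref{lemma:mirror}. Then $\rhoA = \sum_i \lambda_i \proji{i}{i}{A}$ and $\rhoB = \sum_i \lambda_i \proji{i}{i}{B}$, so $f(\rhoA) = \sum_i f(\lambda_i)\, \proji{i}{i}{A}$ is diagonal in the Schmidt basis. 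Using the explicit formula $X^{T} = \sum_{i,j} \bracketi{i}{X}{j}{A}\, \proji{j}{i}{B}$ from the proof of Lemma~\ref{lemma:mirror}, diagonality yields $f(\rhoA)^{T} = \sum_i f(\lambda_i)\, \proji{i}{i}{B} = f(\rhoB)$. Since $f(\rhoB)$ commutes with $\rhoB$, and therefore with $\rhoB^{\pm\nicefrac{1}{2}}$, the conjugation by $\rhoB^{\nicefrac{1}{2}}$ acts trivially on $\supp{\rhoB}$, giving $\rhoB^{\nicefrac{1}{2}}\, f(\rhoA)^{T}\, \rhoB^{-\nicefrac{1}{2}} = f(\rhoB)$. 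Combined with the displayed identity this proves the corollary.

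I do not expect a serious obstacle here. The only point requiring a little care is bookkeeping with kernels: one must observe that under the Schmidt-basis identification $\supp{\rhoA}$ and $\supp{\rhoB}$ correspond, that $f(\rhoA)$ vanishes on $\ker{\rhoA}$, and that $\rhoB^{-\nicefrac{1}{2}}$ denotes the generalized inverse supported on $\supp{\rhoB}$, so that all operators in sight live on the same subspace and the cancellation of $\rhoB^{\nicefrac{1}{2}}$ against $\rhoB^{-\nicefrac{1}{2}}$ is exact rather than merely formal. Everything else is an immediate consequence of the spectral correspondence between $\rhoA$ and $\rhoB$ induced by the Schmidt decomposition.
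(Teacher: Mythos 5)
Your proposal is correct and follows exactly the route the paper intends: the corollary is stated without proof as an immediate consequence of Lemma~\ref{lemma:mirror}, obtained by taking $X = f(\rhoA)$ (restricted to $\supp{\rhoA}$) and noting that in the Schmidt basis $f(\rhoA)^T = f(\rhoB)$ commutes with $\rhoB^{\pm\nicefrac{1}{2}}$. Your kernel bookkeeping is the right care to take and introduces no gap, since $\keti{\phi}{AB}$ has no component outside $\supp{\rhoA}\kron\supp{\rhoB}$.
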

We define the notion of a \emph{dual projector} with regard to a pure
state using the following corollary:
\begin{corollary}
\label{cor:mirror-project}
Let $\keti{\phi}{AB} \in \hAB$ be pure, $\rhoA = \ptr{B}\,\phiAB$,
$\rhoB = \ptr{A}\,\phiAB$ and let $\PiA \in \posops{\hA}$ be a
projector in $\supp{\rhoA}$. Then, there exists a dual projector
$\PiB$ on $\hB$ such that
\begin{align*}
  \big(\PiA \kron \rhoB^{-\nicefrac{1}{2}}\big) \keti{\phi}{AB} =
  \big(\rhoA^{-\nicefrac{1}{2}} \kron \PiB\big) \keti{\phi}{AB} \, .
\end{align*}
\end{corollary}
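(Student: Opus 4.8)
The plan is to take for $\PiB$ the transpose $\PiA^T$ of $\PiA$ with respect to the Schmidt basis of $\phiAB$ — the same transpose that appears in Lemma~\ref{lemma:mirror} — viewed as an operator on $\hB$ via the identification of Schmidt bases. The first thing I would check is that this is a legitimate projector satisfying the support hypothesis needed later. Idempotency is immediate, $(\PiA^T)^2 = (\PiA^2)^T = \PiA^T$; Hermiticity follows by taking the complex conjugate of $\PiA^\dagger = \PiA$, which gives $\overline{\PiA} = \PiA^T$ and hence $(\PiA^T)^\dagger = \overline{\PiA} = \PiA^T$. Since the Schmidt basis diagonalises both marginals with the \emph{same} eigenvalues, transposition maps operators with image in $\supp{\rhoA}$ to operators with image in $\supp{\rhoB}$; in particular the image (and support) of $\PiB = \PiA^T$ lies in $\supp{\rhoB}$, so that $\rhoB^{-\nicefrac{1}{2}}\rhoB^{\nicefrac{1}{2}}\PiB = \PiB$ (the generalised inverse makes $\rhoB^{-\nicefrac{1}{2}}\rhoB^{\nicefrac{1}{2}}$ the projector onto $\supp{\rhoB}$).

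With this choice, the identity is proven by chaining Lemma~\ref{lemma:mirror} and Corollary~\ref{cor:mirror-function}. Since $\PiA$ and $\rhoB^{-\nicefrac{1}{2}}$ act on different tensor factors,
\begin{align*}
 \big(\PiA \kron \rhoB^{-\nicefrac{1}{2}}\big) \keti{\phi}{AB}
 = \big(\idA \kron \rhoB^{-\nicefrac{1}{2}}\big)\big(\PiA \kron \idB\big) \keti{\phi}{AB} \, .
\end{align*}
Applying Lemma~\ref{lemma:mirror} with $X = \PiA$ rewrites $(\PiA \kron \idB)\keti{\phi}{AB}$ as $\big(\idA \kron \rhoB^{\nicefrac{1}{2}} \PiB \rhoB^{-\nicefrac{1}{2}}\big)\keti{\phi}{AB}$, so the right-hand side equals $\big(\idA \kron \rhoB^{-\nicefrac{1}{2}}\rhoB^{\nicefrac{1}{2}} \PiB \rhoB^{-\nicefrac{1}{2}}\big)\keti{\phi}{AB} = \big(\idA \kron \PiB \rhoB^{-\nicefrac{1}{2}}\big)\keti{\phi}{AB}$ by the support remark above. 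Pulling $\PiB$ out and using Corollary~\ref{cor:mirror-function} with $f(x) = x^{-\nicefrac{1}{2}}$ to replace $\idA \kron \rhoB^{-\nicefrac{1}{2}}$ by $\rhoA^{-\nicefrac{1}{2}} \kron \idB$ then gives $\big(\idA \kron \PiB\big)\big(\rhoA^{-\nicefrac{1}{2}} \kron \idB\big)\keti{\phi}{AB} = \big(\rhoA^{-\nicefrac{1}{2}} \kron \PiB\big)\keti{\phi}{AB}$, which is the claim.

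I do not expect a genuine obstacle here. The only points that require care are the bookkeeping with the generalised inverses — one must make sure every operator appearing in the chain has support inside $\supp{\rhoA}$ (resp.\ $\supp{\rhoB}$), so that $\rhoB^{-\nicefrac{1}{2}}\rhoB^{\nicefrac{1}{2}}$ and $\rhoA^{-\nicefrac{1}{2}}\rhoA^{\nicefrac{1}{2}}$ act as identities where needed — and the verification that transposition with respect to the Schmidt basis preserves both the projector property and the support condition. Both are straightforward from the Schmidt decomposition $\keti{\phi}{AB} = \sum_i \sqrt{\lambda_i}\, \keti{i}{A} \kron \keti{i}{B}$ already used in the proof of Lemma~\ref{lemma:mirror}.
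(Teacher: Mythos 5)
Your proof is correct and follows exactly the route the paper intends: the corollary is stated without proof as a consequence of Lemma~\ref{lemma:mirror} and Corollary~\ref{cor:mirror-function}, and your choice $\PiB = \PiA^T$ (transpose in the Schmidt basis), together with the support bookkeeping that lets $\rhoB^{-\nicefrac{1}{2}}\rhoB^{\nicefrac{1}{2}}$ act as the identity on $\PiB$, is precisely the intended argument. No gaps.
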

%

The next Lemma gives a bound on the purified distance of a state
$\rho$ and a projected state $\Pi \rho \Pi$.
\begin{lemma}
\label{lemma:project-dist}
Let $\rho \in \subnormstates{\h}$ and $\Pi$ a projector on $\h$, then
\begin{align*}
  P \big(\rho, \Pi\rho\Pi \big) \leq \sqrt{2\, \trace{\Pi^\perp
      \rho } - \trace{\Pi^\perp\rho}^2} \, ,
\end{align*}
where $\Pi^\perp = \id - \Pi$ is the complement of $\Pi$ on $\h$.
\end{lemma}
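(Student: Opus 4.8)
The plan is to reduce the statement to pure states by purification and then invoke monotonicity of the purified distance under the partial trace. First I would fix a purification $\ket\psi \in \h \kron \h'$ of $\rho$ (with $\h' \iso \h$) and define the sub-normalized pure state $\ket{\psi'} := (\Pi \kron \id_{\h'}) \ket\psi$, where $\Pi$ acts on $\h$. A one-line computation of the partial trace shows $\tr_{\h'}(\proj{\psi'}{\psi'}) = \Pi \rho \Pi$, so $\proj{\psi'}{\psi'}$ is a (sub-normalized) purification of $\Pi\rho\Pi$. Since the partial trace $\tr_{\h'}$ is a trace-preserving --- hence trace non-increasing --- CPM, the monotonicity~\eqref{eqn:purified-monotone} applied to $\proj\psi\psi$ and $\proj{\psi'}{\psi'}$ gives
\begin{align*}
  P(\rho, \Pi\rho\Pi) \leq P\big( \proj\psi\psi, \proj{\psi'}{\psi'} \big) \, .
\end{align*}

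It then remains to bound the right-hand side by computing the generalized fidelity of the two pure states explicitly. Set $p := \trace{\Pi^\perp \rho}$. Using $\Pi^\perp = \id - \Pi$ and $\Pi^2 = \Pi$ one finds $\braket{\psi'}{\psi'} = \bracket{\psi}{\Pi \kron \id_{\h'}}{\psi} = \trace{\Pi\rho} = \tr\,\rho - p$ and likewise $\abs{\braket{\psi}{\psi'}} = \trace{\Pi\rho} = \tr\,\rho - p$; moreover for pure states $\tr\abs{\sqrt{\proj\psi\psi}\sqrt{\proj{\psi'}{\psi'}}} = \abs{\braket{\psi}{\psi'}}$. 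Hence, recalling the definition of $\bar F$,
\begin{align*}
  \genfid{\proj\psi\psi}{\proj{\psi'}{\psi'}} = (\tr\,\rho - p) + \sqrt{(1 - \tr\,\rho)(1 - \tr\,\rho + p)} \geq (\tr\,\rho - p) + (1 - \tr\,\rho) = 1 - p \, ,
\end{align*}
where the inequality uses $p \geq 0$ and $1 - \tr\,\rho \geq 0$; note also $1 - p \geq 0$ since $p \leq \tr\,\rho \leq 1$.

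Finally, inserting this into $P = \sqrt{1 - \bar{F}^2}$ gives $P(\proj\psi\psi, \proj{\psi'}{\psi'}) \leq \sqrt{1 - (1-p)^2} = \sqrt{2p - p^2}$, which together with the monotonicity bound above is exactly the claimed inequality $P(\rho, \Pi\rho\Pi) \leq \sqrt{2\trace{\Pi^\perp\rho} - \trace{\Pi^\perp\rho}^2}$. I do not expect a genuine obstacle here; the only points requiring a little care are the sub-normalization bookkeeping in the pure-state fidelity computation and the observation that the partial trace is covered by the monotonicity statement~\eqref{eqn:purified-monotone} (being trace preserving). An alternative route that avoids purification is to invoke Uhlmann's theorem to obtain $\tr\abs{\sqrt\rho\sqrt{\Pi\rho\Pi}} \geq \trace{\Pi\rho}$ directly and then argue as above, but the purification argument is self-contained within the tools already developed.
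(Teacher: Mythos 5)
Your proof is correct and rests on the same key estimate as the paper's, namely the generalized-fidelity bound $\genfid{\rho}{\Pi\rho\Pi} \geq \trace{\Pi\rho} + 1 - \tr\,\rho = 1 - \trace{\Pi^\perp\rho}$ followed by $P = \sqrt{1-\bar{F}^2}$. The only difference is that you justify the fidelity term $\tr\abs{\sqrt{\rho}\sqrt{\Pi\rho\Pi}} \geq \trace{\Pi\rho}$ explicitly via a purification and monotonicity of $P$ under the partial trace, whereas the paper asserts it directly; this is a welcome extra level of detail, not a different argument.
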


\begin{proof}
 The generalized fidelity between the two states can be bounded using
 $\trace{\Pi\rho} \leq \trace{\rho}$. We have
\begin{align*}
  \bar{F}(\rho, \Pi\rho\Pi) &\geq \trace{\Pi \rho} + 1 - \tr\,\rho
  = 1 - \trace{\Pi^\perp\rho} \, .
\end{align*}
The desired bound on the purified distance follows from its
definition.
\end{proof}

We also need a H\"older inequality for linear operators and unitarily
invariant norms (see~\cite{bhatia97} for a proof).  Here, we state a
version for three operators and the trace norm:
\begin{lemma}
\label{lemma:hoelder}
Let $A$, $B$ and $C$ be linear operators and $r, s, t > 0$ such that
$\frac{1}{r} + \frac{1}{s} + \frac{1}{t} = 1$, then
\begin{align*}
  \norm{ABC}{1} \leq \norm{\abs{A}^r}{1}^{\ \frac{1}{r}}
  \norm{\abs{B}^s}{1}^{\ \frac{1}{s}} \norm{\abs{C}^t}{1}^{\
    \frac{1}{t}}  . 
\end{align*}
\end{lemma}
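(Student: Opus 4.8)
\emph{Proof proposal.} The plan is to recognise this as the three-factor H\"older inequality for Schatten norms and to obtain it by iterating the two-factor case. Write $\norm{A}{p} := (\tr\,\abs{A}^p)^{1/p}$ for the Schatten $p$-norm. Since $\abs{A}^p \geq 0$ we have $\norm{\abs{A}^p}{1}^{1/p} = (\tr\,\abs{A}^p)^{1/p} = \norm{A}{p}$, and similarly for $B$ and $C$, so the claim is equivalent to $\norm{ABC}{1} \leq \norm{A}{r}\,\norm{B}{s}\,\norm{C}{t}$ under $\tfrac1r+\tfrac1s+\tfrac1t=1$. Observe also that positive reciprocals summing to $1$ force each of $r,s,t$ to exceed $1$, so every norm appearing below is a genuine norm.

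First I would establish the two-factor statement $\norm{XY}{1} \leq \norm{X}{p}\norm{Y}{q}$ for $\tfrac1p+\tfrac1q=1$. Let $\sigma_1(\cdot)\geq\sigma_2(\cdot)\geq\cdots$ be the singular values in non-increasing order. By Horn's inequality the singular values of a product are weakly log-majorised by the termwise products, $\prod_{i\leq k}\sigma_i(XY)\leq\prod_{i\leq k}\sigma_i(X)\sigma_i(Y)$ for all $k$; since weak log-majorisation implies weak majorisation, $\sum_{i\leq k}\sigma_i(XY)\leq\sum_{i\leq k}\sigma_i(X)\sigma_i(Y)$, and in particular $\norm{XY}{1}=\sum_i\sigma_i(XY)\leq\sum_i\sigma_i(X)\sigma_i(Y)$. (For this $p=1$ case one may alternatively skip Horn and use only von Neumann's trace inequality together with $\norm{Z}{1}=\max_{U \text{ unitary}}\abs{\tr(UZ)}$.) The classical H\"older inequality for non-negative sequences then gives $\sum_i\sigma_i(X)\sigma_i(Y)\leq\big(\sum_i\sigma_i(X)^p\big)^{1/p}\big(\sum_i\sigma_i(Y)^q\big)^{1/q}=\norm{X}{p}\norm{Y}{q}$.

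Next I would iterate. Pick $r'$ with $\tfrac1r+\tfrac1{r'}=1$, so that $\tfrac1{r'}=\tfrac1s+\tfrac1t$. Applying the two-factor bound with exponents $(r,r')$ to the splitting $A\cdot(BC)$ yields $\norm{ABC}{1}\leq\norm{A}{r}\,\norm{BC}{r'}$. For the factor $\norm{BC}{r'}$ I would re-run the singular-value argument at exponent $r'$: raising the log-majorisation $\prod_{i\leq k}\sigma_i(BC)\leq\prod_{i\leq k}\sigma_i(B)\sigma_i(C)$ to the power $r'$ and passing again from weak log-majorisation to weak majorisation gives $\norm{BC}{r'}^{r'}=\sum_i\sigma_i(BC)^{r'}\leq\sum_i\sigma_i(B)^{r'}\sigma_i(C)^{r'}$; then classical H\"older with the conjugate pair $(s/r',t/r')$ --- conjugate because $\tfrac{r'}{s}+\tfrac{r'}{t}=r'\big(\tfrac1s+\tfrac1t\big)=1$, and both exponents exceed $1$ since $s,t>r'$ --- bounds this by $\norm{B}{s}^{r'}\norm{C}{t}^{r'}$. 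Taking $r'$-th roots gives $\norm{BC}{r'}\leq\norm{B}{s}\norm{C}{t}$, and combining the two displays proves the lemma. Equivalently one phrases this as an induction on the number of factors whose inductive step is exactly the computation above.

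I expect the only genuinely non-routine ingredients to be Horn's multiplicative singular-value inequality and the implication ``weak log-majorisation $\Rightarrow$ weak majorisation'', both entirely standard (e.g.\ Bhatia, \emph{Matrix Analysis}, Chs.~II--IV), which is presumably why the statement is quoted rather than proved in the main text. The one place to be mildly careful is the exponent bookkeeping: one must check that all exponents that appear ($r,s,t,r',s/r',t/r'$) are at least $1$ so that the norm form of H\"older applies, and this is automatic from $\tfrac1r+\tfrac1s+\tfrac1t=1$ with positive summands. If instead one simply cites the two-factor Schatten--H\"older inequality as known, the remaining iteration is purely mechanical and presents no obstacle.
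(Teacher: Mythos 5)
Your proof is correct. The paper does not prove this lemma at all---it states it and cites \cite{bhatia97} for a proof---and your argument (reduce to the two-factor Schatten--H\"older inequality via Horn's singular-value inequality together with the passage from weak log-majorisation to weak majorisation, then iterate by splitting off $A$ with the conjugate exponent $r'$ and checking that $s/r'$ and $t/r'$ are again conjugate) is exactly the standard proof found in that reference, with the exponent bookkeeping handled correctly.
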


The following lemma makes clear that the min-entropy smoothing of a
state will not destroy its CQ structure.
\begin{lemma}
\label{lemma:cq-smoothing}
Let $\rhoXB$ be a CQ-state of the form $\rhoXB = \sum_x \proj{x}{x} \kron
\rhoB^{[x]}$. Then, the state $\idx{\rhot}{XB} \in \epsball{\rhoXB}$
that optimizes $\chmineps{X}{B}{\rho} = \chmin{X}{B}{\rhot}$ is of the same form.
\end{lemma}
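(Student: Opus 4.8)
The plan is to \emph{symmetrize} the optimizer by dephasing the classical register $X$. Let $\mathcal{P}$ denote the completely positive trace-preserving map on operators over $\hi{XB}$ given by $\mathcal{P}(\tau) := \sum_x (\proji{x}{x}{X} \kron \idB)\, \tau\, (\proji{x}{x}{X} \kron \idB)$, i.e.\ the pinching in the basis $\{\ket{x}\}$ on the $X$-system. Two elementary facts drive the argument: since $\rhoXB$ is already a CQ-state we have $\mathcal{P}(\rhoXB) = \rhoXB$, and for \emph{any} input $\tau$ the output $\mathcal{P}(\tau)$ is a CQ-state; moreover $\mathcal{P}$ acts as the identity on every operator that is block-diagonal in $\{\ket{x}\}$ on $X$, so in particular $\mathcal{P}(\idX \kron \sigmaB) = \idX \kron \sigmaB$ for all $\sigmaB$.

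First I would check that $\mathcal{P}$ maps the $\eps$-ball $\epsball{\rhoXB}$ into itself. For $\rhot \in \epsball{\rhoXB}$, monotonicity of the purified distance under trace non-increasing CPMs~\eqref{eqn:purified-monotone}, together with $\mathcal{P}(\rhoXB) = \rhoXB$, gives
\begin{align*}
  \dpure{\rhoXB}{\mathcal{P}(\rhot)} = \Dpure{\mathcal{P}(\rhoXB)}{\mathcal{P}(\rhot)} \leq \dpure{\rhoXB}{\rhot} \leq \eps \, ;
\end{align*}
and since $\mathcal{P}$ is trace-preserving, $\mathcal{P}(\rhot)$ has the same trace as $\rhot$, so $\mathcal{P}(\rhot) \in \epsball{\rhoXB}$.

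Next I would show that dephasing does not decrease the min-entropy, i.e.\ $\chmin{X}{B}{\mathcal{P}(\rhot)} \geq \chmin{X}{B}{\rhot}$. The most transparent route is the primal (operator-inequality) definition of the min-entropy: if $\rhot \leq 2^{-\lambda}\, \idX \kron \sigmaB$ for some $\sigmaB \in \normstates{\hB}$, then applying the positive map $\mathcal{P}$ to both sides and using $\mathcal{P}(\idX \kron \sigmaB) = \idX \kron \sigmaB$ yields $\mathcal{P}(\rhot) \leq 2^{-\lambda}\, \idX \kron \sigmaB$; taking the supremum over admissible $\lambda$ and the maximum over $\sigmaB$ gives the claimed inequality (equivalently, this is the data-processing inequality for the min-entropy under a measurement on the $A$-system). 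To conclude, let $\idx{\rhot}{XB}$ be a state attaining $\chmineps{X}{B}{\rho} = \chmin{X}{B}{\rhot}$; then $\mathcal{P}(\idx{\rhot}{XB})$ lies in $\epsball{\rhoXB}$, is a CQ-state, and satisfies $\chmin{X}{B}{\mathcal{P}(\rhot)} \geq \chmin{X}{B}{\rhot} = \chmineps{X}{B}{\rho}$, hence it is itself an optimizer and it has the desired CQ form. There is no real obstacle in this argument; the only point that needs a moment's care is to work with the primal operator inequality (so that $\mathcal{P}$ can be applied directly to both sides) and to note the trivial but essential fact that pinching in the $\{\ket{x}\}$ basis leaves $\rhoXB$ and all operators of the form $\idX \kron \sigmaB$ unchanged.
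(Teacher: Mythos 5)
Your proof is correct and follows essentially the same route as the paper: the paper also obtains the CQ optimizer by measuring (pinching) the $X$ system in the basis $\{\ket{x}\}$, noting that this does not increase the purified distance to $\rhoXB$ and does not decrease the min-entropy. The only difference is that the paper cites these two facts from~\cite{tomamichel09}, whereas you verify them directly via~\eqref{eqn:purified-monotone} together with $\mathcal{P}(\rhoXB)=\rhoXB$, and via applying the positive map $\mathcal{P}$ to the operator inequality defining $\chmin{X}{B}{\rhot}$.
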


\begin{proof}
Let $\rhotAB$ be any state in $\epsball{\rhoXB}$. We
can establish a CQ-state $\idx{\rhot}{XB}$ by measuring A in the basis
determined by $X$. This operation will not increase the distance
$P(\rhotAB, \rhoXB)$ (cf.~\cite{tomamichel09}, Lemma 7) and not
decrease the min-entropy (cf.~\cite{tomamichel09}, Theorem 19).
Thus, we can conclude that the optimal state is CQ.
\end{proof}

\section{Alternative Entropic Quantities}
\label{app:equivalent}
\label{app:alt}

Here, we discuss two alternative entropic quantities,
$\chminalteps{A}{B}{}$ and $\chmaxalteps{A}{B}{}$ and show that they
are equivalent (up to terms in $\log \eps$) to the smooth min-entropy
and smooth max-entropy, respectively. Some of the technical results of
this appendix will be used to give a bound on the collision entropy in
terms of the smooth min-entropy (cf.~Appendix~\ref{app:coll} and
Lemma~\ref{lemma:coll-bound}).

First, note that conditional entropies can be defined in terms of
relative entropies, as is well-known for the case of the von Neumann
entropy. Let $\rhoAB$ be a bipartite quantum state. Then, the
condtional von Neumann entropy of A given B is defined as
\begin{align}
 \chvn{A}{B}{\rho} :\!&= H(\rhoAB) - H(\rhoB) \nonumber\\
 &= - D(\rhoAB\,\|\,\idA \kron \rhoB) \label{eqn:cond-from-rel-1} \\
 &= - \min_{\sigmaB \in \normstates{\hB}}  D(\rhoAB\,\|\,\idA \kron
 \sigmaB) \label{eqn:cond-from-rel-2} \, ,
\end{align}
where we used Klein's inequality~\cite{klein31,nielsen00} in the last
step. The relative entropy is defined as $D(\rho\,\|\,\tau) :=
\trace{\rho (\log \rho - \log \tau)}$ and $H(\rho) := -\trace{\rho
 \log \rho}$.

We will now define the smooth min-entropy and an alternative to the
smooth entropy as first introduced in~\cite{renner05}.  The definition
of two versions of the min-entropy is parallel to the case of the von
Neumann entropy above; however, the two
identities~\eqref{eqn:cond-from-rel-1} and~\eqref{eqn:cond-from-rel-2}
now lead to different definitions. We follow~\cite{datta08} and first
introduce the \emph{max relative entropy}. For two positive operators
$\rho \in \subnormstates{\h}$ and $\tau \in \posops{\h}$ we define
\begin{align*}
  D_{\textrm{max}}(\rho\,\|\,\tau) := \inf \{ \lambda \in \mathbb{R} : \rho
  \leq 2^{\lambda} \tau \} \, .
\end{align*}
\begin{definition}
  \label{def:min}
  Let $\eps \geq 0$ and $\rhoAB \in \subnormstates{\hAB}$. The
  min-entropy and the \emph{alternative min-entropy} of A
  conditioned on B are given by
  \begin{align*}
    \chmin{A}{B}{\rho} &= \max_{\sigmaB \in \normstates{\hB}}
    - D_{\textrm{max}}(\rhoAB\,\|\,\idA \kron \sigmaB) \quad \textrm{and} \\
    \chminalt{A}{B}{\rho} &:= - D_{\textrm{max}}(\rhoAB\,\|\,\idA \kron \rhoB) \, ,
  \end{align*}
  respectively.
  Furthermore, the smooth min-entropy and the \emph{alternative
    smooth min-entropy} of A conditioned on B are defined as
  \begin{align*}
    \chmineps{A}{B}{\rho} &= \max_{\rhotAB \in
      \epsball{\rhoAB}} \chmin{A}{B}{\rhot} \, \quad
    \textrm{and} \\
    \chminalteps{A}{B}{\rho} &:= \max_{\rhotAB \in
      \epsball{\rhoAB}} \chminalt{A}{B}{\rhot} \, .
  \end{align*}
\end{definition}
The \emph{smooth max-entropies} can be defined as duals of the smooth
min-entropies.
\begin{definition}
  \label{def:max}
  Let $\eps \geq 0$ and $\rhoAB \in \subnormstates{\hAB}$, then
  we define the \emph{smooth max-entropy} and the \emph{alternative
    smooth max-entropy} of A conditioned on B as
  \begin{align*}
    \chmaxeps{A}{B}{\rho} &:= - \chmineps{A}{C}{\rho} \quad
    \textrm{and} \\
    \chmaxalteps{A}{B}{\rho} &:= -
    \chminalteps{A}{C}{\rho} \, ,
  \end{align*}
  where $\rhoABC \in \subnormstates{\hABC}$ is any purification of $\rhoAB$.
\end{definition}
The max-entropies are well-defined since the min-entropies are invariant
under local isometries on the C system (cf.~\cite{tomamichel09} and
Lemma~\ref{lemma:altiso}) and, thus, independent of the
chosen purification.
The non-smooth max-entropies $\chmax{A}{B}{\rho}$ and
$\chmaxalt{A}{B}{\rho}$ are defined as the limit $\eps \to 0$ of the
corresponding smooth quantities.
The alternative max-entropy is discussed in
Appendix~\ref{app:altmax}, where it is shown that (cf.\ also \cite{berta08})
\begin{align}
  \label{eqn:altmax}
  \chmaxalt{A}{B}{\rho} &= \!\!\max_{\sigmaB \in \normstates{\hB}}\!
  \log \Trace{ \Pi_{\rhoAB} (\idA \kron \sigmaB) } \, ,
\end{align}
where $\Pi_{\rhoAB}$ is the projector onto the support of $\rhoAB$.
Furthermore, we find that
\begin{align}
  \label{eqn:altmaxeps}
  \chmaxalteps{A}{B}{\rho} &= \!\! \inf_{\hi{B'} \supseteq
    \hB} \ \min_{\idx{\rhot}{AB'} \in
    \epsball{\idx{\rho}{AB'}}}\! \chmaxalt{A}{B'}{\rhot}\, ,
\end{align}
where the infimum is taken over all embeddings $\idx{\rho}{AB'}$ of
$\rhoAB$ into $\hA \kron \hi{B'}$. In fact, it is sufficient to consider an
embedding into a space of size $\dim \hi{B'} = \rank\,\{\rhoAB\}
\cdot \dim \hA$.

The first definition of the smooth max-entropy, $\chmaxeps{A}{B}{}$,
is used in~\cite{koenig08,tomamichel08} and is found to have many
interesting properties, e.g.\ it satisfies a data-processing
inequality~\cite{tomamichel09}.  The alternative definition,
$\chmaxalteps{A}{B}{}$, was first introduced in~\cite{renner05} and is
used to quantitatively characterize various information theoretic
tasks (cf.\ e.g.~\cite{datta08,mosonyidatta08,datta09}). Here, we find
that the two smooth min-entropies and the two smooth max-entropies are
pairwise equivalent up to terms in $\log \eps$. Namely, the following
lemma holds:
\begin{lemma}
  \label{lemma:equivalence}
  Let $\eps > 0$, $\eps' \geq 0$ and $\rhoAB \in \normstates{\hAB}$, then
  \begin{align*}
    &\chmineeps{\eps'}{A}{B}{\rho} - \log c \leq
    \chminalteeps{\eps+\eps'}{A}{B}{\rho} \leq
    \chmineeps{\eps+\eps'}{A}{B}{\rho} \, ,
  \end{align*}
  where $c = 2/\eps^2 + 1/(1 - \eps')$.
\end{lemma}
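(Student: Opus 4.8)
The plan is to prove the two inequalities separately. The right one, $\chminalteeps{\eps+\eps'}{A}{B}{\rho}\le\chmineeps{\eps+\eps'}{A}{B}{\rho}$, is essentially immediate. I would first note that $\chminalt{A}{B}{}$ is invariant under rescaling of the state (both $\rhotAB$ and its marginal $\rhotB$ scale together, leaving $D_{\textrm{max}}(\rhotAB\,\|\,\idA\kron\rhotB)$ unchanged), and that rescaling a sub-normalized state to unit trace cannot increase its purified distance to the normalized $\rho$ --- this follows by inspecting the formula for $\genfid{\cdot}{\cdot}$ together with Cauchy--Schwarz. Hence the state $\rhotAB\in\eepsball{\eps+\eps'}{\rhoAB}$ attaining $\chminalteeps{\eps+\eps'}{A}{B}{\rho}$ may be assumed normalized; for such a state $\rhotB$ is an admissible choice of $\sigmaB$ in Definition~\ref{def:min}, so $\chminalt{A}{B}{\rhot}\le\chmin{A}{B}{\rhot}$, and maximizing over the ball gives the inequality.

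For the left inequality I would fix $\rhotAB\in\eepsball{\eps'}{\rhoAB}$ and $\sigmaB\in\normstates{\hB}$ attaining $\lambda:=\chmineeps{\eps'}{A}{B}{\rho}=\chmin{A}{B}{\rhot}$, so $\rhotAB\le 2^{-\lambda}\,\idA\kron\sigmaB$, and then discard from the $B$-system the part on which $\sigmaB$ dominates $\rhotB$. Concretely, for a parameter $c_0>0$ let $\PiB$ be the projector onto the span of the eigenvectors of $\sigmaB-c_0\,\rhotB$ with non-positive eigenvalue, set $\PiB^{\perp}:=\idB-\PiB$, and put
\[
 \rhobAB:=(\idA\kron\PiB)\,\rhotAB\,(\idA\kron\PiB)\,,
\]
whose $B$-marginal is $\rhobB=\PiB\,\rhotB\,\PiB$. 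Conjugating the operator inequality for $\rhotAB$ by $\idA\kron\PiB$ gives $\rhobAB\le 2^{-\lambda}\,\idA\kron(\PiB\sigmaB\PiB)$, and the spectral definition of $\PiB$ forces $\PiB\sigmaB\PiB\le c_0\,\PiB\rhotB\PiB=c_0\,\rhobB$; since $\rhobAB$ is automatically supported within $\hA\kron\supp{\rhobB}$, this yields $\chminalt{A}{B}{\rhob}=-D_{\textrm{max}}(\rhobAB\,\|\,\idA\kron\rhobB)\ge\lambda-\log c_0$. On the complementary eigenspace $\PiB^{\perp}\sigmaB\PiB^{\perp}\ge c_0\,\PiB^{\perp}\rhotB\PiB^{\perp}$, so $\trace{\PiB^{\perp}\rhotB}\le\trace{\PiB^{\perp}\sigmaB}/c_0\le(\tr\,\sigmaB)/c_0=1/c_0$; Lemma~\ref{lemma:project-dist}, applied with the projector $\idA\kron\PiB$ on $\hAB$, then gives $\dpure{\rhotAB}{\rhobAB}\le\sqrt{2\,\trace{\PiB^{\perp}\rhotB}}\le\sqrt{2/c_0}$, and by the triangle inequality $\dpure{\rhoAB}{\rhobAB}\le\eps'+\sqrt{2/c_0}$. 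Choosing $c_0$ of order $1/\eps^{2}$ makes this at most $\eps+\eps'$, so $\rhobAB\in\eepsball{\eps+\eps'}{\rhoAB}$ and $\chminalteeps{\eps+\eps'}{A}{B}{\rho}\ge\lambda-\log c_0$; keeping track of the constants (including the fact that the optimal $\rhotAB$ need not be normalized but does satisfy $\tr\,\rhotAB\ge 1-{\eps'}^{2}$, which one must carry through the estimates when working with $\rhotAB$ rather than $\rhoAB$) gives the admissible value $c=2/\eps^{2}+1/(1-\eps')$ stated in the lemma.

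Once the projector is identified the mechanics are routine, so the main obstacle is exactly the construction of $\PiB$ and the accompanying bookkeeping. The non-commutativity of $\sigmaB$ and $\rhotB$ rules out the classical ``throw away the bad values of $b$'' and forces a Jordan-type spectral projector for $\sigmaB-c_0\,\rhotB$; the two one-sided operator inequalities it produces (for $\PiB$ and for $\PiB^{\perp}$) are precisely what the argument needs, and Lemma~\ref{lemma:project-dist} converts the discarded weight into a purified-distance bound with no further work. The genuinely delicate point is to verify that it is the \emph{own} marginal $\rhobB$ of the truncated state, and not $\sigmaB$, that controls $\rhobAB$ in the $D_{\textrm{max}}$ ordering, and to track the sub-normalization of $\rhotAB$ accurately enough to reach the stated constant rather than merely $c=O(1/\eps^{2})$.
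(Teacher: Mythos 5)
Your proof is correct, and the key step is genuinely different from the paper's. For the easy direction you normalize the optimizer and use that $\chminalt{A}{B}{\rhot}\le\chmin{A}{B}{\rhot}$; the paper gets the same conclusion directly for sub-normalized states (its Lemma~\ref{lemma:min-bound-2}), so that part is essentially equivalent. For the main direction the paper passes to a purification $\rhoABC$, truncates the large eigenvalues of $\GammaB=\rhoB^{-\nicefrac{1}{2}}\sigmaB\rhoB^{-\nicefrac{1}{2}}$ via a \emph{dual} projector acting on $AC$ (keeping the state pure), obtains a bound relative to the \emph{original} marginal $\rhoB$, and then must repair the marginal of the truncated state by adding $\idA/\idx{d}{A}\kron(\rhoB-\rhotB)$; it is exactly this repair step, via $\lambda\ge\tr\,\rhotAB/\idx{d}{A}$, that produces the $1/(1-\eps')$ term in $c$. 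You instead truncate directly on $B$ with the Jordan projector of $\sigmaB-c_0\,\rhotB$, so the truncated state is dominated by $c_0$ times its \emph{own} marginal with no repair needed, and the discarded weight is controlled by $\tr\,\sigmaB/c_0$; this yields the strictly stronger constant $c=2/\eps^2$ (your closing remark that the sub-normalization bookkeeping produces the extra $1/(1-\eps')$ is a misattribution --- your argument never needs it, and a stronger bound of course implies the stated one). Your construction also preserves the CQ structure of the $A$ system for free, which the paper has to restore by a separate measurement argument. What the paper's heavier route buys is the first half of its Lemma~\ref{lemma:min-bound-1} --- a \emph{pure} truncated tripartite state with a bound on $\chminrel{A}{B}{\rhot|\rho}$ relative to the untruncated marginal --- which is reused elsewhere but is not needed for Lemma~\ref{lemma:equivalence} itself.
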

The equivalence of the max-entropies follows by their definition as
duals, i.e.\ we have
\begin{align*}
  \chmaxeeps{\eps'}{A}{B}{\rho} + \log c \geq
  \chmaxalteeps{\eps+\eps'}{A}{B}{\rho} \geq 
  \chmaxeeps{\eps+\eps'}{A}{B}{\rho} \, .    
\end{align*}
 
For convenience of exposition, we introduce the generalized
conditional min-entropy
\begin{align*}
 \chminrel{A}{B}{\rho|\sigma} := -D_{\textrm{max}}(\rhoAB\,\|\,\idA
 \kron \sigmaB)\, .
\end{align*}
The proof of Lemma~\ref{lemma:equivalence} is based on the following
result.
\begin{lemma}
\label{lemma:min-bound-1}
Let $\eps > 0$ and $\rhoABC \in \subnormstates{\hABC}$ be pure.
Then, there exists a projector $\PiAC$ on $\hi{AC}$ and a
state $\rhotABC = \PiAC\, \rhoABC\, \PiAC$ such that $\rhotABC \in
\epsballpure{\rhoABC}$ and
\begin{align*}
  \chminrel{A}{B}{\rhot|\rho} \geq \chmin{A}{B}{\rho} - \log
  \frac{2}{\eps^2} \, .
\end{align*}
Furthermore, there exists a state $\rhobAB \in \subnormstates{\hAB}$
that satisfies $\idx{\rhob}{AB} \in
\epsball{\rhoAB}$ and
\begin{align*}
  \chminalt{A}{B}{\rhob} \geq \chmin{A}{B}{\rho} - \log \Big(
  \frac{2}{\eps^2} + \frac{1}{\tr\,\rhoAB} \Big) \, .
\end{align*}
\end{lemma}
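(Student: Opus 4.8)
The plan is to start from an optimizer of the (non‑smooth) conditional min‑entropy and then to shift it towards the marginal $\rhoB$ by projecting away a small, carefully chosen part of the $AC$‑system; the mirror lemmas (Lemma~\ref{lemma:mirror}, Corollaries~\ref{cor:mirror-function} and~\ref{cor:mirror-project}) are what turn a truncation on the $B$‑system into a genuine projection on $\hi{AC}$.

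Set $\lambda := \chmin{A}{B}{\rho}$ and fix a normalized $\sigmaB$ with $\rhoAB \leq 2^{-\lambda}\,\idA\kron\sigmaB$, so $\supp{\sigmaB}\supseteq\supp{\rhoB}$. The obstruction to replacing $\sigmaB$ by $\rhoB$ in this inequality is the part of $\hB$ on which $\sigmaB$ dominates $\rhoB$, which I would locate through the positive operator $N := \rhoB^{-\nicefrac{1}{2}}\sigmaB\rhoB^{-\nicefrac{1}{2}}$ on $\supp{\rhoB}$ (generalized inverse), for which $\tr\,\big(\rhoB^{\nicefrac{1}{2}}N\rhoB^{\nicefrac{1}{2}}\big) = \tr\,(\Pi_{\rhoB}\sigmaB)\leq 1$. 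Let $\PiB$ be the spectral projector of $N$ onto eigenvalues $\leq 2/\eps^2$; since $\PiB$ commutes with $N$ one has $\PiB N\PiB \leq (2/\eps^2)\PiB$. Because $\PiB$ is a projector in $\supp{\rhoB}$, Corollary~\ref{cor:mirror-project} -- applied with the $B$‑system in the role of the first and $\hi{AC}$ in the role of the second subsystem of the pure state $\rhoABC$ -- yields a genuine orthogonal projector $\PiAC$ on $\hi{AC}$ dual to $\PiB$. Put $\rhotABC := \PiAC\,\rhoABC\,\PiAC$, which is pure and subnormalized. Unwinding the duality relation for $\PiAC$ together with Corollary~\ref{cor:mirror-function} (to replace $\idx{\rho}{AC}^{-\nicefrac{1}{2}}$ by $\rhoB^{-\nicefrac{1}{2}}$ on the pure state), the corresponding vector equals $(\idA\kron\tilde{\Pi}\kron\idi{C})\ket{\rho}_{ABC}$ with the oblique idempotent $\tilde{\Pi} := \rhoB^{\nicefrac{1}{2}}\PiB\rhoB^{-\nicefrac{1}{2}}$ on $\hB$; hence $\rhotAB = (\idA\kron\tilde{\Pi})\,\rhoAB\,(\idA\kron\tilde{\Pi}^\dagger)$ and $\rhotB = \rhoB^{\nicefrac{1}{2}}\PiB\rhoB^{\nicefrac{1}{2}} \leq \rhoB$.

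The entropy bound follows by conjugating $\rhoAB \leq 2^{-\lambda}\idA\kron\sigmaB$ with $\idA\kron\tilde{\Pi}$ and using $\tilde{\Pi}\sigmaB\tilde{\Pi}^\dagger = \rhoB^{\nicefrac{1}{2}}(\PiB N\PiB)\rhoB^{\nicefrac{1}{2}} \leq (2/\eps^2)\,\rhoB^{\nicefrac{1}{2}}\PiB\rhoB^{\nicefrac{1}{2}} \leq (2/\eps^2)\rhoB$, which gives $\rhotAB \leq (2/\eps^2)\,2^{-\lambda}\,\idA\kron\rhoB$, i.e.\ $\chminrel{A}{B}{\rhot|\rho}\geq\lambda-\log(2/\eps^2)$. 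For the distance I would invoke Lemma~\ref{lemma:project-dist}, by which $\dpure{\rhoABC}{\rhotABC}\leq\sqrt{2\,\tr\,(\PiAC^{\perp}\rhoABC)}$; the vector form above gives $\tr\,(\PiAC\rhoABC)=\tr\,(\PiB\rhoB)$, hence $\tr\,(\PiAC^{\perp}\rhoABC)=\tr\,(\PiB^{\perp}\rhoB)$ with $\PiB^\perp=\Pi_{\rhoB}-\PiB$, and a Markov‑type estimate -- on $\mathrm{range}(\PiB^\perp)$ one has $N>2/\eps^2$, while $\tr\,(\rhoB^{\nicefrac{1}{2}}\PiB^\perp N\PiB^\perp\rhoB^{\nicefrac{1}{2}})\leq\tr\,(\Pi_{\rhoB}\sigmaB)\leq 1$ -- gives $\tr\,(\PiB^\perp\rhoB)\leq\eps^2/2$. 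Thus $\dpure{\rhoABC}{\rhotABC}\leq\eps$, so $\rhotABC\in\epsballpure{\rhoABC}$, proving the first part. For the second part I would take $\rhobAB:=\rhotAB$ (so $\rhobB=\rhotB$, which is nonzero as soon as $\tr\,\rhoAB\geq\eps^2/2$): the same computation in fact yields $\tilde{\Pi}\sigmaB\tilde{\Pi}^\dagger\leq(2/\eps^2)\rhotB$, whence $\rhobAB\leq(2/\eps^2)\,2^{-\lambda}\,\idA\kron\rhobB$, giving $\chminalt{A}{B}{\rhob}\geq\lambda-\log(2/\eps^2)$ and a fortiori the asserted bound, while $\dpure{\rhoAB}{\rhobAB}=\dpure{\rhoAB}{\rhotAB}\leq\dpure{\rhoABC}{\rhotABC}\leq\eps$ by monotonicity of the purified distance under $\ptr{C}$. (If one additionally wants $\rhobB=\rhoB$ exactly -- which is what produces the $1/\tr\,\rhoAB$ in the statement -- one instead sets $\rhobAB:=\rhotAB+\omegaA\kron(\rhoB-\rhotB)$ and uses the elementary inequality $\chmin{A}{B}{\rho}\leq\log(\dim\hA/\tr\,\rhoAB)$, which follows from $\tr\,\rhoAB\leq2^{-\lambda}\dim\hA$.)

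The main obstacle is the transfer of the truncation from $\hB$ to $\hi{AC}$: one must verify that the object dual to the spectral projector $\PiB$ under Corollary~\ref{cor:mirror-project} really is an orthogonal projector $\PiAC$ on $\hi{AC}$, and that $\ptr{C}\,(\PiAC\rhoABC\PiAC)$ is exactly the oblique conjugation $\tilde{\Pi}\,\rhoAB\,\tilde{\Pi}^\dagger$ -- this is what lets the clean operator inequality on $\hB$ survive the passage to a projection on $\hi{AC}$. The accompanying Markov estimate for $\tr\,(\PiB^\perp\rhoB)$ and the bookkeeping that makes the single threshold $2/\eps^2$ serve simultaneously for the entropy bound and for the $\eps$‑ball are the remaining points needing care, but they are routine once the correspondence is in place.
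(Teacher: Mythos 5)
Your proof is correct and rests on the same skeleton as the paper's: fix the optimal pair $(\lambda,\sigmaB)$ for $\chmin{A}{B}{\rho}$, truncate $\GammaB=\rhoB^{-\nicefrac{1}{2}}\sigmaB\rhoB^{-\nicefrac{1}{2}}$ where it is large, transfer the truncation to a genuine projector on $\hi{AC}$ via Corollary~\ref{cor:mirror-project}, and control the damage with Lemma~\ref{lemma:project-dist}. Two sub-steps differ in ways worth recording. First, you define $\PiB$ by a fixed spectral threshold ($\GammaB\leq 2/\eps^2$) and deduce $\trace{\PiB^\perp\rhoB}\leq\eps^2/2$ by a Markov estimate from $\trace{\GammaB\rhoB}=\trace{\Pi_{\rhoB}\sigmaB}\leq1$; the paper proceeds dually, fixing the trace budget first and then bounding $\norm{\PiB\GammaB\PiB}{\infty}\leq 2/\eps^2$ by a variational argument. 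These are equivalent in strength, and yours is the more routine of the two. Second, and more substantively, for the second statement you keep $\rhobAB:=\rhotAB$, exploiting the explicit form $\rhotAB=(\idA\kron T)\rhoAB(\idA\kron T^\dagger)$ with $T=\rhoB^{\nicefrac{1}{2}}\PiB\rhoB^{-\nicefrac{1}{2}}$ to obtain the operator inequality $\rhotAB\leq(2/\eps^2)\,2^{-\chmin{A}{B}{\rho}}\,\idA\kron\rhotB$ directly; this yields $\chminalt{A}{B}{\rhot}\geq\chmin{A}{B}{\rho}-\log(2/\eps^2)$, strictly stronger than the stated bound. The paper instead repairs the marginal by adding $\idA/\idx{d}{A}\kron(\rhoB-\rhotB)$ so that $\rhobB=\rhoB$ (which makes $\rhobAB$ normalized whenever $\rhoAB$ is, a convenience for Lemma~\ref{lemma:coll-bound}), paying the extra $1/\tr\,\rhoAB$ inside the logarithm. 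Since the downstream application secures normalization by rescaling anyway, your shortcut is legitimate; the only caveat, shared with the paper's own construction, is the degenerate case $\tr\,\rhoAB\leq\eps^2/2$, where $\rhotAB$ could vanish and fall outside $\subnormstates{\hAB}$.
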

\begin{proof}
The proof is structured as follows: First, we give a lower bound on
the entropy $\chminrel{A}{B}{\rhot|\rho}$ in terms of
$\chmin{A}{B}{\rho}$ and a projector $\PiB$ that is the dual
projector (cf.\ Corollay~\ref{cor:mirror-project}) of $\PiAC$ with
regard to $\rhoABC$. We then find a lower bound on the purified
distance between $\rhoABC$ and $\rhotABC$ in terms of $\PiB$ and
define $\PiB$ (and, thus, $\PiAC$) such that this distance does not
exceed $\eps$.

Let $\lambda$ and $\sigmaB$ be the pair that optimizes the min-entropy
$\chmin{A}{B}{\rho}$, i.e.\ $\chmin{A}{B}{\rho} =
\chminrel{A}{B}{\rho|\sigma} = -\log \lambda$.
We have $\rhotB \leq \rhoB$ by definition of $\rhotABC$. Hence,
$\chminrel{A}{B}{\rhot|\rho}$ is finite and can be written as
\begin{align*}
  2^{-\chminrel{A}{B}{\rhot|\rho}} &= \normbig{
    \rhoB^{-\nicefrac{1}{2}} \rhotAB \rhoB^{-\nicefrac{1}{2}}
  }{\infty} \, ,
\end{align*}
where $\norm{X}{\infty}$ denotes the maximum eigenvalue of $X$.
We bound this expression using the dual projector $\PiB$ of $\PiAC$
with regard to $\rhoABC$ and the fact that $\rhoAB \leq \lambda
\idA \kron \sigmaB$ by definition of $\lambda$ and $\sigmaB$:
\begin{align*}
  \textrm{rhs.} &= \normbig{\ptr{C} \big( (\PiAC \kron \rhoB^{-\nicefrac{1}{2}})\,
    \rhoABC\, (\PiAC \kron \rhoB^{-\nicefrac{1}{2}}) \big) }{\infty} \\
  &= \normbig{ \PiB\, \rhoB^{-\nicefrac{1}{2}} \rhoAB\,
    \rhoB^{-\nicefrac{1}{2}} \PiB }{\infty} \\
  &\leq \lambda\, \normbig{ \idA \kron \PiB\, \rhoB^{-\nicefrac{1}{2}}
    \sigmaB\, \rhoB^{-\nicefrac{1}{2}} \PiB }{\infty} \\
 &= \lambda \,\norm{\PiB \GammaB \PiB}{\infty} \, ,
\end{align*}
where, in the last step, we introduced the Hermitian operator
$\GammaB := \rhoB^{-\nicefrac{1}{2}} \sigmaB\,
\rhoB^{-\nicefrac{1}{2}}$. Taking the logarithm on both sides leads to
\begin{align}
  \label{eqn:min-bound-1}
  \chminrel{A}{B}{\rhot|\rho} \geq \chmin{A}{B}{\rho} - \log
  \norm{\PiB \GammaB \PiB}{\infty} \, .
\end{align}

We use Lemma~\ref{lemma:project-dist} to bound the distance between
$\rhoABC$ and $\rhotABC$, namely
\begin{align*}
  P(\rhoABC, \rhotABC) \leq \sqrt{2\, \trace{\PiAC^\perp \rhoABC}} =
  \sqrt{2\,\trace{\PiB^\perp \rhoB}} \, ,
\end{align*}
where the last equality can be verified using
Corollary~\ref{cor:mirror-project}.  Clearly, the optimal choice of
$\PiB$ will cut off the largest eigenvalues of $\GammaB$
in~\eqref{eqn:min-bound-1} while keeping the states $\rhoABC$ and
$\rhotABC$ close. We thus define $P_B$ to be the minimum rank
projector onto the smallest eigenvalues of $\GammaB$ such that
$\trace{\PiB \rhoB} \geq \tr\,\rhoB - \eps^2/2$ or,
equivalently, $\trace{\PiB^\perp \rhoB} \leq \eps^2/2$. This
definition immediately implies that $\rhoABC$ and $\rhotABC$ are
$\eps$-close and it remains to find an upper bound on
$\norm{\PiB\GammaB\PiB}{\infty}$.

Let $\PiB'$ be the projector onto the largest remaining eigenvalue in $\PiB
\GammaB \PiB$ and note that $\PiB'$ and $\PiB^\perp$ commute with $\GammaB$. Then,
\begin{align*}
  \norm{\PiB \GammaB \PiB}{\infty} = \trace{\PiB' \GammaB} =
  \min_{\muB} \frac{\trace{\muB (\PiB^\perp + \PiB)
      \GammaB}}{\trace{\muB}} \, ,
\end{align*}
where $\muB$ is minimized over all positive operators in the support
of $\PiB^\perp + \PiB'$. Fixing instead $\muB = (\PiB^\perp + \PiB') \rhoB
(\PiB^\perp + \PiB')$, we find
\begin{align*}
  \norm{\PiB \GammaB \PiB}{\infty} &\leq
  \frac{\trace{\GammaB^{\nicefrac{1}{2}} \rhoB
      \GammaB^{\nicefrac{1}{2}} (\PiB^\perp + \PiB')}}{\trace{(\PiB^\perp +
      \PiB') \rhoB}} \\
  &\leq \frac{\trace{\GammaB^{\nicefrac{1}{2}} \rhoB
      \GammaB^{\nicefrac{1}{2}}}}{\trace{(\PiB^\perp +
      \PiB') \rhoB}} \leq \frac{2}{\eps^2} \, . \\
\end{align*}
In the last step we used that $\trace{\rhoB^{\nicefrac{1}{2}}
  \GammaB \rhoB^{\nicefrac{1}{2}}} = \trace{\sigmaB} = 1$ and that
$\trace{(\PiB^\perp + \PiB')\rhoB} \geq \frac{\eps^2}{2}$ by
definition of $\PiB^\perp$. We have now established
the first statement.

To prove the second statement, we introduce an operator $\DeltaB :=
\rhoB - \rhotB \geq 0$. The state $\rhobAB = \rhotAB + \idA /
\idx{d}{A} \kron \DeltaB$, where $\idx{d}{A} = \dim \hA$, satisfies
$\rhobB = \rhoB$.  We now show that the state $\rhobAB$ is
$\eps$-close to $\rhoAB$. The inequality $\rhotAB \leq \rhobAB$
implies $\norm{\sqrt{\rhotAB} \sqrt{\rhoAB}}{1} \leq
\norm{\sqrt{\rhobAB} \sqrt{\rhoAB}}{1}$ and, thus,
\begin{align*}
  \bar{F}(\rhoAB, \rhobAB) &\geq F(\rhotAB, \rhoAB) + 1 - \tr\,\rhoAB \\
  &\geq F(\rhotABC, \rhoABC) + 1 - \tr\,\rhoAB \\
  &= 1 - \trace{\PiAC^\perp \rhoAC} \geq 1 - \eps^2/2 \, ,
\end{align*}
where we used the monotonicity of the fidelity $F(\rho, \tau) :=
\norm{\sqrt{\rho}\sqrt{\tau}}{1}$ under the partial
trace. Thus, $P(\rhobAB, \rhoAB) \leq \eps$.

We use that $\rhobB = \rhoB$ and $\rhobAB \leq \rhotAB + \idA/\idx{d}{A} \kron
\rhoB$ to find a lower bound on $\chminalt{A}{B}{\rhob} =
\chminrel{A}{B}{\rhob|\rho}$\,:
\begin{align*}
  2^{-\chminalt{A}{B}{\rhob}} &= \normbig{ \rhoB^{-\nicefrac{1}{2}} \rhobAB \, \rhoB^{-\nicefrac{1}{2}} }{\infty} \\
  &\leq \normBig{\rhoB^{-\nicefrac{1}{2}} \rhotAB \, \rhoB^{-\nicefrac{1}{2}} + \frac{1}{\idx{d}{A}} \idAB}{\infty} \\
  &\leq \lambda\, \frac{2}{\eps^2} + \frac{1}{\idx{d}{A}} \, .
\end{align*}
We have $\lambda \geq \tr\,\rhoAB / \idx{d}{A}$ (Lemma 20 in
\cite{tomamichel09}) and, thus,
\begin{align*}
  \chminalt{A}{B}{\rhob} \geq \chmin{A}{B}{\rho} - \log \Big(
  \frac{2}{\eps^2} + \frac{1}{\tr\,\rhoAB} \Big) \, .
\end{align*}
This concludes the proof of the second statement.
\end{proof}

Furthermore, the alternative smooth min-entropy is a lower bound on
the smooth min-entropy by definition.
\begin{lemma}
\label{lemma:min-bound-2}
Let $\rhoAB \in \subnormstates{\hAB}$, then
\begin{align*}
  \chminalt{A}{B}{\rho} \leq \chmin{A}{B}{\rho} - \log
  \frac{1}{\tr\,\rhoAB} \, .
\end{align*}
\end{lemma}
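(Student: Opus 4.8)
The plan is to produce the bound directly from the definitions by plugging a single well-chosen state into the variational formula for the min-entropy. Recall that
\[
  \chmin{A}{B}{\rho} = \max_{\sigmaB \in \normstates{\hB}} -D_{\textrm{max}}(\rhoAB\,\|\,\idA \kron \sigmaB) \, ,
\]
whereas $\chminalt{A}{B}{\rho} = -D_{\textrm{max}}(\rhoAB\,\|\,\idA \kron \rhoB)$ uses the (possibly sub-normalized) marginal $\rhoB$ in place of $\sigmaB$. Since $\tr\,\rhoB = \tr\,\rhoAB > 0$, the rescaled operator $\sigmaB := \rhoB / \tr\,\rhoAB$ lies in $\normstates{\hB}$ and is therefore an admissible candidate in the maximization.

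The only computational ingredient is the elementary homogeneity of the max relative entropy in its second argument, namely $D_{\textrm{max}}(\rho\,\|\,c\,\tau) = D_{\textrm{max}}(\rho\,\|\,\tau) - \log c$ for any $c > 0$, which is immediate from $D_{\textrm{max}}(\rho\,\|\,\tau) = \inf\{\lambda : \rho \leq 2^\lambda \tau\}$. Applying this with $c = 1/\tr\,\rhoAB$ gives
\[
  D_{\textrm{max}}(\rhoAB\,\|\,\idA \kron \sigmaB) = D_{\textrm{max}}(\rhoAB\,\|\,\idA \kron \rhoB) + \log \tr\,\rhoAB \, .
\]
Since $\sigmaB$ is merely one feasible choice in the maximum, I would then conclude
\[
  \chmin{A}{B}{\rho} \geq -D_{\textrm{max}}(\rhoAB\,\|\,\idA \kron \sigmaB) = \chminalt{A}{B}{\rho} - \log \tr\,\rhoAB \, ,
\]
which is exactly the claim upon writing $-\log \tr\,\rhoAB = \log (1/\tr\,\rhoAB)$ and rearranging.

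I do not expect any real obstacle: the statement is a one-line consequence of the definitions together with the scaling behaviour of $D_{\textrm{max}}$. The only point that warrants a moment's care is checking that rescaling $\rhoB$ to unit trace genuinely lands in $\normstates{\hB}$, which holds because $\rhoAB \in \subnormstates{\hAB}$ has strictly positive trace and hence $\rhoB \neq 0$.
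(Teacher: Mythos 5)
Your proof is correct and is precisely the argument the paper has in mind: the paper states this lemma without proof (remarking only that it holds ``by definition''), and the intended justification is exactly your observation that $\rhoB/\tr\,\rhoAB$ is a feasible candidate in the maximization defining $\chmin{A}{B}{\rho}$, combined with the scaling $D_{\textrm{max}}(\rho\,\|\,c\,\tau) = D_{\textrm{max}}(\rho\,\|\,\tau) - \log c$. No gaps.
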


We are now ready to prove Lemma~\ref{lemma:equivalence}. Namely, we
show that, for $\eps > 0$, $\eps' \geq 0$ and $\rhoAB \in
\subnormstates{\hAB}$, it holds that
\begin{align*}
&\chmineeps{\eps'}{A}{B}{\rho} - \log c \leq
\chminalteeps{\eps+\eps'}{A}{B}{\rho} \leq
\chmineeps{\eps+\eps'}{A}{B}{\rho} \, ,
\end{align*}
where $c = 2/\eps^2 + 1/(\tr\,\rhoAB - \eps')$.
\begin{proof}[Proof of Lemma~\ref{lemma:equivalence}]
Let $\rhotAB \in
\eepsball{\eps'}{\rhoAB}$ be the state that maximizes
$\chmineeps{\eps'}{A}{B}{\rho}$. Clearly, $\tr\,\rhotAB \geq \tr\,\rhoAB -
\eps'$. Moreover, Lemma~\ref{lemma:min-bound-1} and the triangle
inequality of the purified distance imply that there exists a
state $\rhobAB \in \eepsball{\eps+\eps'}{\rhoAB}$ that satisfies
\begin{align*}
  \chminalteeps{\eps+\eps'}{A}{B}{\rho} \geq \chminalt{A}{B}{\rhob} \geq \chmineeps{\eps'}{A}{B}{\rho} - \log c
  \, ,
\end{align*}
which concludes the proof of the first inequality. The second
inequality follows by applying Lemma~\ref{lemma:min-bound-2} to the
state that maximizes $\chminalteeps{\eps+\eps'}{A}{B}{\rho}$.
\end{proof}

\section{Collision Entropy}
\label{app:coll}

In this section, we prove Lemma~\ref{lemma:coll-bound}, which gives a
relation between the collision entropy and the min-entropy. First, we
provide an inequality in terms of relative entropies.

\begin{lemma}
\label{lemma:min-coll-bound}
Let $\rhoAB \in \subnormstates{\hAB}$ and $\sigmaB \in \normstates{\hB}$, then
\begin{align*}
  D_{\textnormal{max}}(\rhoAB\,\|\,\idA \kron \sigmaB) \geq \log
  \idx{\Gamma}{C}(\rhoAB|\sigmaB) - \log \tr\,\rhoAB \, .
\end{align*}
\end{lemma}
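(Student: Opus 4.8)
The plan is to unfold both entropic quantities in terms of the generalized inverse $\sigmaB^{-\nicefrac{1}{2}}$ and to reduce the claim to one operator inequality. Set $\lambda := D_{\textnormal{max}}(\rhoAB\,\|\,\idA \kron \sigmaB)$. If $\lambda = \infty$ there is nothing to show, so assume $\lambda < \infty$, i.e.\ $\rhoAB \leq 2^{\lambda}\, \idA \kron \sigmaB$; together with $\rhoAB \geq 0$ this forces $\supp{\rhoAB} \subseteq \hA \kron \supp{\sigmaB}$, so the generalized inverse acts as a genuine inverse on all operators that appear below. Abbreviate $M := \idA \kron \sigmaB^{-\nicefrac{1}{2}}$ and $Q := M\,\rhoAB\,M$. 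First I would rewrite the collision quantity using cyclicity of the trace: $\idx{\Gamma}{C}(\rhoAB|\sigmaB) = \trace{(\rhoAB\, M)^2} = \trace{\rhoAB\, M\, \rhoAB\, M} = \trace{M\,\rhoAB\,M\,\rhoAB} = \trace{Q\,\rhoAB}$. Hence it suffices to prove $\trace{Q\,\rhoAB} \leq 2^{\lambda}\, \tr\,\rhoAB$.

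The next step is to bound $Q$ in operator norm and then pair it against $\rhoAB$. Conjugating the defining inequality $\rhoAB \leq 2^{\lambda}\, \idA \kron \sigmaB$ by the Hermitian operator $M$ gives $Q = M\,\rhoAB\,M \leq 2^{\lambda}\, M(\idA \kron \sigmaB)M$, and $M(\idA \kron \sigmaB)M = \idA \kron (\sigmaB^{-\nicefrac{1}{2}} \sigmaB\, \sigmaB^{-\nicefrac{1}{2}})$ is a projector, so $Q \leq 2^{\lambda}\, \id$. Finally, since $\rhoAB \geq 0$, the standard fact that $A \leq B$ implies $\trace{A\,C} \leq \trace{B\,C}$ for $C \geq 0$ (apply it with $C = \rhoAB$, writing $\trace{Q\,\rhoAB} = \trace{\rhoAB^{\nicefrac{1}{2}} Q\, \rhoAB^{\nicefrac{1}{2}}}$) yields $\trace{Q\,\rhoAB} \leq 2^{\lambda}\, \trace{\rhoAB^{\nicefrac{1}{2}} \rhoAB^{\nicefrac{1}{2}}} = 2^{\lambda}\, \tr\,\rhoAB$. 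Taking the binary logarithm and rearranging gives $\lambda \geq \log \idx{\Gamma}{C}(\rhoAB|\sigmaB) - \log \tr\,\rhoAB$, which is exactly the claimed bound.

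I do not expect a genuine obstacle here; the only point requiring care is the support and generalized-inverse bookkeeping, namely verifying that $M(\idA \kron \sigmaB)M$ really is a projector (and not something larger) and that $\supp{\rhoAB} \subseteq \hA \kron \supp{\sigmaB}$ so that the cyclic rearrangement of the trace and the conjugation step are legitimate. Both of these follow immediately from the hypothesis $\rhoAB \leq 2^{\lambda}\, \idA \kron \sigmaB$ combined with positivity of $\rhoAB$, and the case $\lambda = \infty$ is handled separately at the outset.
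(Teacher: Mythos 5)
Your proposal is correct and follows essentially the same route as the paper: conjugate the defining inequality $\rhoAB \leq 2^{\lambda}\, \idA \kron \sigmaB$ by $\idA \kron \sigmaB^{-\nicefrac{1}{2}}$ to bound $(\idA \kron \sigmaB^{-\nicefrac{1}{2}})\rhoAB(\idA \kron \sigmaB^{-\nicefrac{1}{2}})$ by $2^{\lambda}\idi{AB}$, then pair against $\rhoAB$ via monotonicity of the trace to get $\idx{\Gamma}{C}(\rhoAB|\sigmaB) \leq 2^{\lambda}\tr\,\rhoAB$. Your extra bookkeeping on supports, the projector identity, and the $\lambda = \infty$ case is sound and merely makes explicit what the paper leaves implicit.
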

\begin{proof}
By definition of the max relative entropy, we have
$\rhoAB \leq 2^{D_{\textnormal{max}}(\rhoAB\|\idA \kron \sigmaB)} \idA \kron \sigmaB$
and, thus,
\begin{align*}
  (\idA \kron \sigmaB^{-\nicefrac{1}{2}}) \rhoAB  (\idA \kron
  \sigmaB^{-\nicefrac{1}{2}}) \leq
  2^{D_{\textnormal{max}}(\rhoAB\|\idA \kron \sigmaB)}\, \idAB \, .
\end{align*}
We use this and the fact that $\trace{\rhoAB X} \leq \trace{\rhoAB Y}$
if $X \leq Y$ to get $$\idx{\Gamma}{C}(A|B)_{\rho|\sigma} \leq
2^{D_{\textnormal{max}}(\rhoAB\,\|\,\idA \kron \sigmaB)}\,
\tr\,\rhoAB\, ,$$ which concludes the proof.
\end{proof}

Using the above result and Lemma~\ref{lemma:min-bound-1} of
Appendix~\ref{app:alt}, we are ready to prove
Lemma~\ref{lemma:coll-bound} of Section~\ref{sec:def}.
\begin{proof}[Proof of Lemma~\ref{lemma:coll-bound}]
  To prove the first statement, we apply
  Lemma~\ref{lemma:min-coll-bound} to the state $\rhoXB$. The
  inequality holds in particular for the state $\sigmaB$ that
  optimizes $\chmin{X}{B}{\rho}$ (cf.~Definition~\ref{def:min}),
  establishing~\eqref{eqn:min-coll-bound}.

  Next, we use Lemma~\ref{lemma:min-bound-1} to define $\rhobXB \in
  \eepsball{\bar{\eps}}{\rhoXB}$. Thus,
  \begin{align*}
    \chminalt{X}{B}{\rhob} \geq \chmin{X}{B}{\rho} - \log \Big(
    \frac{2}{\bar{\eps}^2} + \frac{1}{\tr\,\rhoXB} \Big) \, .
  \end{align*}
  In particular, we can choose $\rhobXB$ normalized and
  CQ.\footnote{To see this, first note that the alternative
    min-entropy, $\chminalt{X}{B}{\rhob}$, is independent of
    $\tr\,\rhobXB$. Moreover, measuring $\rhobXB$ on the X system will
    increase the alternative min-entropy while the distance to
    $\rhoXB$ can only decrease.}  We apply
  Lemma~\ref{lemma:min-coll-bound} to this state to get
  \begin{align*}
    \idx{\Gamma}{C}(\rhobXB | \rhobB) \leq
    2^{-\chminalt{X}{B}{\rhob}} \leq
    2^{-\chmin{X}{B}{\rho} + \log (\frac{2}{\bar{\eps}^2} + 1)} \, ,
  \end{align*}
  which concludes the proof of~\eqref{eqn:minalt-coll-bound}.
\end{proof}

\section{Duality Relation for Alternative Smooth Entropies}
\label{app:altmax}

Here, we find that the alternative smooth min-entropy of A
conditioned on B is invariant under local isometries on the B system.
Since all purifications are equivalent up to isometries on the
purifying system, this allows the definition of the alternative
max-entropy as its dual (see Definition~\ref{def:max}). Furthermore,
the max-entropy of A conditioned on B is invariant under local
isometries on the B system as a direct consequence. Note that the
alternative smooth min- and max-entropies are in general not invariant
under isometries on the A system, i.e.\ they depend on the dimension
of the Hilbert space $\hA$.

\begin{lemma}
\label{lemma:altiso}
Let $\eps \geq 0$ and $\rhoAB \in \subnormstates{\hAB}$. Moreover,
let $U: \hB \to \hD$ be an isometry with $\idx{\tau}{AD} := (\idA
\kron U) \rhoAB (\idA \kron U^\dagger)$. Then,
\begin{align*}
  \chminalteps{A}{B}{\rho} &= \chminalteps{A}{D}{\tau}
  \quad \textrm{and}\\
  \chmaxalteps{A}{B}{\rho} &= \chmaxalteps{A}{D}{\tau}\, .
\end{align*}
\end{lemma}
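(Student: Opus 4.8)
The plan is to prove the min-entropy identity first and then deduce the max-entropy identity from it by duality. Throughout, write $V := \idA \kron U$, so that $V^\dagger V = \idAB$, $V V^\dagger = \Pi$ where $\Pi := \idA \kron (U U^\dagger)$ is the projector onto $\hA \kron U\hB$, and $\tauAD = V \rhoAB V^\dagger$ with $\ptr{A}\,\tauAD = U \rhoB U^\dagger$. Two elementary facts will be used repeatedly. First, conjugation by $V$ preserves the purified distance: $\mu \mapsto V \mu V^\dagger$ is trace-preserving CP and $\nu \mapsto V^\dagger \nu V$ is a trace-non-increasing CP left inverse of it, so applying monotonicity~\eqref{eqn:purified-monotone} in both directions forces equality. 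Second, for every positive $\mu$ on $\hA \kron \hB$ one has $\chminalt{A}{D}{V\mu V^{\dagger}} = \chminalt{A}{B}{\mu}$: conjugation by $V$ (resp.\ $V^\dagger$, using $U^\dagger U = \idB$) turns $\mu \leq 2^{-\lambda}\, \idA \kron \ptr{A}\,\mu$ into $V\mu V^\dagger \leq 2^{-\lambda}\, \idA \kron \ptr{A}(V\mu V^\dagger)$ and back, and reading off $D_{\textrm{max}}$ gives the equality. From these two facts the inequality $\chminalteps{A}{D}{\tau} \geq \chminalteps{A}{B}{\rho}$ is immediate: if $\rhotAB \in \epsball{\rhoAB}$ optimizes the right-hand side, then $\taut := V \rhotAB V^\dagger$ lies in $\epsball{\tauAD}$ by the first fact and has the same alternative min-entropy by the second.

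The reverse inequality $\chminalteps{A}{D}{\tau} \leq \chminalteps{A}{B}{\rho}$ is the step I expect to be the main obstacle. A state $\taut \in \epsball{\tauAD}$ optimizing $\chminalteps{A}{D}{\tau}$ need not be supported on $\hA \kron U\hB$, so it cannot be pulled back along $V$ directly; the fix is to project it first. Replace $\taut$ by $\Pi \taut \Pi$. Since $\Pi \tauAD \Pi = \tauAD$, monotonicity~\eqref{eqn:purified-monotone} under the projection CPM gives $\dpure{\Pi\taut\Pi}{\tauAD} \leq \dpure{\taut}{\tauAD} \leq \eps$, so $\Pi\taut\Pi$ is still in $\epsball{\tauAD}$; and conjugating the defining inequality $\taut \leq 2^{-\lambda}\,\idA \kron \ptr{A}\,\taut$ by $\Pi$ shows $\Pi\taut\Pi \leq 2^{-\lambda}\,\idA \kron \ptr{A}(\Pi\taut\Pi)$, so projecting does not decrease the alternative min-entropy. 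Now $\Pi\taut\Pi$ is supported on $\hA \kron U\hB$, hence equals $V \rhotAB V^\dagger$ for $\rhotAB := V^\dagger (\Pi\taut\Pi) V$; by the first fact $\rhotAB \in \epsball{\rhoAB}$, and by the second $\chminalt{A}{B}{\rhot}$ equals the alternative min-entropy of $\Pi\taut\Pi$, which is at least that of $\taut$. Taking $\taut$ optimal yields $\chminalteps{A}{B}{\rho} \geq \chminalteps{A}{D}{\tau}$, establishing the first identity of the lemma.

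For the max-entropy, recall that $\chmaxalteps{A}{B}{\rho} = -\chminalteps{A}{C}{\rho}$ for any purification $\rhoABC$ of $\rhoAB$; this is well defined precisely because the min-entropy identity just proved — applied with $B$ and the purifying system interchanged — shows the right-hand side is independent of the chosen purification, as any two purifications differ by an isometry on the purifying system. Given $\rhoABC$, the state $\idx{\tau}{ADC} := (V \kron \idC)\, \rhoABC\, (V^\dagger \kron \idC)$ is pure, has $\ptr{C}\,\idx{\tau}{ADC} = \tauAD$ (so it is a valid purification of $\tauAD$), and, using $U^\dagger U = \idB$, satisfies $\ptr{D}\,\idx{\tau}{ADC} = \rhoAC$. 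Since the alternative smooth min-entropy of $A$ given $C$ depends only on the reduced state on $\hA \kron \hC$, this gives $\chminalteps{A}{C}{\tau} = \chminalteps{A}{C}{\rho}$, hence $\chmaxalteps{A}{D}{\tau} = \chmaxalteps{A}{B}{\rho}$, completing the proof.
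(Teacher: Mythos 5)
Your proof is correct and follows essentially the same route as the paper: conjugate the optimizing state by $\idA \kron U$, use monotonicity of the purified distance, and observe that the defining $D_{\textrm{max}}$ inequality transforms covariantly, then obtain the max-entropy statement by duality through a purification. The only difference is that where the paper dismisses the reverse inequality with ``the same argument in reverse,'' you spell it out via a projection onto $\idA \kron U U^\dagger$ — a valid (and slightly more careful) treatment, though the direct pullback $\taut \mapsto (\idA \kron U^\dagger)\,\taut\,(\idA \kron U)$ already works because the partial trace over $A$ commutes with conjugation on the $D$ factor.
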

\begin{proof}
Let $\rhotAB \in \epsball{\rhoAB}$ be the state that maximizes the
alternative min-entropy of A conditioned on B and let $\lambda$ be
defined with $\chminalteps{A}{B}{\rho} = -\log \lambda$. Then
$\rhotAB \leq \lambda \idA \kron \rhotB$, which implies
$$ \underbrace{(\idA \kron U) \rhotAB (\idA \kron U^\dagger)}_{=:\,
  \idx{\taut}{AD}} \leq \lambda \idA \kron (U \rhotB U^\dagger) \,
.$$
Hence, $\idx{\taut}{AD} \leq \lambda \idA \kron
\idx{\taut}{D}$. Moreover, $\idx{\taut}{AD} \in
\epsball{\idx{\tau}{AD}}$ due to~\eqref{eqn:purified-monotone},
which implies $\chminalteps{A}{D}{\rho} \geq
\chminalteps{A}{B}{\rho}$.
The same argument in reverse can be
applied to get $\chminalteps{A}{B}{\rho} \geq \chminalteps{A}{D}{\tau}$.

The invariance under isometry of the dual quantity follows by
definition. Namely, let $\rhoABE$ be any purification of $\rhoAB$, then
\begin{align*}
  \chmaxalteps{A}{B}{\rho} &= -\chminalteps{A}{E}{\rho} \\
  &= -\chminalteps{A}{E}{\tau} = \chmaxalteps{A}{D}{\tau} \, ,
\end{align*}
where $\idx{\tau}{ADE} := (\idA \kron U \kron \idE) \rhoABE (\idA
\kron U^\dagger \kron \idE)$ is a purification of $\idx{\tau}{AD}$.
\end{proof}

Next, we derive expression~\eqref{eqn:altmax} for the alternative
non-smooth and smooth max-entropies.
The result for the non-smooth entropy was first shown in~\cite{berta08}
and an alternative proof is provided here for completeness.
\begin{lemma}
Let $\rhoAB \in \subnormstates{\hAB}$, then
\begin{align*}
  \chmaxalt{A}{B}{\rho} = \!\!\max_{\sigmaB \in \normstates{\hB}}\!
  \log \Trace{ \Pi_{\rhoAB} (\idA \kron \sigmaB) }
\end{align*}
\end{lemma}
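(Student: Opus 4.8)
The plan is to unfold $\chmaxalt{A}{B}{\rho}$ via its definition as a dual min-entropy, rewrite the resulting max-relative entropy as an operator norm, and then use the ``mirror'' identity of Corollary~\ref{cor:mirror-function} to transport an inverse square root from the purifying system onto the $AB$ system; after that the claimed trace formula essentially falls out.

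First I would fix an arbitrary purification $\keti{\rho}{ABC}$ of $\rhoAB$. By Definition~\ref{def:max} and the isometry-invariance of the alternative min-entropy on the purifying system (Lemma~\ref{lemma:altiso}), $\chmaxalt{A}{B}{\rho} = -\chminalt{A}{C}{\rho} = D_{\textrm{max}}(\rhoAC \,\|\, \idA \kron \rhoC)$, independently of the chosen purification. Since $\rhoAC$ is supported inside $\hA \kron \supp{\rhoC}$ — if $\ket{v} \perp \supp{\rhoC}$ then $\Trace{\rhoAC(\idA \kron \proj{v}{v})} = \bra{v}\rhoC\ket{v} = 0$, forcing $\rhoAC(\idA \kron \ket{v}) = 0$ — this max-relative entropy is finite and $2^{\chmaxalt{A}{B}{\rho}} = \normbig{(\idA \kron \rhoC^{-\nicefrac{1}{2}})\, \rhoAC\, (\idA \kron \rhoC^{-\nicefrac{1}{2}})}{\infty}$, with the generalized inverse.

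Next I would introduce the pure (sub-normalized) state $\keti{\gamma}{ABC} := (\idAB \kron \rhoC^{-\nicefrac{1}{2}})\, \keti{\rho}{ABC}$ and read off its two relevant marginals. Pulling $\rhoC^{-\nicefrac{1}{2}}$ through the partial trace over $B$ gives $\ptr{B}\bigl(\proji{\gamma}{\gamma}{ABC}\bigr) = (\idA \kron \rhoC^{-\nicefrac{1}{2}})\, \rhoAC\, (\idA \kron \rhoC^{-\nicefrac{1}{2}})$, so $\normbig{\ptr{B}(\proji{\gamma}{\gamma}{ABC})}{\infty} = 2^{\chmaxalt{A}{B}{\rho}}$. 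On the other hand, Corollary~\ref{cor:mirror-function} applied with the bipartition $AB\,|\,C$ and $f(t) = t^{-\nicefrac{1}{2}}$ gives $\keti{\gamma}{ABC} = (\rhoAB^{-\nicefrac{1}{2}} \kron \idC)\,\keti{\rho}{ABC}$, whence $\ptr{C}\bigl(\proji{\gamma}{\gamma}{ABC}\bigr) = \rhoAB^{-\nicefrac{1}{2}} \rhoAB \, \rhoAB^{-\nicefrac{1}{2}} = \Pi_{\rhoAB}$. Thus $\keti{\gamma}{ABC}$ is at once a purification of $\Pi_{\rhoAB}$ and a purification of the operator whose norm we need.

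It then remains to show, for this pure $\keti{\gamma}{ABC}$, that $\normbig{\ptr{B}(\proji{\gamma}{\gamma}{ABC})}{\infty} = \max_{\sigmaB \in \normstates{\hB}} \Trace{\Pi_{\rhoAB}(\idA \kron \sigmaB)}$, which is elementary. Setting $\idx{\gamma}{B} := \ptr{AC}\bigl(\proji{\gamma}{\gamma}{ABC}\bigr)$, the Schmidt decomposition of $\keti{\gamma}{ABC}$ across the cut $B\,|\,AC$ shows that $\ptr{B}(\proji{\gamma}{\gamma}{ABC})$ and $\idx{\gamma}{B}$ share the same non-zero spectrum, so $\normbig{\ptr{B}(\proji{\gamma}{\gamma}{ABC})}{\infty} = \norm{\idx{\gamma}{B}}{\infty}$. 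Tracing out $A$ and $C$ gives $\Trace{\Pi_{\rhoAB}(\idA \kron \sigmaB)} = \Trace{\proji{\gamma}{\gamma}{ABC}(\idA \kron \sigmaB \kron \idC)} = \Trace{\idx{\gamma}{B}\,\sigmaB}$, and maximizing over $\sigmaB \in \normstates{\hB}$ yields $\norm{\idx{\gamma}{B}}{\infty}$. Chaining the equalities and taking logarithms finishes the proof. I expect the only delicate point to be the bookkeeping around supports and generalized inverses, and invoking Corollary~\ref{cor:mirror-function} with the correct grouping so that the inverse square root truly migrates from $C$ onto all of $AB$ and converts $\rhoAB$ into its support projector; everything downstream is routine.
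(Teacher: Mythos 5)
Your proposal is correct and follows essentially the same route as the paper: the state you call $\keti{\gamma}{ABC}$ is exactly the paper's $\tauABC := (\idAB \kron \rhoC^{-\nicefrac{1}{2}})\,\rhoABC\,(\idAB \kron \rhoC^{-\nicefrac{1}{2}})$, whose $AB$-marginal is identified with $\Pi_{\rhoAB}$ via the mirror lemma, and the remaining steps (equality of the nonzero spectra of the $AC$- and $B$-marginals of a pure state, then $\norm{\tauB}{\infty} = \max_{\sigmaB}\trace{\sigmaB\tauB}$) coincide with the paper's. Your extra care about supports and generalized inverses is sound but does not change the argument.
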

\begin{proof}
Let $\rhoABC$ be a purification of
$\rhoAB$. Then, $\tauABC := \big(\idAB \kron
\rhoC^{-\nicefrac{1}{2}} \big) \rhoABC\, \big( \idAB \kron
\rhoC^{-\nicefrac{1}{2}} \big)$ has marginal $\tauAB = \Pi_{\rhoAB}$
due to Lemma~\ref{lemma:mirror}. This allows us to write
\begin{align*}
  2^{\chmaxalt{A}{B}{\rho}} &= 2^{-\chminalt{A}{C}{\rho}} 
  = \norm{\tauAC}{\infty}
  = \norm{\tauB}{\infty} \\
  &= \max_{\sigmaB} \trace{ \sigmaB \tauB } = \max_{\sigmaB} \Trace{
    \Pi_{\rhoAB} (\idA \kron \sigmaB) } ,
\end{align*}
where the maximization is over all $\sigmaB \in
\normstates{\hB}$.
\end{proof}
The alternative smooth max-entropy can be seen as an optimization of
the non-smooth quantity over an $\eps$-ball of states, where the ball
is embedded in a sufficiently large Hilbert space. We show
that~\eqref{eqn:altmaxeps} holds.
\begin{lemma}
Let $\eps \geq 0$ and $\rhoAB \in \subnormstates{\hAB}$, then
\begin{align*}
  \chmaxalteps{A}{B}{\rho} &= \!\! \inf_{\hi{B'} \supseteq
  \hB} \ \min_{\idx{\rhot}{AB'} \in
  \epsball{\idx{\rho}{AB'}}}\! \chmaxalt{A}{B'}{\rhot}\, ,
\end{align*}
where $\idx{\rho}{AB'}$ is the embedding of $\rhoAB$ into
$\hi{AB'}$. Furthermore, the infimum is attained for embeddings with
$\dim \hi{B'} \geq \dim \supp{\rhoAB} \cdot \dim \hA$.
\end{lemma}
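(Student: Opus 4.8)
The plan is to reduce the identity to the \emph{non-smooth} duality $\chmaxalt{A}{B}{\rho} = -\chminalt{A}{C}{\rho}$ (the $\eps=0$ case of Definition~\ref{def:max}), and to pass between $\eps$-balls of mixed states and $\eps$-balls of pure states via Corollary~\ref{cor:pure-ball}, using Lemma~\ref{lemma:altiso} to make the argument indifferent to which purifying system appears. Throughout I fix a purification $\rhoABC$ of $\rhoAB$ with $\dim\hC = \dim\supp{\rhoAB}$, so that $\dim\hAC = \dim\supp{\rhoAB}\cdot\dim\hA$, and I recall that $\chmaxalteps{A}{B}{\rho} = -\chminalteps{A}{C}{\rho}$.

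For the inequality $\chmaxalteps{A}{B}{\rho} \geq \inf_{\hi{B'}\supseteq\hB}\min_{\idx{\rhot}{AB'}\in\epsball{\idx{\rho}{AB'}}}\chmaxalt{A}{B'}{\rhot}$, fix $\hi{B'}\supseteq\hB$ with $\dim\hi{B'}\geq\dim\hAC$ and let $\varphi$ be the image of $\rhoABC$ under the embedding $\hB\hookrightarrow\hi{B'}$; then $\varphi$ is a purification of $\rhoAC$ with purifying system $\hi{B'}$ and $\varphi_{AB'} = \idx{\rho}{AB'}$. Since $\dim\hi{B'}\geq\dim\hAC$, Corollary~\ref{cor:pure-ball} gives $\epsball{\rhoAC} = \{\,\phit_{AC} : \phit\in\epsballpure{\varphi}\,\}$, and since $\chminalt{A}{C}{\cdot}$ depends only on the $AC$-marginal this yields $\chminalteps{A}{C}{\rho} = \max_{\phit\in\epsballpure{\varphi}}\chminalt{A}{C}{\phit}$. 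Applying the non-smooth duality to each pure $\phit$ (which purifies $\phit_{AB'}$ with purifying system $\hC$), I obtain $\chmaxalteps{A}{B}{\rho} = \min_{\phit\in\epsballpure{\varphi}}\chmaxalt{A}{B'}{\phit}$, where $\chmaxalt{A}{B'}{\phit}$ means the entropy of $\phit_{AB'}$. Because $\phit\in\epsballpure{\varphi}$ forces $\phit_{AB'}\in\epsball{\varphi_{AB'}} = \epsball{\idx{\rho}{AB'}}$ by monotonicity of the purified distance under the partial trace~\eqref{eqn:purified-monotone}, the right-hand side is a minimum over a subset of $\epsball{\idx{\rho}{AB'}}$, so $\chmaxalteps{A}{B}{\rho} \geq \min_{\idx{\rhot}{AB'}\in\epsball{\idx{\rho}{AB'}}}\chmaxalt{A}{B'}{\rhot}$, and a fortiori $\chmaxalteps{A}{B}{\rho}$ is at least the infimum in the statement.

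For the reverse inequality, take an arbitrary $\hi{B'}\supseteq\hB$ and $\idx{\rhot}{AB'}\in\epsball{\idx{\rho}{AB'}}$, choose $\hi{C'}$ with $\dim\hi{C'}\geq\dim\hA\cdot\dim\hi{B'}$, and let $\varphi'$ be the image of $\rhoABC$ under the embeddings $\hB\hookrightarrow\hi{B'}$ and $\hC\hookrightarrow\hi{C'}$. Then $\varphi'$ is a purification of $\idx{\rho}{AB'}$ on $\hA\kron\hi{B'}\kron\hi{C'}$ and $\varphi'_{AC'}$ is the image of $\rhoAC$ under the isometry $\hC\hookrightarrow\hi{C'}$, hence $\chminalteps{A}{C'}{\varphi'} = \chminalteps{A}{C}{\rho}$ by Lemma~\ref{lemma:altiso}. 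Since $\dim\hi{C'}\geq\dim(\hA\kron\hi{B'})$, Corollary~\ref{cor:pure-ball} provides $\psi\in\epsballpure{\varphi'}$ with $\psi_{AB'} = \idx{\rhot}{AB'}$; by~\eqref{eqn:purified-monotone} also $\psi_{AC'}\in\epsball{\varphi'_{AC'}}$. Applying the non-smooth duality to the pure state $\psi$ (which purifies $\psi_{AB'} = \idx{\rhot}{AB'}$) gives $\chmaxalt{A}{B'}{\rhot} = -\chminalt{A}{C'}{\psi} \geq -\chminalteps{A}{C'}{\varphi'} = -\chminalteps{A}{C}{\rho} = \chmaxalteps{A}{B}{\rho}$. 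Taking the infimum over $\hi{B'}$ and the minimum over $\idx{\rhot}{AB'}$ yields the matching bound, so the two sides are equal; and since the chain in the previous paragraph shows that for any $\hi{B'}$ with $\dim\hi{B'}\geq\dim\hAC = \dim\supp{\rhoAB}\cdot\dim\hA$ the quantity $\min_{\idx{\rhot}{AB'}\in\epsball{\idx{\rho}{AB'}}}\chmaxalt{A}{B'}{\rhot}$ already equals $\chmaxalteps{A}{B}{\rho}$, the infimum is attained there, as claimed.

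The main obstacle — and the reason the proof splits into two asymmetric halves rather than a single set identity — is that $\{\,\phit_{AB'} : \phit\in\epsballpure{\varphi}\,\}$ is in general only a \emph{proper} subset of $\epsball{\idx{\rho}{AB'}}$: the small purifying system $\hC$ cannot purify every state close to the much larger $\hA\kron\hi{B'}$. Hence the ``$\geq$'' bound is obtained by optimizing over purifications of $\rhoAC$, with $\hi{B'}$ playing the role of the large purifying system, whereas the ``$\leq$'' bound optimizes over purifications of $\idx{\rho}{AB'}$, where the purifying system $\hi{C'}$ must itself be taken large. Keeping straight which tensor factor plays the role of $\h'$ in Corollary~\ref{cor:pure-ball} at each step, and invoking Lemma~\ref{lemma:altiso} to identify all the auxiliary purifying systems with the single canonical $\rhoAC$, is the part of the argument that requires care.
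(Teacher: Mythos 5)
Your proof is correct and follows essentially the same route as the paper's: both directions rest on applying Corollary~\ref{cor:pure-ball} once with $\hi{B'}$ as the purifying system of $\rhoAC$ and once with a large $\hi{C'}$ as the purifying system of $\idx{\rho}{AB'}$, combined with the duality definition $\chmaxalt{A}{B}{\rho} = -\chminalt{A}{C}{\rho}$ and the subset relation between reduced pure $\eps$-balls and mixed $\eps$-balls. The only difference is presentational: you swap the order of the two halves and make explicit the invocation of Lemma~\ref{lemma:altiso} that the paper uses implicitly when it writes $\chmaxalteps{A}{B}{\rho} = -\chminalteps{A}{C'}{\rho}$ for the enlarged purifying system.
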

\begin{proof}
Let $\rhoABC$ be a purification of $\rhoAB$ on a Hilbert space $\hC$
with $\dim \hC = \rank\,\{\rhoAB\}$. Furthermore, for any
$\hi{B'} \supseteq \hB$, let $\idx{\rho}{AB'C'}$ be the
embedding of $\rhoABC$ into $\hi{AB'C'}$ with $\dim \hi{C'} =
\dim \hi{AB'}$. We use Corollary~\ref{cor:pure-ball} twice to
upper bound
\begin{align*}
  \chmaxalteps{A}{B}{\rho} &=- \chminalteps{A}{C'}{\rho} \\
  &= \min_{\idx{\rhot}{AC'} \in \epsball{\idx{\rho}{AC'}}}
  -\chminalt{A}{C'}{\rhot}\\
  &\leq \min_{\idx{\rhot}{AB'C'} \in \epsballpure{\idx{\rho}{AB'C'}}}
  \chmaxalt{A}{B'}{\rhot}\\
  &= \min_{\idx{\rhot}{AB'} \in \epsball{\idx{\rho}{AB'}}}
  \chmaxalt{A}{B'}{\rhot} \, .
\end{align*}

A lower bound on $\chmaxalteps{A}{B}{\rho}$ follows when we
require that $\dim \hi{B'} \geq \rank\,\{\rhoAB\} \cdot \dim \hA
= \dim \hAC$. Then, $\hi{B'}$ is large enough to
accomodate all purifications of states in $\hAC$. Using
Corollay~\ref{cor:pure-ball} twice, we find
\begin{align*}
  \chmaxalteps{A}{B}{\rho} &= \min_{\rhotAC \in \epsball{\eps}{\rhoAC}}
  -\chminalt{A}{C}{\rhot}\\
  &= \min_{\idx{\rhot}{AB'C} \in
    \epsballpure{\idx{\rho}{AB'C}}} \chmaxalt{A}{B'}{\rhot}\\
  &\geq \min_{\idx{\rhot}{AB'} \in
    \epsball{\idx{\rho}{AB'}}} \chmaxalt{A}{B'}{\rhot} \, .
\end{align*}
The infimum is therefore attained and it is sufficient to consider
embeddings with $\dim \hi{B'} = \dim
\supp{\rhoAB} \cdot \dim \hA$.
\end{proof}


\end{document}